\newcommand\blfootnote[1]{%
  \begingroup
  \renewcommand\thefootnote{}\footnote{#1}%
  \addtocounter{footnote}{-1}%
  \endgroup
}
\newtheorem{claim1}{Claim}[section]
\newtheorem{reduction-rule}{Reduction Rule}[section]
\newcommand{\maxedgecoloring}{\textsc{Maximum Edge Colorable Subgraph}}
\newcommand{\ilp}{\textsc{ILP}}
\newcommand{\type}{{type}}
\newcommand{\redbluedsfull}{\textsc{Red Blue Dominating Set}}
\newcommand{\redblueds}{\textsc{RBDS}}
\newcommand{\dset}{\textsc{Dominating Set}}
\newcommand{\calA}{\mathcal{A}}
\newcommand{\calB}{\mathcal{B}}
\newcommand{\calC}{\mathcal{C}}
\newcommand{\calO}{\ensuremath{{\mathcal O}}}
\newcommand{\OO}{\mathcal{O}}
\newcommand{\Oh}{\mathcal{O}}
\newcommand{\T}{\ensuremath{\mathbb T}}
\newcommand{\vc}{\texttt{vc}}
\newcommand{\mm}{\texttt{mm}}
\newcommand{\true}{\textsf{true}}
\newcommand{\false}{\textsf{false}}
\newcommand{\CONPpoly}{\textsf{coNP/poly}}
\newcommand{\coNPpoly}{\textsf{coNP/poly}}
\newcommand{\yes}{\textsc{Yes}}
\newcommand{\no}{\textsc{No}}
\newtheorem{observation}{Observation}[section]
\newtheorem{reduction rule}{Reduction Rule}[section]
\newtheorem{marking-scheme}{Marking Scheme}[section]
\newcounter{condition}[section]
\newcommand{\defproblemout}[3]{
  \vspace{1mm}
\noindent\fbox{
  \begin{minipage}{0.96\textwidth}
  \begin{tabular*}{\textwidth}{@{\extracolsep{\fill}}lr} #1 \\ \end{tabular*}
  {\bf{Input:}} #2  \\
  {\bf{Output:}} #3
  \end{minipage}
  }
  \vspace{1mm}
}
\newcommand{\asg}{{\sf asg}}
\begin{document}
\title{Parameterized Complexity of \maxedgecoloring}
\titlerunning{Parameterized Complexity of \maxedgecoloring}
% If the paper title is too long for the running head, you can set
% an abbreviated paper title here
%
\author{Akanksha Agrawal\inst{1}
  \and
  Madhumita Kundu\inst{2}
  \and
  Abhishek Sahu\inst{3}
  \and
  Saket	Saurabh\inst{3, 4}	
  \and
  Prafullkumar Tale\inst{5}
}
\authorrunning{Agrawal et al.}
% First names are abbreviated in the running head.
% If there are more than two authors, 'et al.' is used.
%
\institute{Ben Gurion University of the Negev, Israel. \email{agrawal@post.bgu.ac.il}
  \and
  Indian Statistical Institute, Kolkata, India. \email{kundumadhumita.134@gmail.com}
  \and
    The Institute of Mathematical Sciences, HBNI, Chennai, India.\\ 
    \email{\{asahu, saket\}@imsc.res.in}
  \and
  University of Bergen, Bergen, Norway.
  \and
Max Planck Institute for Informatics, Saarland Informatics Campus, Saarbr\"ucken, Germany. \email{prafullkumar.tale@mpi-inf.mpg.de}}
\maketitle              % typeset the header of the contribution

\blfootnote{
\emph{Akanksha Agrawal:} Funded by the PBC Fellowship Program for Outstanding Post-Doctoral Researchers from China and India.

\noindent\emph{Saket Saurabh:} Funded by the European Research Council (ERC) 
under the European Union's Horizon $2020$ research and innovation programme (grant agreement No $819416$), and Swarnajayanti Fellowship (No DST/SJF/MSA01/2017-18).
%\begin{minipage}{0.9\textwidth}
%Research Council (ERC) 
%under the European Union's Horizon $2020$ research and innovation programme (grant agreement No $819416$), and Swarnajayanti Fellowship (No DST/SJF/MSA01/2017-18).
%\end{minipage}
\begin{minipage}{0.2\textwidth}
\includegraphics[scale=0.5]{./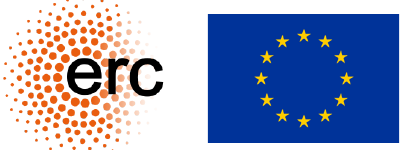}
\end{minipage}

\noindent\emph{Prafullkumar Tale:} Funded by the European Research Council (ERC) under the European Union’s Horizon 2020 research and innovation programme under grant agreement SYSTEMATICGRAPH (No. 725978). Most parts of this work was completed when the author was a Senior Research Fellow at The Institute of Mathematical Sciences, HBNI, Chennai, India.}

\begin{abstract}

% !TEX root = main.tex

A graph $H$ is {\em $p$-edge colorable} if there is a coloring $\psi: E(H) \rightarrow \{1,2,\dots,p\}$, such that for distinct $uv, vw \in E(H)$, we have $\psi(uv) \neq \psi(vw)$.
The {\sc Maximum Edge-Colorable Subgraph} problem takes as input a graph $G$ and integers $l$ and $p$, and the objective is to find a subgraph $H$ of $G$ and a  $p$-edge-coloring of $H$, such that $|E(H)| \geq l$.
We study the above problem from the viewpoint of Parameterized Complexity. We obtain \FPT\ algorithms when parameterized by: $(1)$ the vertex cover number of $G$, by using {\sc Integer Linear Programming}, and $(2)$ $l$, a randomized algorithm via a reduction to \textsc{Rainbow Matching}, and a deterministic algorithm by using color coding, and divide and color. With respect to the parameters $p+k$, where $k$ is one of the following: $(1)$ the solution size, $l$, $(2)$ the vertex cover number of $G$, and $(3)$ $l - {\mm}(G)$, where ${\mm}(G)$ is the size of a maximum matching in $G$; we show that  the (decision version of the) problem admits a kernel with $\mathcal{O}(k \cdot p)$ vertices. Furthermore, we show that there is no kernel of size $\mathcal{O}(k^{1-\epsilon} \cdot f(p))$, for any $\epsilon > 0$ and computable function $f$, unless $\NP \subseteq \CONPpoly$. 

\keywords{Edge Coloring  \and Kernelization \and FPT Algorithms \and Kernel Lower Bound.}
\end{abstract}

% !TEX root = main.tex
\section{Introduction}
For a graph $G$, two (distinct) edges in $E(G)$ are \emph{adjacent} if they share an end-point. A $p$-edge coloring of $G$ is a function $\psi : E(G) \rightarrow \{1, 2, \dots, p\}$ such that adjacent edges receive different colors. One of the basic combinatorial optimization problems \textsc{Edge Coloring}, where for the given graph $G$ and an integer $p$, the objective is to find a $p$-edge coloring of $G$. \textsc{Edge Coloring} is a  very well studied problem in Graph Theory and Algorithm Design and we refer the readers to the survey by Cao et al. \cite{cao2019graph}, the recent article by Gr{\"u}ttemeier et al. \cite{gruttemeier2020maximum}, and references with-in for various known results, conjectures, and practical importance of this problem. 

The smallest integer $p$ for which $G$ is $p$-edge colorable is called its \emph{chromatic index} and is denoted by $\chi^{\prime}(G)$. The classical theorem of Vizing \cite{vizing1964estimate} states that $\chi^{\prime}(G) \le \Delta(G) + 1$, where $\Delta(G)$ is the maximum degree of a vertex in $G$. (Notice that by the definition of $p$-edge coloring, it follows that we require at least $\Delta(G)$ many colors to edge color $G$.) Holyer showed that deciding whether chromatic index of $G$ is $\Delta(G)$ or $\Delta(G) + 1$ is \NP-Hard even for cubic graphs \cite{holyer1981computational}.
Laven and Galil generalized this result to prove that the similar result holds for $d$-regular graphs, for $d \geq 3$~\cite{leven1983np}. 

\textsc{Edge Coloring} naturally leads to the question of finding the maximum number of edges in a given graph that can be colored with a given number of colors. This problem is called \maxedgecoloring\ which is formally defined below.

\defproblemout{\maxedgecoloring}{A graph $G$ and integers $l, p$}{A subgraph of $G$ with at least $l$ edges and its $p$-edge coloring or correctly conclude that no such subgraph exits.}

Note that the classical polynomial time solvable problem, {\sc Maximum Matching}, is a special case of \maxedgecoloring\ (when $p = 1$). Feige et al.~\cite{feige2002approximating} showed that \maxedgecoloring\ is \NP-hard even for $p = 2$. In the same paper, the authors presented a constant factor approximation algorithm for the problem and proved that for every fixed $p \geq 2$, there is $\epsilon > 0$, for which it is \NP-hard to obtain a $(1-\epsilon)$-approximation algorithm. Sinnamon presented a randomized algorithm for the problem \cite{sinnamon2019randomized}.
To the best of knowledge, Aloisioa and Mkrtchyan were the first to study this problem from the viewpoint of Parameterized Complexity \cite{aloisio2019fixed} (see Section~\ref{sec:prelims} for definitions related to Parameterized Complexity). Aloisioa and Mkrtchyana proved that when $p=2$, the problem is fixed-parameter tractable, with respect to various structural graph parameters like path-width, curving-width, and the dimension of cycle space. 
Gr{\"u}ttemeier et al.~\cite{gruttemeier2020maximum}, very recently, obtained kernels, when the parameter is $p+k$, where $k$ is one of the following: i) the number of edges that needs to be deleted from $G$, to obtain a graph with maximum degree at most $p - 1$,\footnote{Recall that any graph with maximum degree at most $p-1$, is $p$-edge colorable~\cite{vizing1964estimate}, and thus, this number is a measure of ``distance-from-triviality''.}, and ii) the deletion set size to a graph whose connected components have at most $p$ vertices. 
Galby et al.~\cite{galby2019parameterized} proved that \textsc{Edge Coloring} is fixed-parameter tractable when parameterized by the number of colors and the number of vertices having the maximum degree.

\vspace{0.1cm}
\noindent \textbf{Our Contributions:} 
Firstly, we consider \maxedgecoloring, parameterized by the vertex cover number, and we prove the following theorem.
\begin{theorem} \label{thm:fpt-vc}
\maxedgecoloring, parameterized by the vertex cover number of $G$, is \FPT.
\end{theorem}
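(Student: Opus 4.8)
The plan is to compute a minimum vertex cover $C$ of $G$ (in \FPT\ time in $k$, by standard branching), set $I = V(G) \setminus C$, and then encode the search for a largest $p$-edge-colourable subgraph as an \ilp\ with only $g(k)$ variables for some computable function $g$, so that Lenstra's algorithm for integer linear programming in fixed dimension solves it in \FPT\ time. Since $I$ is independent, every edge of a candidate subgraph $H$ is incident to $C$, and two vertices of $I$ are interchangeable whenever they have the same neighbourhood in $C$; this partitions $I$ into at most $2^k$ \emph{types}, and the number $n_t$ of vertices of each type $t$ (with neighbourhood $N_t \subseteq C$) is all the \ilp\ needs to know about $I$. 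The main difficulty is that the parameter is the vertex cover number alone, so the number of colours $p$ is unbounded; the number of \ilp\ variables must therefore depend only on $k$, while $p$ may appear only as a numeric bound inside the constraints.

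First I would deal with the at most $\binom{k}{2}$ edges inside $C$. As $G[C]$ has maximum degree at most $k-1$, any colourable subgraph of it uses at most $k$ colours, so I would enumerate all $\OO((k+1)^{\binom{k}{2}}) = g_1(k)$ proper edge-colourings of subgraphs of $G[C]$, using a low palette $\{1,\dots,q\}$ with $q \le k$. Fixing one such colouring records, for each $c \in C$, the set $A_c \subseteq \{1,\dots,q\}$ of colours already spent at $c$ by the $C$-$C$ edges. By symmetry of colours this loses no generality, and it reduces the task to optimally attaching the $C$-$I$ edges.

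The crucial idea for the $C$-$I$ edges is to split the palette into the $q \le k$ \emph{low} colours $\{1,\dots,q\}$ and the $p-q$ mutually interchangeable \emph{high} colours. For each type $t$ I would introduce a variable $x_{t,\sigma}$ counting the vertices of type $t$ realizing a \emph{profile} $\sigma$, where $\sigma$ assigns to each $c \in N_t$ one of: ``no edge'', ``edge coloured with a specific low colour in $\{1,\dots,q\}\setminus A_c$'', or ``edge coloured with some high colour'' (a single symbolic option), subject to $\sigma$ being injective on its low colours, using at most $\min(p,|N_t|)$ edges in total, and using at most $p-q$ high-coloured edges. There are at most $(q+2)^k$ profiles per type, so the number of variables is $\OO(2^k (k+2)^k) = g_2(k)$, independent of $p$. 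The constraints are: $\sum_\sigma x_{t,\sigma} \le n_t$ for each type $t$; for each $c \in C$ and each low colour $j \in \{1,\dots,q\}\setminus A_c$, the number of $C$-$I$ edges at $c$ coloured $j$ is at most $1$; and for each $c$ the number $h_c$ of high-coloured $C$-$I$ edges at $c$ is at most $p-q$. The objective maximizes the total number of edges of $H$ and is compared against $l$; note that $p$ enters only through the bounds $h_c \le p-q$ and the per-profile edge counts.

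Finally I would argue correctness using K\"onig's edge-colouring theorem: the high-coloured $C$-$I$ edges form a bipartite graph between $C$ and $I$, whose chromatic index equals its maximum degree, so the constraints $h_c \le p-q$ together with the per-profile bound on high edges at each $I$-vertex guarantee that this part admits a proper colouring with the $p-q$ high colours; combined with the injective assignment of low colours and the disjointness of the two palettes, this yields a proper $p$-edge-colouring of $H$, and conversely every solution restricts to one of the guessed $C$-$C$ colourings together with a feasible \ilp\ assignment. The expected main obstacle is precisely this decoupling of the unbounded $p$ from the variable count; once the low/high split plus K\"onig's theorem certify colourability from degree bounds alone, running Lenstra's algorithm over the $g_1(k)$ guesses gives the claimed \FPT\ algorithm, and the actual subgraph and colouring are reconstructed in polynomial time from the optimal \ilp\ solution.
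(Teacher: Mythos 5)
Your overall architecture (minimum vertex cover plus neighbourhood types on the independent side, an \ilp\ whose variable count depends only on $k$, and a split of the palette into few ``specific'' colours and many interchangeable ``high'' colours) is sound and close to the paper's, but one step is genuinely wrong: the claim that, by symmetry of colours, the $C$--$C$ edges may be assumed to use a palette of only $q \le k$ low colours. Vizing's theorem tells you that $H[C]$ \emph{in isolation} can be properly edge-coloured with $k$ colours; it does not tell you that some solution $(H,\phi)$ colours $E(H[C])$ with at most $k$ \emph{distinct} colours, and that stronger statement is false. Concretely, take $G = K_5$, $p = 5$, $l = 10$, so $C$ is any four vertices ($k = \vc(G) = 4$) and $I$ consists of the remaining vertex $x$. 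Since $\chi'(K_5) = 5$ this is a yes-instance, and in every proper $5$-edge-colouring of $K_5$ each colour class is a matching of exactly two edges missing exactly one vertex, with each vertex missed by exactly one class; the class missing $x$ lies inside $H[C] = K_4$ and each of the other four classes contains one $K_4$ edge, so the six edges of $H[C]$ meet all five colour classes. Your algorithm only enumerates colourings of subgraphs of $G[C]$ with at most $k = 4$ colours, and one checks it then rejects: with $q = 3$ every vertex of $C$ has all low colours spent, so the four edges at $x$ would need four distinct high colours while only $p - q = 2$ exist; with $q = 4$ the free low colours at the vertices of $C$ come in two equal pairs (e.g.\ $a,d$ free colour $4$ and $b,c$ free colour $3$), so at least two edges at $x$ need distinct high colours while only $p - q = 1$ exists; and any guess $H[C] \subsetneq K_4$ caps the edge count at $9 < l$. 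So the algorithm answers no on a yes-instance.

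The repair is cheap and is exactly what the paper does: the number of distinct colours appearing on $E(H[C])$ is trivially at most $|E(H[C])| \le \binom{k}{2}$, so after a global relabelling you may assume they are $\{1,\dots,q\}$ with $q \le \binom{k}{2}$ (the paper's bound $p_0 \le |X|^2$). Enumerating colourings over this larger palette is still a function of $k$ alone, and your profile count per type becomes at most $\bigl(\binom{k}{2}+2\bigr)^k$, still only $g(k)$ variables, so Lenstra's algorithm keeps the run time \FPT. With this correction your proof goes through and is essentially the paper's, with one genuine and pleasant difference: where the paper certifies realizability of an \ilp\ solution by an explicit greedy construction that keeps the false-twin classes ``degree-balanced'', you invoke K\"onig's edge-colouring theorem on the bipartite graph of high-coloured $C$--$I$ edges, so that per-vertex bounds of $p-q$ on both sides immediately yield a proper colouring of the high part. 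That substitution is a cleaner way to organize the backward direction and would also simplify the paper's Lemma on reconstructing $(H,\phi)$.
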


We prove the above theorem, by designing an algorithm that, for the given instance, creates instances of \ilp, and the resolves the \ilp\ instance using the known algorithm~(\cite{kannan1987minkowski}, \cite{lenstra1983integer}). 
Intuitively, for the instance $(G, l, p)$, suppose $(H,\phi)$ is the solution that we are seeking for, and let $X$ be a vertex cover of $G$. (We can compute $X$ by the algorithm of Chen et al.~\cite{chen2010improved}.) We ``guess'' $H' = H[X]$ and $\phi' = \phi|_{E(H')}$. Once we have the above guess, we try to find the remaining edges (and their coloring), using \ilp. 

Next, we present two (different) \FPT\ algorithms for \maxedgecoloring, when parameterized by the number of edges in the desired subgraph, $l$. More precisely, we prove following theorem. 
\begin{theorem}%[$\star$\footnote{The proofs of results marked with $\star$ can be found in the appended full version.}]
\label{thm:fpt-edges-colors}
There exists a deterministic algorithm $\calA$ and a randomized algorithm $\calB$ with constant probability of success that solves \maxedgecoloring. 
For a given instance $(G, l, p)$, 
Algorithms $\calA$ and $\calB$ terminate in time $\calO^*(4^{l + o(l)})$ and $\calO^*(2^l)$, respectively.
\end{theorem}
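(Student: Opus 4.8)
The plan is to reduce \maxedgecoloring, parameterized by $l$, to \textsc{Rainbow Matching}, parameterized by the size of the sought matching, and then invoke a known randomized $\calO^*(2^k)$-time algorithm (for $\calB$) and the color coding and divide-and-color machinery (for $\calA$). First I would record two simplifying observations. Since every subgraph of a $p$-edge-colorable graph is itself $p$-edge-colorable, an instance is a \yes-instance iff it admits a $p$-edge-colorable subgraph with \emph{exactly} $l$ edges; and since such a subgraph has at most $l$ edges, at most $\min\{p,l\}$ colors are ever used, so I may assume $p \le l$. The heart of the reduction is a layered auxiliary edge-colored graph $\hat{G}$: for every $v \in V(G)$ and every $c \in [p]$ create a vertex $(v,c)$, and for every edge $uv \in E(G)$ and every $c \in [p]$ add an edge $(u,c)(v,c)$ whose \emph{rainbow color} is the identity of $uv$. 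I claim $\hat{G}$ has a rainbow matching of size $l$ iff $G$ has a $p$-edge-colorable subgraph with $l$ edges. Indeed, a matching of $\hat{G}$ forces the chosen edges within a fixed layer $c$ to be pairwise non-adjacent in $G$ (that is, to form the color class $c$ of a proper coloring), while edges in different layers never share a vertex of $\hat{G}$; the rainbow constraint says that no original edge is selected in two layers, i.e. each edge receives one color. As $\hat{G}$ has $p\,|V(G)| \le l\,|V(G)|$ vertices, it is built in polynomial time.

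For the randomized algorithm $\calB$ I would run the known randomized $\calO^*(2^k)$-time algorithm for \textsc{Rainbow Matching} (obtainable via multilinear monomial detection) on the instance $(\hat{G}, l)$. Correctness and the constant success probability are inherited from that algorithm, and since $\hat{G}$ has polynomial size the total running time is $\calO^*(2^l)$, as required.

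For the deterministic algorithm $\calA$ I would solve the rainbow matching instance (equivalently, search directly for the $l$ solution edges) by the divide-and-color technique. A solution is specified by $l$ rainbow colors and $2l$ vertices; the recursion $2$-colors the ground set so that the sought solution splits evenly, recursively finds the two halves that are simultaneously vertex-disjoint and rainbow-color-disjoint, and glues them. The randomized version runs in $\calO^*(4^l)$, and replacing the random choices by an $(n,l)$-splitter together with a universal-set family derandomizes it at the cost of an $o(l)$ term in the exponent, giving the claimed $\calO^*(4^{l+o(l)})$ deterministic bound; color coding over the rainbow colors is what enforces distinctness deterministically.

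The main obstacle is twofold. First, one must verify the correspondence of the reduction in both directions with care, since it silently encodes the proper-coloring constraint as within-layer vertex-disjointness and the ``each edge used once'' constraint as the rainbow condition. Second, and more substantively, the divide step of $\calA$ must enforce vertex-disjointness and color-distinctness at the same time; it is this double bookkeeping that inflates the base from $2$ to $4$, and the derandomization of the two simultaneous splits is the delicate part responsible for the $o(l)$ overhead in the exponent.
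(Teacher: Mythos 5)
Your algorithm $\calB$ is correct and is exactly the paper's: the layered graph $\hat{G}$ (one copy of $G$ per color, labels equal to the identities of the original edges) is precisely the paper's reduction to \textsc{Rainbow Matching}, and the paper likewise finishes by invoking the known randomized $\calO^*(2^k)$ algorithm for that problem, so this half of the theorem is established by your argument.

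The gap is in algorithm $\calA$. Divide-and-color applied to the rainbow matching instance does not work in either of the two ways your sketch could be read. (a) If the ground set you $2$-color is the edge set of $\hat{G}$, then the partition enforces nothing across the halves: recursively learning that $L$ contains \emph{some} rainbow matching of size $l/2$ and $R$ contains \emph{some} rainbow matching of size $l/2$ does not allow gluing, because those two matchings may share vertices of $\hat{G}$ or share labels. Divide-and-color is only sound when every cross-constraint between the two halves is enforced by the partition itself, and here it is not, so the algorithm would report false positives. (b) If instead you $2$-color both $V(\hat{G})$ and the label set so that gluing becomes sound, then for a fixed solution and a designated half--half split, all $2l$ endpoints \emph{and} all $l$ labels must land on their designated sides, an event of probability $2^{-2l}\cdot 2^{-l}=2^{-3l}$; each level of the recursion then needs about $2^{3l}$ trials, and the recurrence $T(l)\approx 2^{3l}\cdot 2\,T(l/2)$ solves to roughly $2^{6l}=64^{l}$, not $\calO^*(4^{l+o(l)})$. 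Your claim that the ``double bookkeeping'' only inflates the base from $2$ to $4$ is exactly the unsubstantiated step. The paper escapes this dilemma by \emph{not} using the rainbow formulation for $\calA$: it first pads $l$ to a multiple of $p$ and invokes the balanced-coloring lemma of Feige et al.\ so that a solution may be assumed to consist of exactly $p$ edge-disjoint matchings of size $q=l/p$ each; the recursion then $2$-colors the edge set of $G$ itself, hoping that the matchings of the first $\lfloor a/2\rfloor$ colors land in $L$ and the rest in $R$ (probability $2^{-aq}$, hence $2^{aq}\log(4l)$ repetitions, which telescopes to $4^{l+o(l+p)}$). Gluing is sound there because distinct color classes need only be \emph{edge}-disjoint, never vertex-disjoint, and edge-disjointness is enforced for free by the partition $L\uplus R$ of $E(G)$. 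That structural fact has no analogue for a single rainbow matching, so your divide step cannot be repaired without switching back to the edge-disjoint-matchings formulation (and without the balancing step you also lose the even split of solution edges that the running-time analysis requires).
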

We remark that in the above theorem, the Algorithms $\calA$ and $\calB$ use different sets of ideas. 
Algorithm~$\calA$, uses a combination of the technique~\cite{chen2009randomized} of color-coding~\cite{DBLP:reference/algo/AlonYZ08} and divide and color. Algorithm~$\calB$ uses the algorithm to solve \textsc{Rainbow Matching} as a black-box.
%We present the formal definition of \textsc{Rainbow Matching} along with our algorithm in Section~\ref{sub-sec:fpt-edges-rainbow}. 
We note that the improvement in the running time of Algorithm $\calB$ comes at the cost of de-randomization, as we do not know how to de-randomize Algorithm $\calB$.

%Once we have the above guess, we try to find the remaining edges (and their coloring), using \ilp\ as follows. These remaining edges can be partitioned into $p$ many edge-disjoint matchings (Observation~\ref{obs:color-to-matching}).
%We argue that it is sufficient to know \textsf{type}s, which are defined based \emph{only} on $X$, of these matchings to obtain a solution.
%This allows us to construct an instance of \ilp\ with at most the number of different \textsf{type}s many variables.
%We obtain the above theorem using the known result (Proposition~\ref{prop:ilp}) about \ilp.

Next we discuss our kernelization results. We show that (the decision version of) the problem admits a polynomial kernel, when parameteized by $p+k$, where $k$ is one of the following:
$(a)$~the solution size, $l$, $(b)$~the vertex cover number of $G$, and $(c)$~$l - {\mm}(G)$, where ${\mm}(G)$ is the size of a maximum matching in $G$; admits a kernel with $\mathcal{O}(kp)$ vertices. We briefly discuss the choice of our third parameter.
By the definition of edge coloring, each color class is a set of matching edges.
Hence, we can find one such color class, in polynomial time~\cite{micali1980v}, by computing a maximum matching in a given graph.
In above guarantee parameterization theme, instead of parameterizing, say, by the solution size ($l$ in this case), we look for some lower bound (which is the size of a maximum matching in $G$, for our case) for the solution size, and use a more refined parameter $(l - \mm(G))$. 
We prove the following theorem.
\begin{theorem} \label{thm:kernel} \maxedgecoloring\ admits a kernel with $\calO(kp)$ vertices, for every $k \in \{\ell, \vc(G), l - \mm(G)\}$. 
\end{theorem}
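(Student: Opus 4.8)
The plan is to build a single kernelization ``engine'' driven by a small vertex-cover-type structure, and then feed each of the three parameters into it. The common observation is that any $p$-edge-colorable subgraph $H$ has maximum degree at most $p$, so if $X$ is a vertex cover of $G$ with $I=V(G)\setminus X$ independent, then every edge of $H$ meets $X$, each $v\in X$ carries at most $p$ edges of $H$, and hence $H$ touches at most $|X|+p|X|=\Oh(p|X|)$ vertices and has at most $p|X|$ edges. Thus if we can retain a vertex cover of size $\Oh(k)$ together with ``enough'' of $I$, we land at $\Oh(kp)$ vertices.

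The engine is a marking scheme on $I$. For each $v\in X$ I would mark up to $p$ of its neighbours in $I$, choosing them (via a Hall/expansion-type argument, and simply keeping all neighbours of a $v$ that has fewer than $p$ of them) so that the marked set $D$, of size $\Oh(p|X|)$, preserves every achievable colour-degree profile on $X$; all unmarked vertices of $I$ are deleted. Correctness is an exchange argument: given any solution $H$ with its $p$-edge-coloring, keep $H[X]$ together with its colours, and reroute each $X$--$I$ edge of $H$ to a marked neighbour of its $X$-endpoint, preserving the colour at that endpoint. If each $v$ can reach enough \emph{private} marked neighbours, the rerouted $X$--$I$ edges can be sent to distinct marked vertices (each then receiving one edge), so at every vertex the colour multiset is unchanged and the result is a proper $p$-edge-coloring of a kernel subgraph with $|E(H)|$ edges. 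The converse direction is immediate because the kernel is an induced subgraph of $G$, so any of its solutions lifts back to $G$. This gives equivalence and $\Oh(p|X|)=\Oh(kp)$ vertices.

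It remains to produce, in each case, a vertex cover of size $\Oh(k)$. For $k=\vc(G)$ I take a $2$-approximate minimum vertex cover in polynomial time, so $|X|\le 2\,\vc(G)=\Oh(k)$, and run the engine. For $k=\ell$ (assume $p\ge 1$; the case $p=0$ is trivial) I first compute a maximum matching $M$: if $\mm(G)\ge \ell$ then $M$ already yields a single-colour solution with $\ell$ edges, so I output a trivial \YES\ instance; otherwise $\mm(G)\le \ell-1$ and $X=V(M)$ is a vertex cover with $|X|=2\,\mm(G)\le 2(\ell-1)=\Oh(\ell)$, and the engine gives $\Oh(\ell p)$ vertices. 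The delicate case is $k=\ell-\mm(G)$, where $\vc(G)\le 2\,\mm(G)$ may far exceed $k$. Here the idea is to treat a maximum matching as ``free'' colour-$1$ edges and observe that only $k=\ell-\mm(G)$ edges beyond matching size must be supplied; I would introduce reduction rules that trim matching-dominated, low-excess regions (the prototype being the \emph{isolated-edge} rule: delete an isolated edge and decrease $\ell$ by one, which preserves $\ell-\mm(G)$ and the answer, and more generally trim long pendant chains and induced paths, readjusting $\ell$) and apply them exhaustively until $\mm(G)=\Oh(k)$; then $V(M)$ is a vertex cover of size $\Oh(k)$ and the engine finishes.

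The main obstacle is this third parameter. Two things must be argued carefully: that the trimming rules are \emph{safe} (the reduced instance is a \YES\ instance iff the original is, with the bookkeeping on $\ell$ and on the committed matching edges done correctly), and that after exhaustive application the matching number is bounded by $\Oh(k)$, i.e.\ no large matching-heavy structure can survive. A secondary but real technical point, needed in all three cases, is the Hall/expansion guarantee underlying the marking step, ensuring each cover vertex can reach enough private marked neighbours so that the rerouting of the solution's $X$--$I$ edges never creates a colour conflict; vertices of $I$-degree below $p$ are handled by retaining all their neighbours.
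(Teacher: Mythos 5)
Your handling of the first two parameters (reduce to a vertex cover of size $\Oh(k)$) is fine as far as it goes, but the kernelization engine itself has a real gap, and the third parameter is attacked with a plan that does not work. On the engine: the entire correctness of your kernel rests on the ``private marked neighbours'' property, which you assume rather than establish, and which is not secondary --- it is the whole proof. Arbitrary per-vertex marking fails: with $p=1$, $X=\{x_1,x_2\}$, $N(x_1)=\{w_1,w_2\}$, $N(x_2)=\{w_1\}$, marking one neighbour per cover vertex may mark only $w_1$; then $G$ has a matching of size $2$ but the kernel does not, so the marking must be a globally coordinated, expansion-based object, and constructing it (and proving the rerouting injection exists against \emph{every} possible solution $H$) is precisely the missing work. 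The paper avoids rerouting altogether: when $|\calC|\ge p|X|$ it applies the expansion lemma to obtain $X'\subseteq X$ and $\calC'$ with a $p$-expansion from $X'$ into $\calC'$ and $N(\calC')\subseteq X'$, and then deletes \emph{both} $X'$ and $V(\calC')$, decreasing $l$ by exactly $p|X'|+|E(\calC')|$. Safeness becomes a local counting argument: the deleted part always supports that many colorable edges (disjoint $p$-stars from the expansion plus internal edges), and any solution loses at most that many edges when the deleted part is removed. No system of distinct representatives is ever needed.

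On $k=l-\mm(G)$: your plan is to trim until $\mm(G)=\Oh(k)$, but no collection of local rules of the kind you describe (isolated edges, pendant chains, induced paths) can achieve this. Take an arbitrary graph on a set $X$ of $k$ vertices, attach $m\gg k$ disjoint edges $u_iv_i$, and join each $u_i$ to an arbitrary nonempty subset of $X$. None of your rules applies --- the legs are not pendant paths and there are no isolated edges or long induced paths --- yet $\mm(G)\ge m$, while for the relevant values of $l$ one has $l-\mm(G)\le p\,k$, since any solution has at most $p|X|$ edges incident to $X$ besides the $m$ matching edges. So after exhaustive application of your rules the matching number can still be unbounded in $k$. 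The paper's resolution is to change the \emph{parameter} rather than the matching number: it shows (Observation~\ref{obs:para-hierarchy}) that after computing a maximum matching $M_1$ and then a maximum matching $M_2$ of $G-M_1$, either $|M_1|+|M_2|\ge l$ and one can answer \yes, or $V(M_2)$ is a deg-$1$-modulator (a set whose removal leaves maximum degree one) of size at most $2(l-\mm(G))$. The expansion-lemma kernel is then run relative to this modulator, with the components of $G-X$ (single vertices or single edges) playing the role of your independent set. This one structure subsumes all three cases, since any vertex cover is also a deg-$1$-modulator; you would do well to adopt it rather than treat the third parameter as a separate trimming problem.
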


We complement this kernelization result by proving that the dependency of $k$ on the size of the kernel is optimal up-to a constant factor.
\begin{theorem}%[$\star$]
\label{thm:kernel-lower-bound-solution-size}
For any $k \in \{\ell, \vc(G), l - \mm(G)\}$,  \maxedgecoloring\ does not admit a compression of size $\calO(k^{1-\epsilon} \cdot f(p))$, for any $\epsilon >0$ and computable function $f$, unless $\NP \subseteq \coNP/poly$.
%Here, $k$ is one of the following parameter: 
%$(1)$ the number of edges, $l$, in a desired subgraph, 
%$(2)$ the vertex cover number of input graph $\vc(G)$, or
%$(3)$ the above guarantee parameter $(l - \mm(G))$. 
\end{theorem}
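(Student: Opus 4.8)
The plan is to prove this lower bound through a degree-one cross-composition, which is precisely the tool calibrated to forbid compressions that are sublinear in $k$ and hence to match the $\calO(kp)$-vertex kernel of Theorem~\ref{thm:kernel}. Concretely, I would \textsc{or}-cross-compose an \NP-hard source language into \maxedgecoloring\ so that $t$ source instances, each of size at most $s$, are merged into a single instance $(G,\ell,p)$ that is a yes-instance if and only if at least one source instance is a yes-instance, while the produced parameter grows only linearly in $t$, namely $\ell=\calO(t\cdot\POLY(s))$ (and likewise $\vc(G)$ and $\ell-\mm(G)$), and the number of colors $p=\POLY(s)$ is kept independent of $t$. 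By the cross-composition framework of Bodlaender, Jansen and Kratsch, together with the complementary-witness argument of Dell and van Melkebeek, an \NP-hard source composing in this way with a linearly growing parameter yields that \maxedgecoloring\ admits no compression of bitsize $\calO(k^{1-\epsilon})$ unless $\NP\subseteq\coNP/poly$. The arbitrary factor $f(p)$ is absorbed because $p$ depends only on $s$ and not on $t$: any putative $\calO(k^{1-\epsilon}f(p))$ compression applied to the composed instance has size $\calO\!\left(t^{1-\epsilon}\cdot\POLY(s)\cdot f(\POLY(s))\right)$, and since the number $t$ of composed instances may be taken arbitrarily large relative to the $s$-only quantity $f(\POLY(s))$, this is still sublinear in $t$, which is exactly what the framework rules out.

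For the construction I would take as source the decision version of \maxedgecoloring\ with a fixed number of colors, which is \NP-hard already for two colors by Feige et al.~\cite{feige2002approximating}, under the polynomial equivalence relation that groups instances sharing the same number of vertices, number of edges, target, and number of colors. Given $t$ equivalent instances $x_1,\dots,x_t$, I would build for each $x_i$ a gadget $\hat G_i$ whose maximum $p$-colorable subgraph has exactly $B$ edges when $x_i$ is a no-instance and exactly $B+1$ edges when $x_i$ is a yes-instance, for one uniform value $B=\Theta(p\cdot s)$; the output graph $G$ is the disjoint union $\hat G_1\sqcup\cdots\sqcup\hat G_t$ with target $\ell=tB+1$. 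Since the maximum $p$-colorable subgraph of a disjoint union is the sum of those of the components, $\OPT(G)=tB+q$, where $q$ is the number of yes source instances, so $(G,\ell,p)$ is a yes-instance precisely when at least one $x_i$ is. The parameters then satisfy $\ell=tB+1=\Theta(t\cdot\POLY(s))$ and $\vc(G)\le\sum_i\vc(\hat G_i)=\Theta(t\cdot\POLY(s))$, and—choosing $p$ large enough that $B\gg\mm(G)$—also $\ell-\mm(G)=\Theta(t\cdot\POLY(s))$, so all three parameters grow linearly in $t$, as required for degree one.

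The crux, and the step I expect to be the main obstacle, is the per-instance gadget $\hat G_i$ realizing the \emph{unit additive gap}: its maximum $p$-colorable subgraph must increase by exactly one as $x_i$ flips from no to yes and, crucially, must never exceed $B+1$, since any overshoot would corrupt the sum and destroy the \textsc{or} behaviour. This amounts to embedding the \NP-hard coloring constraint so that a single distinguished ``bonus'' edge is addable on top of a core that is always fully colorable (contributing a fixed $B$) if and only if the core admits a proper $p$-coloring leaving a common free color at the two endpoints of that edge, which in turn encodes the yes/no answer. Controlling both ends simultaneously—forcing no-instances down to exactly $B$ and imposing the hard cap at $B+1$—while keeping each gadget's size, vertex cover, and matching number in $\POLY(s)$, is the delicate part; I would engineer it from the subcubic/cubic edge-coloring hardness of Holyer~\cite{holyer1981computational} and Leven and Galil~\cite{leven1983np}, where deciding $\chi'=\Delta$ versus $\chi'=\Delta+1$ translates into whether the gadget is forced to drop exactly one edge. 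Once this gadget is available, the remaining work—verifying the \textsc{or} equivalence, the three parameter bounds, and invoking the cross-composition theorem—is routine.
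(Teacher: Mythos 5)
Your framework choice is legitimate: a degree-one OR-cross-composition (parameter bounded by $t\cdot\POLY(s)$, with $p$ constant in the composed instance) does rule out $\calO(k^{1-\epsilon}\cdot f(p))$ compressions via Bodlaender--Jansen--Kratsch and Dell--van Melkebeek, and your arithmetic for the disjoint-union composition is right \emph{provided} the per-instance gadget exists. But that gadget is precisely the missing piece, and it cannot be extracted from the sources you cite. Holyer and Leven--Galil show only that distinguishing $\chi'=\Delta$ from $\chi'=\Delta+1$ is \NP-hard; for a no-instance nothing pins the maximum $\Delta$-edge-colorable subgraph at $|E|-1$. The deficiency $|E|-\OPT$ can be arbitrarily large (already a disjoint union of Petersen graphs has deficiency at least the number of copies, and Holyer's reduction does not control the deficiency of unsatisfiable formulas). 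Moreover, you have the critical direction of control backwards: overshoot of a yes-gadget above $B+1$ is \emph{harmless} for the OR (it only helps the completeness direction), whereas undershoot of a no-gadget below $B$ is \emph{fatal} --- with one yes-gadget contributing $B+1$ and a single no-gadget contributing $B-1$, the total is only $tB$, so the composed instance is a no-instance even though a source instance is a yes-instance. Forcing every no-instance of an \NP-hard problem to have optimum \emph{exactly} $B$ is the entire difficulty, and the proposal contains no construction achieving it; as written, the composition's completeness direction fails.

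There is also a secondary flaw in the $f(p)$-absorption argument: the complementary-witness lemma requires the number $t$ of composed instances to be polynomially bounded in $s$, so you cannot take ``$t$ arbitrarily large relative to $f(\POLY(s))$'' for an arbitrary computable $f$. The absorption works only when $p$ is an absolute constant in the output instances (so that $f(p)=\calO(1)$) --- which your construction mostly ensures ($p=2$ or $3$), but which your later step ``choosing $p$ large enough that $B\gg\mm(G)$'' contradicts (that step is unnecessary anyway, since the framework needs only an upper bound on the parameter, and $\ell-\mm(G)\le\ell$). For contrast, the paper takes a route that sidesteps both issues: a direct polynomial-time reduction from \redbluedsfull\ (which, by Jansen--Pieterse, admits no compression of bit size $\calO(n^{2-\epsilon})$) to \maxedgecoloring\ with $p=3$ and $\ell\le\calO(n^2)$. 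Such a reduction need only preserve a decision threshold --- one can delete at most $k$ edges of $G'$ to reach $3$-edge-colorability if and only if there is a red dominating set of size at most $k$ --- rather than pin exact optima of gadgets, and then any $\calO(\ell^{1-\epsilon}\cdot f(3))$ compression would yield an $\calO(n^{2-2\epsilon})$ compression of \redblueds, a contradiction. If you want to salvage the cross-composition route, the real work is to build a source problem and gadget with the no-side optimum exactly pinned, which is a substantial construction, not a routine invocation of known hardness results.
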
 

\iffalse
\noindent \textbf{Organization:} We start by presenting some notations and preliminaries in Section~\ref{sec:prelims}. 
In Section~\ref{sec:kernel}, we prove that the problem admits a kernel with $\calO(|X| \cdot p)$ vertices, where $X$ is a minimum sized deg-$1$-modulator of the input graph.
In Section~\ref{sec:kernel-lower-bound}, we argue that the dependency on $|X|$ in the kernel size is optimum upto a constance factor.
As mentioned before, these results implies similar results for other parameters.
Section~\ref{sec:fpt-max-edges} contains two different \FPT\ algorithms when the parameter is the number of edges in the desired subgraph.
Section~\ref{sec:fpt-vc} is the most technical section of the article.
In this section, we prove that the problem is \FPT\ when parameterized by the vertex cover number of the graph. 
We conclude the paper and mention some open problems in Section~\ref{sec:conclusion}. 
\fi
%!TEX root=harmonious-col.tex
\section{Preliminaries}
\label{sec:prelims}

For a positive integer $n$, we denote set $\{1, 2, \dots, n\}$ by $[n]$.
We work with simple undirected graphs. The vertex set and edge set of a graph  $G$ are denoted as $V(G)$ and $E(G)$, respectively.
An edge between two vertices $u,v \in V(G)$ is denoted by $uv$.
For an edge $uv$, $u$ and $v$ are called its \emph{endpoints}.
If there is an edge $uv$, vertices $u, v$ are said to be {\em adjacent} to one another.
Two edges are said to be \emph{adjacent} if they share an endpoint.
The \emph{neighborhood} of a vertex $v$ is a collection of vertices which are adjacent to $v$ and it is represented as $N_G(v)$. The \emph{degree} of vertex $v$, denoted by $\text{deg}_G(v)$, is the size of its neighbhorhood.
For a graph $G$, $\Delta(G)$ denotes the maximum degree of vertices in $G$. 
The \emph{closed neighborhood} of a vertex $v$, denoted by $N_G[v]$, is the subset $N_G(v) \cup \{v\}$.
When the context of the graph is clear we drop the subscript.
For set $U$, we define $N(U)$ as union of $N(v)$ for all vertices $v$ in $U$.
For two disjoint subsets $V_1,V_2 \subseteq V(G)$, $E(V_1,V_2)$ is set of edges where one endpoint is in $V_1$ and another is in $V_2$. An edge in the set $E(V_1,V_2)$ is said to be {\em going across} $V_1, V_2$. 
For an edge set $E'$, $V(E')$ denotes the collection of endpoints of edges in $E'$.
A graph $H$ is said to be a \emph{subgraph} of $G$ if $V(H) \subseteq V(G)$ and $E(H) \subseteq E(G)$. In other words, any graph obtained from $G$ by deleting vertices and/or edges is called a subgraph of $G$. For a vertex (resp. edge) subset $X\ \subset \ V(G)$ (resp. $\ \subset \ V(G)$), $G - X$ ($G - Y$) denotes the graph obtained from $G$ by deleting all vertices in $X$ (resp. edges in $Y$). Moreover, by $G[X]$, we denote graph $G - (V(G) - X)$.

For a positive integer $p$, a $p$-edge coloring of a graph $G$ is a function $\phi: E(G) \rightarrow  \{1, 2, \dots, p\}$ such that for every distinct $uv$, $wx$ $\in E(G)$ s.t. $\left \{u,v  \right \} \cap  \left \{ w,x  \right \} \neq \emptyset$, we have $\phi(uv) \neq \phi(wx)$.
The least positive integer $p$ for which there exists a $p$-edge coloring of a graph $G$ is called \emph{edge chromatic number} of $G$ and it is denoted by $\chi'(G)$. 

\begin{proposition}[\cite{vizing1964estimate} Vizing]\label{thm:vizing}
For any simple graph $G$, $\Delta(G) \le \chi'(G) \le \Delta(G) + 1$.
\end{proposition}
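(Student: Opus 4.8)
The plan is to prove the two inequalities separately. The lower bound $\Delta(G) \le \chi'(G)$ is immediate: writing $\Delta = \Delta(G)$, let $v$ be a vertex with $\deg_G(v) = \Delta$. The $\Delta$ edges incident to $v$ are pairwise adjacent, so in any proper edge coloring they must receive $\Delta$ pairwise-distinct colors, whence $\chi'(G) \ge \Delta$. All the real work is in the upper bound $\chi'(G) \le \Delta + 1$, which I would establish by induction on $|E(G)|$ using the method of Vizing fans combined with Kempe-chain recoloring.

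For the inductive step, set $k = \Delta + 1$, pick any edge $e_0 = uv_0$, and by the induction hypothesis fix a proper $k$-edge-coloring $c$ of $G - e_0$; the goal is to recolor locally so as to also color $e_0$. Call a color \emph{free} at a vertex $x$ if no edge incident to $x$ carries it; since every vertex has degree at most $\Delta < k$, each vertex has at least one free color. The central object is a \emph{fan} at $u$: a maximal sequence $v_0, v_1, \dots, v_t$ of distinct neighbors of $u$ with $uv_0 = e_0$ uncolored and $c(uv_i)$ free at $v_{i-1}$ for every $i \ge 1$. A fan admits a \emph{rotation}: setting $c(uv_i) := c(uv_{i+1})$ for $i = 0, \dots, j-1$ moves the uncolored edge from $uv_0$ to $uv_j$ while preserving properness, because each transferred color was free at the relevant $v_{i-1}$ and the colors incident to $u$ are merely shifted among its edges.

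With these tools I would argue by cases. Let $\alpha$ be free at $u$ and $\beta$ be free at $v_t$. If some color is free at both $u$ and $v_t$, rotate the whole fan so that $uv_t$ becomes uncolored and assign it that common color, completing the coloring. The hard case is that no color is simultaneously free at $u$ and $v_t$; then $\alpha \ne \beta$, and since $\beta$ is not free at $u$, some edge at $u$ carries $\beta$, and by maximality of the fan its other endpoint is a fan vertex $v_j$ with $c(uv_j) = \beta$ (so $\beta$ is free at $v_{j-1}$). Here I would bring in the Kempe chain: consider the maximal paths whose edges alternate colors $\alpha$ and $\beta$, starting respectively at $v_t$ and at $v_{j-1}$, and swap $\alpha$ and $\beta$ along a suitable one to create a color free at both $u$ and the terminal fan vertex, after which a rotation colors $e_0$.

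The main obstacle is precisely this Kempe-chain step and its case analysis. The key insight is that $u$ lies on a unique $\alpha\beta$-path (having no $\alpha$-edge, it is an endpoint of the path through its single $\beta$-edge); consequently the $\alpha\beta$-path from $v_t$ and the one from $v_{j-1}$ cannot both terminate at $u$, as otherwise they would coincide and force $v_t = v_{j-1}$, a contradiction. Whichever path does not reach $u$ can be swapped to free a common color while leaving the free colors at $u$ untouched, enabling a rotation that colors $e_0$. The delicate bookkeeping is verifying that this swap preserves both properness and the defining free-color relations of the fan up to the vertex being recolored; this is where I expect the care to be needed. Once the single-edge extension is secured in every case, the induction on $|E(G)|$ closes and yields $\chi'(G) \le \Delta(G) + 1$.
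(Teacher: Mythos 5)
The paper itself contains no proof of this statement: it is imported verbatim as a classical result, cited to Vizing's 1964 paper, and used downstream only as a black box (to bound degrees by $p$ in the kernelization and ILP arguments). So there is no internal argument to compare against; your proposal should be judged against the standard proof, and it is in fact that standard proof: the lower bound via the $\Delta(G)$ pairwise-adjacent edges at a maximum-degree vertex, and the upper bound by induction on $|E(G)|$ using maximal Vizing fans, fan rotation, and an $\alpha\beta$ Kempe-chain swap. Your key structural observation is also the right one: $u$ misses $\alpha$ and carries exactly one $\beta$-edge (namely $uv_j$), so $u$ is an endpoint of a \emph{unique} maximal $\alpha\beta$-path, and hence the chains starting at $v_t$ and at $v_{j-1}$ cannot both terminate at $u$ (else they would coincide and force $v_t = v_{j-1}$, impossible since $j \le t$).

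One point in your sketch is looser than it should be, precisely at the ``delicate bookkeeping'' you flag. The fan conditions at indices $i \le j-1$ do survive any $\alpha\beta$-swap automatically: the colors $c(uv_1),\dots,c(uv_{j-1})$ are distinct colors present at $u$, so none equals $\alpha$ (which is free at $u$) or $\beta$ (which is $c(uv_j)$), and a swap only alters the presence of $\alpha$ and $\beta$ at vertices. But ``whichever path does not reach $u$ can be swapped'' is not quite enough: if you swap the chain from $v_t$ and that chain happens to end at $v_{j-1}$, the swap makes $\beta$ present at $v_{j-1}$ and destroys the fan condition at index $j$, so the full rotation to $v_t$ is no longer legal. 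The standard repair is to prioritize the cases: if the chain from $v_{j-1}$ avoids $u$, swap it, rotate only up to $v_{j-1}$, and color $uv_{j-1}$ with $\alpha$; otherwise the chain from $v_{j-1}$ ends at $u$, whence the chain from $v_t$ lies in a different component, its swap leaves $v_{j-1}$ untouched, and the full rotation to $v_t$ followed by coloring $uv_t$ with $\alpha$ is valid. With that case order made explicit, your induction closes and yields $\chi'(G) \le \Delta(G)+1$ exactly as in the classical argument.
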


For a coloring function $\phi$ and for any $i$ in $\{1, 2, \dots, p\}$, the edge subset $\phi^{-1}(i)$ is called the $i^{th}$ \emph{color class} of $\phi$. Notice that by the definition of $p$-edge coloring, every color class is a matching in $G$.
We define a \emph{balanced $p$-edge coloring} of a graph as a $p$-edge coloring in which the cardinality of any two color classes differ by at most one.

\begin{lemma}[\cite{feige2002approximating} Lemma $1.4$] \label{lemma:balanced-edge-coloring} For a graph $G$, let $\phi$ be a $p$-edge coloring of $G$. Then, there exists a balanced $p$-edge coloring of $G$ that can be derived from $\phi$ in polynomial time.
\end{lemma}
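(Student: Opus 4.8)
The plan is to start from the given $p$-edge coloring $\phi$ and repeatedly apply a local recoloring (a Kempe-chain swap on a pair of color classes) that strictly reduces the imbalance, until no two color classes differ in size by more than one. Write $C_i = \phi^{-1}(i)$ for $i \in [p]$; recall that each $C_i$ is a matching. Suppose there are two colors $i,j$ with $|C_i| \ge |C_j| + 2$. First I would examine the subgraph $G_{ij}$ on the edge set $C_i \cup C_j$. Since $C_i$ and $C_j$ are matchings, every vertex of $G_{ij}$ is incident to at most one edge of each of the two colors, so $\Delta(G_{ij}) \le 2$ and each connected component of $G_{ij}$ is a path or a cycle whose edges alternate between colors $i$ and $j$.

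The key structural observation is that such alternating cycles are necessarily even, hence contribute equally to $C_i$ and $C_j$, and that a path contributes an excess of exactly one edge to a class precisely when it has an odd number of edges, the excess going to the color of its two end-edges. Consequently, $|C_i| - |C_j|$ equals the number of odd-length paths whose end-edges are colored $i$ minus the number whose end-edges are colored $j$. Since $|C_i| - |C_j| \ge 2$, at least one odd-length path $P$ with $i$-colored end-edges must exist. I would then interchange the two colors along $P$, recoloring every $i$-edge of $P$ to $j$ and every $j$-edge to $i$. Because $P$ is an entire connected component of $G_{ij}$ and colors are permuted only within it, the result is again a proper $p$-edge coloring of $G$; moreover $|C_i|$ drops by one and $|C_j|$ rises by one, so the gap $|C_i| - |C_j|$ decreases by two while every other color class is left untouched.

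To control the running time I would track the potential $\Phi = \sum_{i \in [p]} |C_i|^2$. A single swap, which moves one edge from a class of size $a$ to a class of size $b$ with $a \ge b+2$, changes $\Phi$ by $(a-1)^2 + (b+1)^2 - a^2 - b^2 = -2(a - b - 1) \le -2$, so $\Phi$ strictly decreases by at least $2$ at every step. As $\Phi$ is a non-negative integer bounded initially by $|E(G)|^2$ (since $\sum_i |C_i| = |E(G)|$), the process halts after at most $|E(G)|^2/2$ swaps. Each swap requires only identifying an unbalanced pair $(i,j)$, forming $G_{ij}$, and locating a single odd-length path component, all of which are clearly polynomial-time operations. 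When the procedure stops, no two color classes differ in size by more than one, which is exactly a balanced $p$-edge coloring, and it has been derived from $\phi$ in polynomial time.

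The part needing the most care is the combination of the structural guarantee and the potential argument: one must verify both that an \emph{improving} swap always exists as long as some pair of classes is imbalanced, and that these swaps cannot cycle and must terminate after polynomially many steps. The fact that a Kempe-chain swap on a single alternating component preserves properness is routine; the crux is that each such swap makes measurable, monotone progress toward balance, which the monovariant $\Phi$ certifies.
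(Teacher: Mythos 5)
Your proof is correct. Note that the paper does not prove this lemma at all; it imports it by citation from Feige et al.\ (Lemma~1.4 of \cite{feige2002approximating}), and the argument given there is essentially the one you reconstructed: decompose the union of two imbalanced color classes into alternating paths and even cycles, swap colors along an odd path whose end-edges carry the larger color, and iterate. Your addition of the explicit potential $\Phi = \sum_i |C_i|^2$ to certify polynomially many swaps is a clean way to make the running-time claim rigorous.
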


\begin{observation}\label{obs:color-to-matching} There exists a subgraph $H$ of $G$ such that $H$ is $p$-edge colorable and $|E(H)| \ge l$ if and only if there exists $p$ many edge disjoint matchings $M_1, M_2, \dots, M_p$ in $G$ such that $|M_1\cup M_2 \cup \cdots \cup M_p| \ge l$.
\end{observation}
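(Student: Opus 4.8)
The plan is to prove both directions of the equivalence directly from the fact, already noted in the preliminaries, that every color class of a $p$-edge coloring is a matching; the two statements are really just two encodings of the same combinatorial object.

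For the forward direction, suppose $H$ is a subgraph of $G$ that admits a $p$-edge coloring $\phi$ with $|E(H)| \ge l$. I would set $M_i := \phi^{-1}(i)$ for each $i \in [p]$. By the definition of $p$-edge coloring, no two adjacent edges receive the same color, so each $M_i$ is a matching in $H$, and hence in $G$. Since $\phi$ assigns exactly one color to every edge, the sets $M_1, \dots, M_p$ are pairwise edge-disjoint and their union is precisely $E(H)$; thus $|M_1 \cup M_2 \cup \cdots \cup M_p| = |E(H)| \ge l$, as required.

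For the reverse direction, suppose $M_1, \dots, M_p$ are pairwise edge-disjoint matchings in $G$ with $|M_1 \cup M_2 \cup \cdots \cup M_p| \ge l$. I would take $H$ to be the subgraph of $G$ with edge set $E(H) = M_1 \cup M_2 \cup \cdots \cup M_p$ and vertex set $V(H) = V(E(H))$, and define $\phi \colon E(H) \rightarrow [p]$ by setting $\phi(e) = i$ for the index $i$ with $e \in M_i$, which is unique by edge-disjointness. The function $\phi$ is then well-defined, and it is a valid $p$-edge coloring: if two adjacent edges $uv$ and $vw$ were assigned the same color $i$, they would both lie in $M_i$, contradicting that the matching $M_i$ contains no two edges sharing an endpoint. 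Hence $H$ is a $p$-edge colorable subgraph with $|E(H)| = |M_1 \cup M_2 \cup \cdots \cup M_p| \ge l$.

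I expect no genuine obstacle here, since the claim is essentially a restatement of the correspondence between a $p$-edge coloring and a partition of the edge set into $p$ matchings. The only points requiring a moment of care are checking that the color classes are pairwise edge-disjoint, which holds because $\phi$ is a function assigning each edge a single color, and that, conversely, the edge-disjointness of the $M_i$ is exactly what makes $\phi$ well-defined in the reverse direction. Had the matchings been permitted to overlap, one would instead define $\phi(e)$ via the smallest index $i$ with $e \in M_i$; but as the hypothesis supplies edge-disjoint matchings, this refinement is not needed.
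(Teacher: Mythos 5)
Your proof is correct and matches exactly the reasoning the paper relies on: the paper states this as an unproven observation, justified only by the preceding remark that every color class of a $p$-edge coloring is a matching, and your two directions (color classes give edge-disjoint matchings; edge-disjoint matchings define a well-defined coloring) are precisely the intended argument. Both directions are handled cleanly, including the well-definedness point that the edge-disjointness hypothesis makes the coloring a function.
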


For a graph $G$, a set of vertices $W$ is called an \emph{independent set} if no two vertices of $W$ are adjacent with each other.
A set $X \subseteq V(G)$ is a {\em vertex cover} of $G$ if $G - S$ is an independent set. The size of a minimum vertex cover of graph is called its \emph{vertex cover number} and it is denoted by $\vc(G)$.
A \emph{matching} of a graph $G$ is a set of edges of $G$ such that every edge
shares no vertex with any other edge of \emph{matching}. The size of maximum matching of a graph $G$ is denoted by $\mm(G)$. It is easy to see that $\mm(G) \le \vc(G) \le 2 \cdot \mm(G)$. 
\begin{definition}[\text{deg}-$1$-modulator] \label{def:deg-1-mod}
For a graph $G$, a set $X \subseteq V(G)$ is a {\em \text{deg}-$1$-modulator} of $G$, if the degree of each vertex in $G - X$  is at most 1.
\end{definition}

\paragraph{Expansion Lemma.}
Let \(t\) be a positive integer and $G$ be a bipartite graph with vertex bipartition $(P, Q)$.
A set of edges $M \subseteq E(G)$ is called a \emph{$t$-expansion of $P$ into $Q$} if (i) every vertex of $P$ is incident with exactly $t$ edges of $M$, and (ii) the number of vertices in \(Q\) which are incident with at least one edge in $M$ is exactly $t|P|$.
We say that \(M\) \emph{saturates} the end-points of its edges. 
Note that the set \(Q\) may contain vertices which are \emph{not} saturated by \(M\). 
We need the following generalization of Hall's Matching Theorem known as \emph{expansion lemmas}:

\begin{lemma}[See, for example, Lemma~$2.18$ in \cite{CFKLMPPS15}] \label{lem:expansion-lemma} Let $t$ be a positive
  integer and $G$ be a bipartite graph with vertex bipartition
  $(P,Q)$ such that $|Q| \geq t |P|$ and there are no isolated
  vertices in $Q$.  Then there exist nonempty vertex sets
  $P' \subseteq P$ and $Q' \subseteq Q$ such that (i) $P'$ has a
  $t$-expansion into $Q'$, and (ii) no vertex in $Q'$ has a
  neighbour outside $P'$. Furthermore two such sets $P'$ and $Q'$ can  be found in time polynomial in the size of $G$.
\end{lemma}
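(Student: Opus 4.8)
The plan is to derive the statement from the classical (generalized) Hall's theorem together with an inductive peeling argument on $|P|$. First I would record the standard duplication trick: from $G$ build an auxiliary bipartite graph $H$ in which every vertex $p \in P$ is replaced by $t$ identical copies $p^{(1)}, \dots, p^{(t)}$, each adjacent to $N_G(p)$. A matching of $H$ that saturates all $t|P|$ copies projects, edge by edge, onto exactly a $t$-expansion of $P$ into $Q$, and conversely every $t$-expansion lifts to such a matching. Hence, applying Hall's theorem to $H$, the graph $G$ admits a $t$-expansion of the \emph{whole} of $P$ into $Q$ if and only if $|N_G(S)| \ge t|S|$ for every $S \subseteq P$ (the worst $T \subseteq V(H)$ in Hall's condition is a full set of copies of some $S$). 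Computing a maximum matching $M$ of $H$ either yields a saturating matching — and thus the expansion — or, via the copies reachable from an $M$-unsaturated copy along $M$-alternating paths, a Hall-violating set $S$ (König's theorem); either outcome is found in polynomial time.

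I would then argue by induction on $|P|$. If the Hall condition $|N_G(S)| \ge t|S|$ holds for \emph{all} $S \subseteq P$, set $P' = P$ and let $Q'$ be the set of $Q$-vertices saturated by the $t$-expansion produced above. Property (i) holds by construction, and property (ii) is automatic: since $G$ is bipartite and $Q' \subseteq Q$, we have $N_G(Q') \subseteq P = P'$. Otherwise, fix a nonempty witness $S \subsetneq P$ with $|N_G(S)| < t|S|$ and recurse on $G_1 := G - (S \cup N_G(S))$, with parts $P_1 = P \setminus S$ and $Q_1 = Q \setminus N_G(S)$.

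The crux is checking that $G_1$ again meets the hypotheses and that the recursive output is valid for $G$. For the size bound, $|Q_1| = |Q| - |N_G(S)| > |Q| - t|S| \ge t|P| - t|S| = t|P_1|$. For isolation, any $q \in Q_1$ lies outside $N_G(S)$, so all of its (nonempty, by hypothesis) neighbours lie in $P \setminus S = P_1$; thus $q$ is non-isolated in $G_1$ and, crucially, $N_G(q) \subseteq P_1$. This last point is what lets property (ii) lift: the sets $P' \subseteq P_1$, $Q' \subseteq Q_1$ returned by the recursion satisfy $N_{G_1}(Q') \subseteq P'$, but since every vertex of $Q' \subseteq Q_1$ already has all of its $G$-neighbours inside $P_1$, we get $N_G(Q') = N_{G_1}(Q') \subseteq P'$. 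The returned $t$-expansion lives inside $G_1 \subseteq G$, hence is a $t$-expansion in $G$ as well.

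Finally I would check well-foundedness: because no vertex of $Q$ is isolated, $N_G(P) = Q$, so $|N_G(P)| = |Q| \ge t|P|$ and $P$ itself never violates Hall's condition; thus any witness $S$ is a proper subset and $P_1 \neq \emptyset$ with $|P_1| < |P|$. The recursion therefore bottoms out (at the latest when $|P_1| = 1$, where the size bound forces Hall's condition to hold), it performs one matching computation and removes at least one vertex of $P$ per step, and so runs in polynomial time. The step I expect to require the most care is the bookkeeping in the recursion — specifically verifying that deleting $N_G(S)$ creates no isolated vertices in $Q_1$ and that it severs precisely the edges that could otherwise carry a neighbour of $Q'$ outside $P'$ — since that is exactly what makes the neighbourhood-closure property (ii) survive the lift from $G_1$ back to $G$.
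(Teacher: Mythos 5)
Your proof is correct, and since the paper states this lemma without proof (citing it directly as Lemma~2.18 of \cite{CFKLMPPS15}), the relevant comparison is with the textbook argument, which yours essentially reproduces: duplicate each vertex of $P$ into $t$ copies to reduce $t$-expansion to bipartite matching, apply Hall/K\"onig's theorem to either saturate all copies or extract a violating set $S$ with $|N_G(S)| < t|S|$, and peel off $S \cup N_G(S)$ and recurse, with the key observation that every surviving $Q$-vertex keeps all of its $G$-neighbours inside $P \setminus S$, which is exactly what lets property~(ii) lift from the subgraph back to $G$. Your bookkeeping is also accurate throughout (the strict inequality $|Q_1| > t|P_1|$, the properness of $S$ via $N_G(P) = Q$ so the recursion is well-founded, and one polynomial matching computation per peeled step, noting $t \le |Q|$ keeps the duplicated graph polynomial in size).
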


%See Chapter~$5$ in \cite{fomin2019kernelization} for applications of expansion lemma to obtain kernels for various problems.

\paragraph{Integer Linear Programming.}
The technical tool we use to prove that  \maxedgecoloring\ is fixed-parameter tractable (defined in next sub-section) by the size of vertex cover is the fact that {\sc Integer Linear Programming} is fixed-parameter tractable when parameterized by the number of variables. An  instance of {\sc Integer Linear Programming} consists of a matrix $A \in \mathbb{Z}^{m \times q}$, a vector $\bar{b} \in \mathbb{Z}^m$ and a vector $\bar{c} \in \mathbb{Z}^q$. The goal is to find a vector $\bar{x} \in \mathbb{Z}^q$ which satisfies $A\bar{x} \le \bar{b}$ and minimizes the value of $\bar{c} \cdot \bar{x}$ (scalar product of $\bar{c}$ and $\bar{x}$). We assume that an input is given in binary and thus the size of the input instance or simply instance is the number of bits in its binary representation.

\begin{proposition}[\cite{kannan1987minkowski}, \cite{lenstra1983integer}] \label{prop:ilp} An \textsc{Integer Linear Programming} instance of size $L$ with $q$ variables can be solved using $\Oh(q^{2.5q+o(q)} \cdot (L + \log M_x ) \cdot \log(M_x \cdot M_c))$ arithmetic operations and space polynomial in $L + \log M_x$, where $M_x$ is an upper bound on the absolute value that a variable can take in a solution, and $M_c$ is the largest absolute value of a coefficient in the vector $\bar{c}$.
\end{proposition}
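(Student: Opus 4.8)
The plan is to establish the result in two stages: first reduce the optimization version to a short sequence of feasibility tests, and then solve each feasibility test by a branch-and-recurse scheme built on lattice basis reduction and the flatness theorem.

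First I would reduce the minimization of $\bar{c}\cdot\bar{x}$ over $\{\bar{x}\in\mathbb{Z}^q : A\bar{x}\le\bar{b}\}$ to deciding feasibility. Since every coordinate of a candidate solution lies in $[-M_x, M_x]$, the objective value $\bar{c}\cdot\bar{x}$ lies in an integer interval of length $\Oh(q\, M_x\, M_c)$. I would binary search for the optimal value: for a trial value $t$, append the single constraint $\bar{c}\cdot\bar{x}\le t$ to the system and test feasibility of the augmented polytope. This uses $\Oh(\log(M_x\cdot M_c))$ feasibility tests (the additive $\log q$ being absorbed), which is exactly the source of the $\log(M_x\cdot M_c)$ factor in the claimed running time; appending one row to $A$ changes neither $q$ nor, up to polynomial overhead, the encoding length $L$.

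The heart of the argument is the feasibility test, for which I would use the Lenstra--Kannan method. Given $P=\{\bar{x}:A\bar{x}\le\bar{b}\}\subseteq\mathbb{R}^q$, I would first preprocess by computing an affine map that ``rounds'' $P$ (an inscribed/circumscribed ellipsoid pair), reducing to a well-rounded, and if necessary lower-dimensional, instance. I would then invoke the flatness theorem: there is a bound depending only on $q$ such that either $P$ contains a lattice point, or $P$ is \emph{flat}, meaning there is a nonzero integer direction $\bar{d}$ with $\max_{\bar{x}\in P}\bar{d}\cdot\bar{x}-\min_{\bar{x}\in P}\bar{d}\cdot\bar{x}$ bounded by that function of $q$. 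Such a thin direction is produced by a shortest-vector computation on the lattice dual to the basis obtained from lattice basis reduction. In the flat case every lattice point of $P$ lies on one of a bounded number of parallel hyperplanes $\bar{d}\cdot\bar{x}=\gamma$ with $\gamma\in\mathbb{Z}$; I would branch on each admissible $\gamma$, use it to eliminate one variable, and recurse on the resulting $(q-1)$-dimensional integer feasibility problem.

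The complexity bookkeeping is where the stated exponent arises and is the main obstacle. The recursion has depth $q$, and at each level the branching factor is the lattice-width bound from the flatness theorem, so the number of leaves is the product of these bounds over the $q$ levels. A crude flatness constant yields only $2^{\Oh(q^2)}$; obtaining $q^{2.5q+o(q)}$ requires Kannan's refined analysis, which couples a near-optimal algorithm for the shortest-vector problem with a sharpened flatness bound so that the product telescopes to $q^{2.5q+o(q)}$ rather than being quadratic in the exponent. The per-node work --- basis reduction, the shortest-vector subroutine, and variable elimination --- is polynomial in $L+\log M_x$, and one checks that all intermediate integers stay polynomially bounded in $L+\log M_x$, giving the stated space bound; the polynomial per-node overhead is folded into the $(L+\log M_x)$ factor together with the $q^{o(q)}$ term. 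Multiplying the number of recursion leaves, the per-node cost, and the number of binary-search feasibility tests yields the bound $\Oh(q^{2.5q+o(q)}\cdot(L+\log M_x)\cdot\log(M_x\cdot M_c))$.
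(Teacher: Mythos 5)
Your outline is the standard Lenstra--Kannan argument --- binary search on the objective value over an integer range of length $\Oh(q\,M_x M_c)$ reducing optimization to $\Oh(\log(M_x\cdot M_c))$ feasibility tests, each handled by ellipsoidal rounding, the flatness theorem, and branching on boundedly many parallel lattice hyperplanes with recursion on the dimension --- which is precisely the content of the sources \cite{lenstra1983integer,kannan1987minkowski} that the paper cites for this proposition as a black box, giving no proof of its own. The sketch is correct at this level of granularity, with the one genuinely hard quantitative step (Kannan's refined shortest-vector/flatness analysis bringing the leaf count down from $2^{\Oh(q^2)}$ to $q^{2.5q+o(q)}$) appropriately delegated to the cited work rather than rederived.
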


\paragraph{Parameterized Complexity.} 
\label{param}
The goal of parameterized complexity is to find ways of solving \NP-hard problems more efficiently than brute force by associating a {\em small} parameter to each instance. 
Formally, a {\em parameterization} of a problem is assigning a positive integer parameter $k$ to each input instance and we say that a parameterized problem is {\em fixed-parameter tractable } (\FPT) if there is an algorithm, that given an instance $(I,k)$, resolves in time bounded by $f(k)\cdot \vert I \vert ^{\mathcal{O}(1)}$, where $\vert I \vert$ is the size of the input $I$ and $f$ is an arbitrary computable function depending only on the parameter $k$. 
%If the problem had a set $\Gamma$ of positive integers as parameters, then the problem can be considered to be parameterized by $k = \sum_{q\in \Gamma}q$. 

Such an algorithm is called an \FPT\ algorithm and such a running time is called \FPT\ running time.
%There is also an accompanying theory of hardness using which one can identify parameterized problems that are unlikely to admit FPT algorithms. The hard classes are $W[i],i\in {\mathbb N}$. For this paper, it is enough to know that the {\sc Independent Set} problem is W[1]-hard~\cite{DF99}.
%A parameterized problem is said to be in the class para-NP if it has a nondeterministic algorithm with \FPT\ running time. To show that a problem is para-NP-hard, we need to show that the problem is NP-hard when the parameter takes a value from a finite set of positive integers. %These are essentially proved by showing that the problem is W-hard. 
Another central notion in the field of Parameterized Complexity is {\em kernelization}.
A parameterized problem is said to admit a $h(k)$-{\it kernel} if there is a polynomial-time algorithm (the degree of the polynomial is independent of $k$), called a {\em kernelization} algorithm, that, given an instance $(I,k)$ of the problem, outputs an instance $(I',k')$ of the problem such that: $(i)\ \left | I' \right | \in k' \leq h(k)$, and $(ii)$ $(I,k)$ and $(I',k')$ are {\em equivalent} instances of the problem i.e. $(I,k)$ is a {\sc Yes} instance if and only if $(I',k')$ is a {\sc Yes} instance of the problem.
It is known that a decidable problem admits an \FPT\ algorithm if and only if there is a kernel.
If the function $h(k)$ is polynomial in $k$, then we say that the problem admits a polynomial kernel. 
For more on parameterized complexity, see the recent books~\cite{CFKLMPPS15,fomin2019kernelization}.

We say a parameter $k_2$ is {\em larger} than a parameter $k_1$ if there exists a computable function $g(\cdot)$ such that $k_1 \le g(k_2)$. 
In such case, we denote $k_1 \preceq k_2$ and say $k_1$ is {\em smaller} than $k_2$. 
If a problem if \FPT\ parameterized by $k_1$ then it is also \FPT\ parameterized by $k_2$.
Moreover, if a problem admits a kernel of size $h(k_1)$ then it admits a kernel of size $h(g(k_2))$.
For a graph $G$, let $X$ be its minimum sized \text{deg}-$1$-modulator.
By the definition of vertex cover, we have $|X| \le \vc(G)$.
This implies $|X| \preceq \vc(G)$.
In the following observation, we argue that for ``non-trivial'' instances, $\vc(G) \preceq l$ and $|X| \preceq l - \mm(G)$. 

\begin{observation} \label{obs:para-hierarchy} For a given instance $(G, l, p)$ of \maxedgecoloring, in polynomial time, we can conclude that either $(G, l, p)$ is a \yes\ instance or $\vc(G) \preceq l$ and $|X| \preceq (l - \mm(G))$, where $X$ is a minimum sized \text{deg}-$1$-modulator of $G$. 
\end{observation}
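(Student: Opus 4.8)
The plan is to reduce everything to two maximum-matching computations, both polynomial-time by~\cite{micali1980v}. First I would compute a maximum matching $M$ of $G$. Since a matching is a $1$-edge-colorable subgraph, it is in particular $p$-edge-colorable for every $p \ge 1$ (apply Observation~\ref{obs:color-to-matching} with a single nonempty color class). Hence, if $\mm(G) = |M| \ge l$, then $M$ together with its trivial $p$-edge-coloring is a subgraph with at least $l$ edges, and we declare the instance to be a \yes\ instance. Otherwise $\mm(G) < l$, and using $\vc(G) \le 2\mm(G)$ we immediately obtain $\vc(G) < 2l$, which gives $\vc(G) \preceq l$ (with $g(x) = 2x$). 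This settles the bound involving the vertex cover number for every value of $p$.

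For the bound on a minimum deg-$1$-modulator $X$ (Definition~\ref{def:deg-1-mod}), the key structural fact I would isolate is the following: for \emph{any} matching $M$ of $G$, a minimum vertex cover $C'$ of the graph $G - E(M)$ is itself a deg-$1$-modulator of $G$. Indeed, $C'$ covers all edges of $G - E(M)$, so $V(G) \setminus C'$ induces no edge of $G - E(M)$; consequently every edge of $G - C'$ must belong to $M$, and therefore $G - C'$ has maximum degree at most $1$. This yields $|X| \le |C'| = \vc(G - E(M)) \le 2\,\mm(G - E(M))$, reducing the task to bounding the matching number of $G - E(M)$.

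Finally, assuming $\mm(G) < l$ and $p \ge 2$, I would compute a maximum matching $M_2$ of $G - E(M)$. If $|M| + |M_2| \ge l$, then $M$ and $M_2$ are two edge-disjoint matchings whose union is a $p$-edge-colorable subgraph (color $M$ with $1$ and $M_2$ with $2$) with at least $l$ edges, so again we report a \yes\ instance via Observation~\ref{obs:color-to-matching}. In the remaining case $\mm(G - E(M)) = |M_2| < l - \mm(G)$, and combining with the structural fact gives $|X| < 2\,(l - \mm(G))$, i.e.\ $|X| \preceq l - \mm(G)$. All steps run in polynomial time. The only value left out is $p = 1$, where \maxedgecoloring\ coincides with \textsc{Maximum Matching} and is solved outright in polynomial time; I would treat it separately, since the modulator bound genuinely fails there (e.g.\ on a long path with $l = \mm(G)+1$).

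I expect the main obstacle to be the structural lemma of the second paragraph — establishing that a second color class can always absorb a number of edges linear in $|X|$ — as the two matching computations and the surrounding arithmetic are routine. The subtle point is that the deficiency parameter $l - \mm(G)$ must be compared against $\mm(G - E(M))$ rather than against $\mm(G)$; it is precisely the identification of $G - E(M)$ (and not $G$) as the right auxiliary graph that makes the inequality $|X| \le 2\,\mm(G-E(M))$ fall out cleanly and hands us the required bound whenever the second matching fails to certify a \yes\ instance.
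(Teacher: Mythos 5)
Your proposal follows essentially the same route as the paper's proof: compute a maximum matching $M_1$ of $G$ (conclude \yes\ if $|M_1| \ge l$, else get $\vc(G) \le 2\mm(G) < 2l$), then compute a maximum matching $M_2$ of $G - E(M_1)$ (conclude \yes\ if $|M_1| + |M_2| \ge l$), and otherwise note that deleting the endpoints of $M_2$ --- equivalently, your vertex cover of $G - E(M_1)$ --- leaves only edges of $M_1$, so it is a deg-$1$-modulator of size less than $2(l - \mm(G))$. The one point where you are more careful than the paper: the implication ``two edge-disjoint matchings with $\ge l$ edges in total give a \yes\ instance'' requires $p \ge 2$, a hypothesis the paper's proof uses silently, whereas you isolate the $p=1$ case and handle it separately.
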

\begin{proof}
Let $M_1$ be a maximum sized matching in graph $G$. 
Such matching can be found in polynomial time using the algorithm by Micali and Vazirani \cite{micali1980v}.
If $l \le |M_1|  = \mm(G)$ then we can conclude that $(G, l, p)$ is a \yes\ instance.  
Otherwise, we are working with an instance for which $l > \mm(G)$.
As $2 \mm(G) \ge \vc(G)$, we have $l > \vc(G)/2$  which implies $\vc(G) \preceq l$. 

Consider the graph $G' = G - M_1$. 
Let $M_2$ be a maximum sized matching in $G'$.
If $|M_1| + |M_2| \ge l$ then $(G, l, p)$ is a \yes\ instance.
Otherwise,  we are working with an instance for which $|M_2| < l - |M_1|$.
This implies $|V(M_2)| < 2(l - \mm(G))$. 
Consider the graph $G - V(M_2)$.
The only edges present in this graph are the ones in $M_1$.
Hence, every connected component in $G - V(M_2)$ has degree at most one.
This implies $|X| \le |V(M_2)| \le 2(l - \mm(G))$ where $X$ is a minimum sized \text{deg}-$1$-modulator of $G$.
\qed\end{proof}

% !TEX root = main.tex

\section{\FPT\ Algorithm Parameterized by the Vertex Cover Number of the Input}
\label{sec:fpt-vc}

In this section, we consider the problem \maxedgecoloring, when parameterized the vertex cover number of the input graph. Let $(G, l, p)$ be an instance of the problem, where the graph $G$ has $n$ vertices. We assume that $G$ has no isolated vertices as any such vertex is irrelevant for an edge coloring. We begin by computing a minimum sized vertex cover, $X$ of $G$, in time $\mathcal{O}(2^{|X|} n|X|)$, using the algorithm of Chen et al.~\cite{DBLP:conf/mfcs/ChenKX06}. 

We begin by intuitively explaining the working of our algorithm. We assume an arbitrary (but fixed) ordering over vertices in $G$, and let $W = V(G) \setminus X$. Suppose that we are seeking for the subgraph $H$, of $G$, with at least $\ell$ edges and the coloring $\phi: E(H) \rightarrow \{1, 2,\dots, p\}$. We first ``guess'' the intersection of $H$ with $G[X]$, i.e., the subgraph $H'$ of $G[X]$, such that $V(H) \cap X = H'$ and $V(H) \cap E(G[X]) = H'$. (Actually, rather than guessing, we will go over all possible such $H'$s, and do the steps, that we intuitively describe next.) Let $\phi' = \phi_{E(H')}$. Based on $(H', \phi')$, we construct an instance of \ilp, which will help us ``extend'' the partial solution $(H',\phi')$, to the solution (if such an extended solution exists), for the instance $(G,\ell, p)$. Roughly speaking, the construction of the \ilp\ relies on the following properties. Note that $W$ is an independent set in $G$, and thus edges of the solution that do not belong to $H'$, must have one endpoint in $X$ and the other endpoint in $W$. Recall that $H$ has the partition (given by $\phi$) into (at most) $p$ matchings, say, $M_1,M_2,\dots, M_{p'}$. The number of different neighborhoods in $X$, of vertices in $W$, is bounded by $2^k$. This allows us to define a ``type'' for $M_i - E(H')$, based on the neighborhoods, in $X$, of the vertices appearing in $M_i - E(H')$. Once we have defined these types, we can create a variable $Y_{\mathbb{T},\alpha}$, for each type $\mathbb{T}$ and color class $\alpha$ (in $\{0,1,\dots, p'\}$). The special color $0$ will be used for assigning all the edges that should be colored using the colors outside $\{1,2,\dots, p'\}$ (and we will later see that it is enough to keep only one such color). We would like the variable $Y_{\mathbb{T},\alpha}$ to store the number of matchings of type $\mathbb{T}$ that must be colored $\alpha$. The above will heavily rely on the fact that each edge in $H$ that does not belong to $H'$, must be adjacent to a vertex in $X$, this in turn will facilitate in counting the number of edges in the matching (via the type, where the type will also encode the subset of vertices in $X$ participating in the matching). Furthermore, only for $\alpha = 0$, the variable $Y_{\mathbb{T},\alpha}$ can store a value which is more than $1$. Once we have the above variable set, by adding appropriate constraints, we will create an equivalent instance of \ilp, corresponding to the pair $(H',\phi')$. We will now move to the formal description of the algorithm.

For $S \subseteq X$, let $\Gamma(S)$ be the set of vertices in $W$ whose neighborhood in $G$ is exactly $S$, i.e., $\Gamma(S) := \{w \in W|\ N_G(w) = S\}$. We begin by defining a tuple, which will be a ``type'', and later we will relate a matching (between $W$ and $X$), to a particular type. 

\begin{definition}[Type] \label{def:type}{\rm A {\em type} $\T = \langle X' = \{x_1, x_2, \dots, x_{|X'|}\}; S_1, S_2, \dots, S_{|X|}\rangle$ is a $(|X| + 1)$ sized tuple where each entry is a subset of $X$ and which satisfy following properties.
  \begin{enumerate}
  \item  The first entry, $X'$, is followed by $|X'|$ many entries which are non-empty subsets of $X$ and the remaining $(|X| + 1 - |X'|)$ entries are empty sets.
  \item Any non-empty set $S$ of $X$ appears at most $|\Gamma(S)|$ many times from the second entry onward in the tuple.  
  \item For every $i \in \{1, 2, \dots, |X'|\}$, we have $x_i  \in S_i$.
  \end{enumerate}
}\end{definition}

See Figure~\ref{fig:type-example} for an example. We note that the number of different types is at most $2^{|X|} \cdot 2^{|X|^2} \in 2^{\calO(|X|^2)}$ and it can be enumerated in time $2^{\calO(|X|^2)} \cdot n^{\calO(1)}$. We need following an auxiliary function corresponding to a matching, which will be useful in defining the type for a matching. Let $M$ be a matching across $X, W$ ($M$ has edges whose one endpoint is in $X$ and the other endpoint is in $W$). Define $\tau_M : X \cap V(M) \rightarrow W \cap V(M)$, as $\tau_M(x) := w$ if $xw$ is an edge in $M$. We drop when the context is clear.

\begin{definition}[Matching of type $\T$]\label{def:type-matching}{\rm A matching $M = \{x\tau(x) |\ x \in X \text{ and }$ $\tau(x) \in W\}$, is of {\em type} $\T = \langle X'; S_1, S_2, \dots, S_{|X|}\rangle$ if $V(M) \cap X = X'$ $(\ := \{x_1, x_2, \dots, x_{|X|'}\})$, and $S_j = N(\tau(x_j))$ for every $j$ in $\{1, 2, \dots, |X'|\}$. 
}\end{definition}

\begin{figure}[t]
  \begin{center}
    \includegraphics[scale=0.5]{./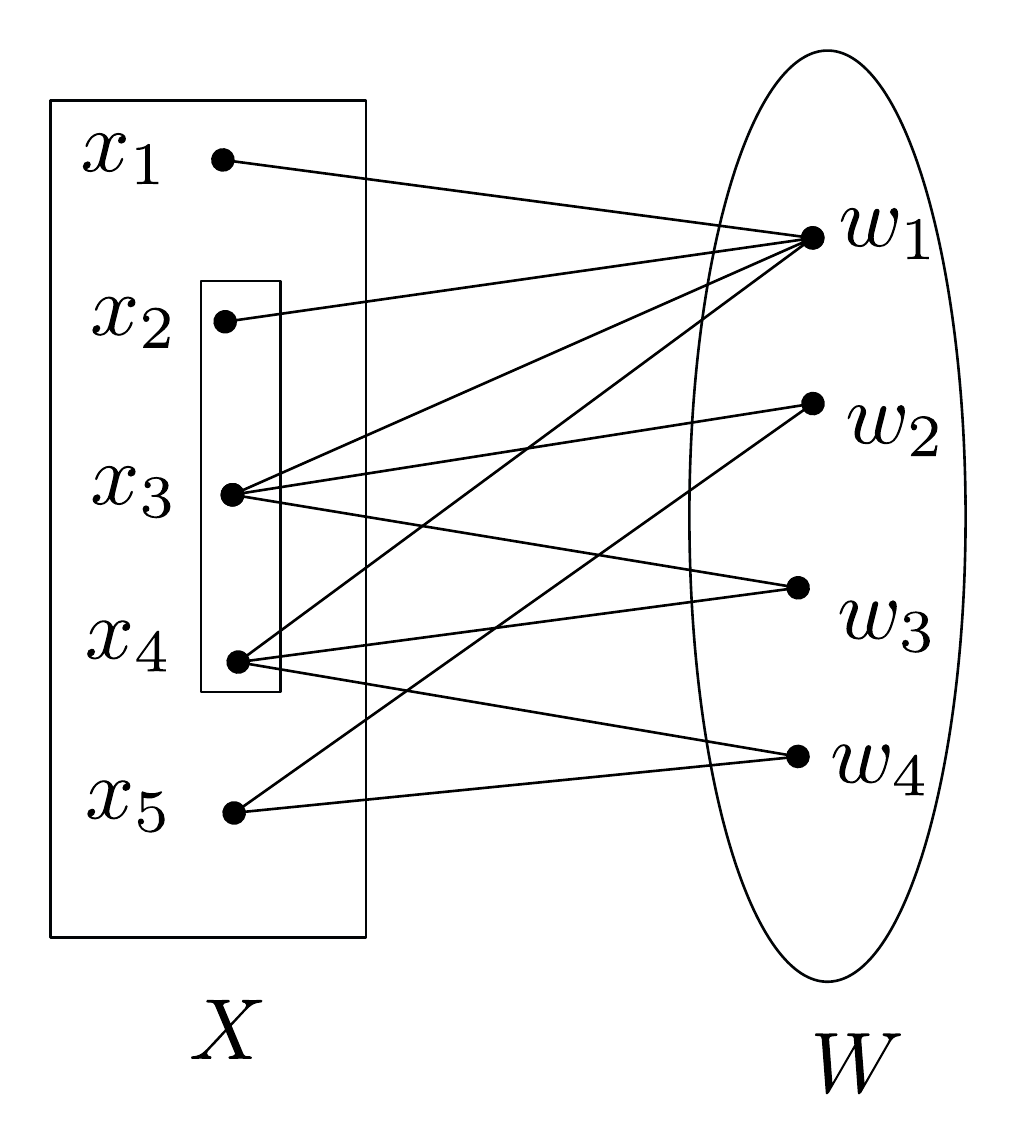}
  \end{center}
  \caption{The tuple $\T = \langle \{x_2, x_3, x_4\}; N(w_1), N(w_2), N(w_3), \emptyset, \emptyset \rangle$ is a \type. The matching $\{x_2w_1, x_3w_2, x_4w_3\}$ is of  $\T$. \label{fig:type-example}}
\end{figure}

We define some terms used in the sub-routine to construct an \ilp\ instance. 
For a type $\T = \langle X'; S_1, S_2, \dots, S_{|X|}\rangle$, we define $|\T| := |X'|$. Note that $|\T|$ is the number of edges in a matching of type $\T$. 
For a vertex $x \in X$ and a type $\T = \langle X'; S_1, S_2, \dots, S_{|X|}\rangle$, value of $\texttt{is\_present}(x, \T)$ is $1$ if $x\in X'$, and otherwise it is $0$.
For $w \in W$, define $\texttt{false\_twins}(w)$ as the number of vertices in $W$ which have the same neighborhood as that of $w$. That is, $\texttt{false\_twins}(w) = |\{\hat{w} \in W|\ N(w) = N(\hat{w}) \}|$. For a vertex $w\in W$ and a type $\T = \langle X'; S_1, S_2, \dots, S_{|X|}\rangle$, the value of $\texttt{nr\_nbr\_present}(w, \T)$ denotes the number of different $j$s in $\{1, 2, \dots, |X'|\}$ for which $S_j = N(w)$. We remark that the values of all the functions defined above can be computed in (total) time bounded by $2^{\calO(|X|^2)} \cdot n^{\calO(1)}$. 

\vspace{0.5cm}
\noindent \textbf{Constructing \ilp\ instances}. Recall that $G$ is the input graph and $X$ is a (minimum sized) vertex cover for $G$. Let $\mathfrak{T}$ be the set of all types. For every subgraph $H'$ of $G[X]$, a (non-negative) integer $p_0 \leq p$, and a $p_0$-edge coloring $\phi' : E(H') \rightarrow \{0, 1, 2, \dots, p_0\}$, we create an instance $I_{(H', \phi')}$, of \ilp\ as follows. Let $[p_0]' = \{0, 1, 2, \dots, p_0\}$. Define a variable $Y_{\T, \alpha}$ for every type $\T$ and integer $\alpha \in [p_0]'$. (These variables will be allowed to take values from $\{0, 1, \dots, p\}$). Intuitively speaking, for $\alpha$ in $[p_0]'$, the value assigned to $Y_{\T, \alpha}$ will indicates that there is a matching of type $\T$ which is assigned the color $\alpha$. Moreover, for $\alpha = 0$, the value of $Y_{\T, 0}$ will indicate that there are $Y_{\T, 0}$ many matchings of type $\T$, each of which must be assigned  a unique color which is strictly greater than $p_0$. Recall that for a type $\T \in \mathfrak{T}$, $|\T|$ is the number of edges in a matching of $ \T$. We next define our objective function, which (intuitively speaking) will maximize the number of edges in the solution. 
%the objective function encodes the aim of maximizing the number of edges in a solution.
%We set the objective function as the following. 
$$\text{maximize }\sum_{\T \in \mathfrak{T}; \alpha \in [p_0]'} Y_{\T, \alpha} \cdot |\T|$$
%The sub-routine computes values of $\texttt{is\_present}(x, \T)$, $\texttt{nr\_nbr\_present}(w, \T)$, and $\texttt{false\_twins}(w)$ for every $x, w$ in $X, W$, respectively and for every $ \T $. 
We next discuss the set of constraints. %It adds following constraints.

For every vertex $x$ in $X$, we add the following constraint, which will ensure that $x$ will be present in at most $p$ matchings: 
\begin{equation}
  \label{constraint:vc-side-vertex-capacity}
  \sum_{\T \in \mathfrak{T}; \alpha \in [p_0]'} Y_{\T, \alpha} \cdot \texttt{is\_present}(x, \T) \le p - \text{deg}_{H'}(x). \tag{${\sf ConstSetI}$}
\end{equation}

For each $x \in X$, an edge $x\hat{x}$ incident on $x$ in $H'$, and $\T \in \mathfrak{T}$, we add the following constraint, which will ensure that no other edge incident on $x$ and some vertex in $W$ is assigned the color $\phi'(x\hat{x})$: 
\begin{equation}
  \label{constraint:vc-side-vertex-coloring}
   Y_{\T, \phi'(x\hat{x})} \cdot \texttt{is\_present}(x, \T) = 0. \tag{{\sf ConstSetII}}
 \end{equation}

 We will next add the following constraint for each $w \in W$, which will help us in ensuring that $w$ is present in at most $p$ matchings: 
 
 \begin{equation}
   \label{constraint:ind-side-vetex-capacity}
   \sum_{\T \in \mathfrak{T}; \alpha \in [p_0]'} Y_{\T, \alpha} \cdot \texttt{nr\_nbr\_present}(w, \T) \le p \cdot \texttt{false\_twins}(w). \tag{{\sf ConstSetIII}}
 \end{equation}
 
  Notice that for two vertices $w_1, w_2 \in W$, such that $N(w_1) = N(w_2)$, the above constraints corresponding to $w_1$ and $w_2$ is exactly the same (and we skip adding the same constraint twice).  
% Hence for a fixed $\T$ and $\alpha$, there are at most $\min\{2^{|X|}, |W|\}$ many constraints of type (\ref{constraint:ind-side-vetex-capacity}).
  
When $\alpha \neq 0$, we want to ensure that at most one matching that is colored $\alpha$. Thus, for $\alpha \in [p_0]$, add the constraint: %, corresponding  why we wish to ensure such a condition is the following. Consider two matchings $M_1, M_2$ that are of the same \type, say $\T$. By Definition~\ref{def:type-matching}, we know $V(M_1) \cap X = V(M_2) \cap X$. Hence if $Y_{\T, \alpha}$ is strictly greater than one then there is a vertex in $X$ on which two edges (one from $M_1$ and another from $M_2$) are incident and these edges have same color (viz $\alpha$).
 \begin{equation}
   \label{constraint:type-used-in-VC}
   \sum_{\T \in \mathfrak{T} } Y_{\T, \alpha} \le 1. \tag{{\sf ConstSetIV}} 
 \end{equation}

Note that we want at most $p$ color classes, which will be ensured by our final constraint as follows.% ensures that the sum of variables $Y_{\T, \alpha}$ is at most $p$. At last, the subroutine adds the following single constraint. 
\begin{equation}
  \label{constraint:nr-color-class}
  \sum_{\T \in \mathfrak{T}; \alpha \in [p_0]'} Y_{\T, \alpha} \le p. \tag{{\sf ConstSetV}} 
\end{equation}

This completes the construction of the \ilp\ instance of $I_{(H',\phi')}$. %The sub-routine returns this instance.
\vspace{0.5cm} 

\vspace{0.5cm}
 \noindent \textbf{Algorithm for \maxedgecoloring:} Consider the given instance $(G, l, p)$ of \maxedgecoloring. The algorithm will either return a solution $(H, \phi)$ for the instance, or conclude that no such solution exists. We compute a minimum sized vertex cover, $X$ of $G$, in time $\mathcal{O}(2^{|X|} n|X|)$, using the algorithm of Chen et al.~\cite{DBLP:conf/mfcs/ChenKX06}. For every subgraph $H'$ of $G[X]$, a (non-negative) integer $p_0 \leq p$, and a $p_0$-edge coloring $\phi' : E(H') \rightarrow \{0, 1, 2, \dots, p_0\}$, we create the instance $I_{(H', \phi')}$, and resolve it using Proposition~\ref{prop:ilp}. (In the above we only consider those $\phi': E(H') \rightarrow \{0, 1, 2, \dots, p_0\}$, where each of the color classes are non-empty.) If there exists a tuple $(H', \phi')$ for which the optimum value of the corresponding \ilp\ instance is at least $(l - |E(H')|)$ then algorithm constructs a solution $(H,\phi)$ as specified in the proof of Lemma~\ref{lemma:backward-direction} and returns it as a solution. If there is no such tuple then the algorithm concludes that no solution exists for a given instance.
\vspace{0.5cm} 

For a solution $(H, \phi: E(H) \rightarrow [p])$ for the instance $(G, l, p)$, we say that $(H,\phi)$ is a {good solution}, if for some $p_0 \in [p]$, for each $e \in E(H) \cap E(G[X])$, we have $\phi(e) \in [p_0]$. Note that if $(G, l, p)$ has a solution, then it also has a good solution. We argue the correctness of the algorithm in the following two lemmas. 

\begin{lemma}
  \label{lemma:forward-direction}  If $(G, l, p)$ has a good solution $(H, \phi)$ then the optimum value of the \ilp\ instance $I_{(H',\phi')}$ is at least $(l - |E(H')|)$, where $H' = H[X]$ and $\phi' : E(H') \rightarrow \{1,2,\dots, p_0\}$, such that $\phi'= \phi|_{E(H')}$ and $p_0 = \max\{\phi(e) \mid e \in E(H) \cap E(G[X])\}$. 
 \end{lemma}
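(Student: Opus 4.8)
The plan is to exhibit an explicit feasible assignment to the variables $Y_{\T,\alpha}$ of $I_{(H',\phi')}$ whose objective value equals the number of edges of $H$ lying across $X$ and $W$, and then to observe that this number is at least $l-|E(H')|$. Since $\phi$ is a proper $p$-edge coloring, by Observation~\ref{obs:color-to-matching} its color classes $M_i := \phi^{-1}(i)$, $i \in [p]$, partition $E(H)$ into $p$ matchings, and every vertex $v$ of $H$ satisfies $\deg_H(v)\le p$ (its incident edges receive pairwise distinct colors). Because $W=V(G)\setminus X$ is independent, every edge of $E(H)\setminus E(H')$ goes across $X,W$; hence for each $i$ the set $N_i := M_i\setminus E(H')$ is a matching across $X,W$, and I let $\T_i$ denote its type (the empty type $\T_\emptyset$ if $N_i=\emptyset$). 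By the choice of $p_0=\max\{\phi(e)\mid e\in E(H)\cap E(G[X])\}$, the colors $\{p_0+1,\dots,p\}$ are used only by cross edges, so for $j>p_0$ we have $M_j=N_j$.

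Next I would define the assignment. For each $\alpha\in[p_0]$, set $Y_{\T_\alpha,\alpha}=1$ and $Y_{\T,\alpha}=0$ for $\T\neq\T_\alpha$; this records the single cross-matching accompanying color $\alpha$. For $\alpha=0$, set $Y_{\T,0}$ equal to the number of colors $j\in\{p_0+1,\dots,p\}$ with $N_j\neq\emptyset$ and $\T_j=\T$; this packs all ``high'' colors into the special slot $0$. The crucial piece of book-keeping, which I would isolate as a one-line claim, is that for \emph{any} function $g$ on types with $g(\T_\emptyset)=0$, this assignment gives $\sum_{\T\in\mathfrak{T},\,\alpha\in[p_0]'}Y_{\T,\alpha}\cdot g(\T)=\sum_{i=1}^{p}g(\T_i)$; that is, the placed matchings are, up to empty ones, exactly the cross-matchings $N_1,\dots,N_p$. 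Applying this with $g(\T)=|\T|$ immediately yields objective value $\sum_{i=1}^{p}|N_i|=|E(H)|-|E(H')|\ge l-|E(H')|$, as required.

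Finally I would verify the five constraint families, each reducing, via the book-keeping identity, to a property of the proper coloring $\phi$. Taking $g(\T)=\texttt{is\_present}(x,\T)$, the left-hand side of \eqref{constraint:vc-side-vertex-capacity} counts the colors whose cross-matching meets $x$, which equals $\deg_H(x)-\deg_{H'}(x)\le p-\deg_{H'}(x)$. For \eqref{constraint:vc-side-vertex-coloring}, the variable $Y_{\T,\phi'(x\hat x)}$ is nonzero only for $\T=\T_{\phi'(x\hat x)}$, and $x$ cannot lie in $N_{\phi'(x\hat x)}$ since it already carries the $H'$-edge $x\hat x$ of that color, so $\texttt{is\_present}(x,\T_{\phi'(x\hat x)})=0$. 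Taking $g(\T)=\texttt{nr\_nbr\_present}(w,\T)$, the left-hand side of \eqref{constraint:ind-side-vetex-capacity} counts incidences between cross-matchings and $W$-vertices sharing $w$'s neighborhood, namely $\sum_{u\in W:\,N(u)=N(w)}\deg_H(u)\le p\cdot\texttt{false\_twins}(w)$. Constraint \eqref{constraint:type-used-in-VC} holds since for each $\alpha\in[p_0]$ exactly one variable $Y_{\cdot,\alpha}$ is set to $1$, and \eqref{constraint:nr-color-class} holds since the total number of placed matchings is at most $p_0+(p-p_0)=p$. I expect the only delicate point to be setting up the book-keeping identity cleanly, in particular handling empty cross-matchings (all three functions vanish on $\T_\emptyset$) and confirming that grouping the high colors by type in the $\alpha=0$ slot does not distort the per-vertex counts: two high colors of the same type meet exactly the same vertices of $X$ and the same neighborhood classes in $W$, which is precisely what keeps \eqref{constraint:vc-side-vertex-capacity} and \eqref{constraint:ind-side-vetex-capacity} correct. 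Once that identity is in place, every constraint follows from properness of $\phi$ and the degree bound $\deg_H(v)\le p$.
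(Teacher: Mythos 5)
Your proposal is correct and follows essentially the same route as the paper: you place, for each color class of $\phi$, its cross part $M_i \setminus E(H')$ into the variable $Y_{\T_i,i}$ (or into the slot $\alpha=0$ for colors above $p_0$), and then verify each of the five constraint families together with the objective bound using properness of $\phi$ and the degree bound $\deg_H(v)\le p$. Your ``book-keeping identity'' with a generic weight function $g$ is merely a streamlined packaging of the per-constraint counting arguments the paper carries out individually, so there is no substantive difference in approach.
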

 \begin{proof}
   %If $\phi'$ uses $p_0$ many different colors then, without loss of generality, we can assume that $\phi'$ uses every color in $\{1, 2, \dots, p_0\}$.
   %We specify values of each variable in the \ilp\ instance $I_{(H',\phi')}$ as follows. %corresponding to $(G, X, H', \phi')$.
   % We argue that for these values of variables, every constraint is satisfied and the optimum value is at least $(l - |E(H')|)$.
   Let $M_1, M_2, \dots, M_p$ be the partition of edges in $E(H) \setminus E(H')$ according to the colors assigned to them by $\phi$, and ${\cal M} = \{M_i \mid i\in [p]\} \setminus \{\emptyset\}$. Notice that each $M_i$ is a matching, where the edges have one endpoint in $X$ and the other endpoint in $W$. We create an assignment $\asg: {\sf Var}_{(H',\phi')} \rightarrow [p_0]'$, where ${\sf Var}_{(H',\phi')}$ is the set of variables in the instance $I_{(H',\phi')}$ as follows.  Initialize $\asg(z) = 0$, for each  $z\in {\sf Var}(I_{(H',\phi')})$. For $i \in [p]$, let $\T_i$ be the type of $M_i$ and $p_i = \phi(e)$, where $e \in M_i$. For each $i \in [p]$, we do the following. If $p_i > p_0$, then increment $\asg(Y_{\T_i, 0})$ by one, and otherwise increment value of $\asg(Y_{\T_i, p_i})$ by one. This completes the assignment of variables. Next we argue that $\asg$ satisfies all constraints in $I_{(H',\phi')}$ and the objective function evaluates to a value that is at least $(l - |E(H')|)$. 
%Any increment in values of variables $Y_{\T, \alpha}$ occurs only after processing a matching. 

As there are at most $p$ matchings, we have $ \sum_{\T \in \mathfrak{T}; \alpha \in [p_0]'} Y_{\T, \alpha} \le p$, and thus, the constraint in {\sf ConstSetV} is satisfied.
 
We will now argue that each constraint in {\sf ConstSetI} is satisfied. To this end, consider a variable $x \in X$, and let ${\sf a}_x = \sum_{\T \in \mathfrak{T}; \alpha \in [p_0]'} \asg(Y_{\T, \alpha}) \cdot \texttt{is\_present}(x, \T)$. Since $H$ is $p$-edge colorable, $\text{deg}_H(x) \le \Delta(H) \le p$. Hence, there are at most $p$ edges incident on $x$ in $H$ (Proposition~\ref{thm:vizing}). For any $\T \in \mathfrak{T}$ and $\alpha \in [p_0]'$, if $\asg(Y_{\T, \alpha}) \cdot \texttt{is\_present}(x, \T) \neq 0$, then there are $\asg(Y_{\T, \alpha})$ many matchings of type $\T$ in ${\cal M}$, each of which contains an edge incident on $x$. Moreover, each such matching contains a different edge incident on $x$. 
Since $\phi$ is a $p$-edge coloring of $H$, we have ${\sf a}_x +\ \text{deg}_{H'}(x) =\ \text{deg}_{H}(x) \le p$. This implies that ${\sf a}_x = \sum_{\T \in \mathfrak{T}; \alpha \in [p_0]'} \asg(Y_{\T, \alpha}) \cdot \texttt{is\_present}(x, \T) \le p - \text{deg}_{H'}(x)$. Thus we conclude that all contraints in ${\sf ConstSetI}$ are satisfied. 

Now we argue that all constraints in {\sf ConstSetII} are satisfied. Consider $x \in X$, an edge $x\hat{x}$ incident on $x$ in $H'$, and $\T \in \mathfrak{T}$ such that $\texttt{is\_present}(x, \T) = 1$. Since $x\hat{x} \in E(H')$, there is no matching $M_i \in {\cal M}$, such that $p_i = \phi'(x\hat{x})$ and $M$ contains an edge incident on $x$. Thus we can obtain that $\asg(Y_{\T, \phi'(x\hat{x})}) = 0$ (recall that $\texttt{is\_present}(x, \T) = 1$). From the above we can conclude that $\asg(Y_{\T, \phi'(x\hat{x})}) \cdot \texttt{is\_present}(x, \T) = 0$.

Next we argue that all constraints in {\sf ConstSetIII} are satisfied. To this end, consider a (maximal) subset $W' = \{w_1, w_2, \dots, w_r \} \subseteq W$, such that any two vertices in $W'$ are false twins of each other. Notice  that for each $j,j' \in [r]$, $\sum_{\T \in \mathfrak{T}; \alpha \in [p_0]'} Y_{\T, \alpha} \cdot \texttt{nr\_nbr\_present}(w_j, \T) \le p \cdot \texttt{false\_twins}(w_j)$ is exactly the same as $\sum_{\T \in \mathfrak{T}; \alpha \in [p_0]'}$ $Y_{\T, \alpha} \cdot \texttt{nr\_nbr\_present}(w_{j}, \T) \le p \cdot \texttt{false\_twins}(w_{j'})$.  
%Set $S := N(w_i)$. 
Consider any $w \in W'$, $\T \in \mathfrak{T}$, and $\alpha \in [p_0]'$, such that we have $\asg(Y_{\T, \alpha}) \cdot \texttt{nr\_nbr\_present}(w, \T) \neq 0$. There are $\asg(Y_{\T, \alpha})$ many matchings in $\cal M$ each of which contains $\texttt{nr\_nbr\_present}(w, \T)$ many edges incident vertices in $W'$. Hence $\sum_{\T \in \mathfrak{T}; \alpha \in [p_0]'} \asg(Y_{\T, \alpha}) \cdot \texttt{nr\_nbr\_present}(w_j, \T)$ is the number of edges incident on $W'$ in $H$. Note that $p \cdot \texttt{false\_twins}(w)$ is the maximum number of edges in $H$ which can be incident on vertices in $W'$. Thus we can conclude that $\sum_{\T \in \mathfrak{T}; \alpha \in [p_0]'} \asg(Y_{\T, \alpha}) \cdot \texttt{nr\_nbr\_present}(w_j, \T) \leq p \cdot \texttt{false\_twins}(w)$. 

For any $\alpha \in [p_0]$, there is at most one matching in $\cal M$ whose edges are assigned the color $\alpha$. This implies that $\sum_{\T \in \mathfrak{T} } \asg(Y_{\T, \alpha}) \le 1$. Hence all constraints in {\sf ConstSetIV} are satisfied. 

There are at least $(l - |E(H')|)$ many edges in $E(H) \setminus E(H')$ and each such edge has one endpoint in $X$ and another in $W$. Every edge in matching contributes exactly one to the objective function. Thus we can obtain that $\sum_{\T \in \mathfrak{T}; \alpha \in [p_0]'} \asg(Y_{\T, \alpha}) \cdot |\T| \geq (l - |E(H')|)$. This concludes the proof.% of the lemma.
%
%Hence the value of objective function is at least $(l - |E(H')|)$. 
\qed\end{proof}
 
 \begin{lemma}
   \label{lemma:backward-direction}
If there is $(H', \phi')$ for which the optimum value of the \ilp\ instance $I_{(H', \phi')}$, is at least $(l - |E(H')|)$, then the \maxedgecoloring\ instance $(G, l, p)$ admits a solution. Moreover, given $\asg: {\sf Var}_{(H',\phi')} \rightarrow [p_0]'$, where ${\sf Var}_{(H',\phi')}$, we can be compute $(H, \phi)$ in polynomial time. 
 \end{lemma}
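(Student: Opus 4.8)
The plan is to invert the construction of Lemma~\ref{lemma:forward-direction}. Starting from a feasible assignment $\asg$ of the instance $I_{(H',\phi')}$ whose objective value is at least $(l-|E(H')|)$, I would \emph{materialise} each variable $Y_{\T,\alpha}$ into an explicit family of edge-disjoint matchings between $X$ and $W$, glue them onto $H'$ to obtain $H$, and read the colouring $\phi$ off the second index $\alpha$. Concretely, the output subgraph is $H = H' \cup B$, where $B$ is the disjoint union of $\asg(Y_{\T,\alpha})$ matchings of type $\T$ taken over all pairs $(\T,\alpha)$; the colouring $\phi$ keeps $\phi'$ on $E(H')$, gives colour $\alpha$ to each realised matching recorded with $\alpha\in[p_0]$, and assigns colours from $\{p_0+1,\dots,p\}$ to the matchings recorded by the variables $Y_{\T,0}$.

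The first and most technical step is the realisation of $B$. A type $\T=\langle X';S_1,\dots,S_{|X|}\rangle$ fixes its $X$-endpoints (the set $X'$) exactly, so the only freedom is, for each position $j$, the choice of a vertex of $\Gamma(S_j)$ to serve as $\tau(x_j)$. I would phrase this as a degree-constrained bipartite assignment: each $x\in X$ may receive at most $p-\deg_{H'}(x)$ incident edges of $B$, each $w\in\Gamma(S)$ may receive at most $p$ incident edges, the prescribed multiplicities $\asg(Y_{\T,\alpha})$ must be met, and the chosen $W$-endpoints must be distinct within a single matching (so that all resulting edges are distinct). Feasibility is exactly what the constraints provide: {\sf ConstSetI} caps the demand at each $x\in X$ by its residual capacity $p-\deg_{H'}(x)$, {\sf ConstSetIII} caps the total demand landing on each false-twin class $\Gamma(S)$ by $p\cdot|\Gamma(S)|$, and property~(2) of Definition~\ref{def:type} (each $S$ occurs at most $|\Gamma(S)|$ times inside a type) guarantees within-matching distinctness. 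Since the vertices of $\Gamma(S)$ are interchangeable, these aggregate inequalities are precisely the Hall/cut conditions needed, and I would extract an integral assignment by a standard network-flow argument, which also yields the claimed polynomial-time construction.

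It then remains to check that $(H,\phi)$ is a valid solution. The edge count is immediate: $B$ contributes $\sum_{\T,\alpha}\asg(Y_{\T,\alpha})\cdot|\T|$ distinct edges, which equals the objective value and is therefore at least $(l-|E(H')|)$, so $|E(H)|\ge l$. Properness of $\phi$ splits into three checks: the edges of $H'$ remain properly coloured; for each $\alpha\in[p_0]$, {\sf ConstSetIV} ensures that at most one realised matching carries colour $\alpha$ and {\sf ConstSetII} ensures no $X$-endpoint of it already carries an $H'$-edge of colour $\alpha$ (the $W$-endpoints meet no edge of $H'$), so colour class $\alpha$ is a genuine matching; and the matchings coloured from $\{p_0+1,\dots,p\}$ are matchings by construction and receive pairwise distinct colours. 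Degrees are bounded by $p$ at every vertex, by {\sf ConstSetI} at $X$ and by the realisation at $W$.

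The step I expect to cause the most trouble is reconciling the colour budget with {\sf ConstSetV}. The naive scheme above uses $p_0$ colours on $H'$ plus one fresh colour for each unit of $\sum_{\T}Y_{\T,0}$, i.e.\ $p_0+\sum_{\T}Y_{\T,0}$ colours in total, whereas {\sf ConstSetV} only bounds $\sum_{\T,\alpha\in[p_0]'}Y_{\T,\alpha}$ by $p$; these need not coincide when fewer than $p_0$ of the colours $1,\dots,p_0$ are reused by the $\alpha\geq 1$ matchings. My fix is to observe that such unused colours are in fact \emph{free} at every $X$-endpoint of the $\alpha=0$ matchings: those endpoints have spare capacity by {\sf ConstSetI} and, being coloured $0$, meet no $H'$-edge of the colour in question. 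I would therefore recolour the excess $\alpha=0$ matchings with these unused colours instead of fresh ones. Formalising that this always fits inside $p$ colours reduces to a bipartite (list-)edge-colouring argument on $B$, in which each edge incident to $x$ may take any colour not used by $H'$ at $x$; König-type feasibility then follows from the same degree bounds provided by {\sf ConstSetI} and {\sf ConstSetIII}, and this is the one place where genuine care is required.
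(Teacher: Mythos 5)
Your construction follows the paper's own proof almost step for step: materialise each nonzero $Y_{\T,\alpha}$ as a matching of type $\T$ (the paper does this greedily, maintaining an ``$H$-degree-balanced'' invariant on each $\Gamma(S)$, while you do it by a flow argument; both are sound, and the feasibility conditions you cite do suffice), glue these matchings onto $H'$, and colour by the index $\alpha$, with fresh colours beyond $p_0$ for the $\alpha=0$ matchings. You also correctly put your finger on the one genuinely problematic point: nothing in the ILP bounds $p_0+\sum_{\T}Y_{\T,0}$ by $p$, and that is exactly the number of colours this scheme uses. (The paper's own proof glosses over this: its ``at most $p$ colours'' paragraph only argues that every vertex of $H$ has degree at most $p$, which neither bounds the number of colours used by the constructed $\phi$ nor by itself implies $p$-edge-colourability.) However, your repair is wrong. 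You claim that the $X$-endpoints of an $\alpha=0$ matching, ``being coloured $0$, meet no $H'$-edge of the colour in question''. No constraint supports this: {\sf ConstSetII} is imposed only on variables $Y_{\T,\gamma}$ with $\gamma=\phi'(x\hat{x})\in[p_0]$ and says nothing about $Y_{\T,0}$. Hence an $\alpha=0$ matching may pass through a vertex $x$ carrying an $H'$-edge of precisely the ``unused'' colour you want to recycle, and your recolouring is then improper at $x$.

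Moreover, this gap cannot be closed by any proof, because the lemma as stated is false. Take $G=K_3$ on $\{x_1,x_2,w\}$, $X=\{x_1,x_2\}$, $l=3$, $p=2$, and let $H'$ consist of the edge $x_1x_2$ with $\phi'(x_1x_2)=1$, so $p_0=1$. With $S=\{x_1,x_2\}=N(w)$ we have $|\Gamma(S)|=1$, and $\T_a=\langle\{x_1\};S,\emptyset\rangle$, $\T_b=\langle\{x_2\};S,\emptyset\rangle$ are valid types by Definition~\ref{def:type}. Setting $\asg(Y_{\T_a,0})=\asg(Y_{\T_b,0})=1$ and all other variables to $0$ satisfies {\sf ConstSetI} ($1\le p-\deg_{H'}(x_i)=1$), {\sf ConstSetII} (only the $\alpha=1$ variables are constrained, and they are $0$), {\sf ConstSetIII} ($2\le p\cdot\texttt{false\_twins}(w)=2$), {\sf ConstSetIV} ($0\le 1$) and {\sf ConstSetV} ($2\le p$), with objective value $2=l-|E(H')|$; yet $(K_3,3,2)$ is a \no\ instance, since the only subgraph with three edges is $K_3$ itself and $\chi'(K_3)=3$. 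So the hypothesis of the lemma holds while its conclusion fails. The correct repair is to amend the ILP rather than the argument: replace {\sf ConstSetV} by $\sum_{\T\in\mathfrak{T}}Y_{\T,0}\le p-p_0$, so that together with {\sf ConstSetIV} the number of colour classes, $p_0+\sum_{\T}Y_{\T,0}$, is at most $p$. This strengthening keeps Lemma~\ref{lemma:forward-direction} true (a good solution has at most $p-p_0$ colour classes above $p_0$), it excludes the assignment above, and with it the naive fresh-colour scheme --- yours and the paper's --- is proper and uses at most $p$ colours, with no recolouring step needed at all.
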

 \begin{proof} 
 We first describe an algorithm, which given an assignment $\asg: {\sf Var}_{(H',\phi')} \rightarrow [p_0]'$ for $I_{(H', \phi')}$, such that the optimum value of objective functions is at least $(l - |E(H')|)$, constructs a solution $(H, \phi)$ for $(G, l, p)$. We will construct $(H, \phi)$, such that $(1)$ $E(H') \subseteq E(H)$, $(2)$ $\phi$ is a $p$-edge coloring of $H$ which has at least $l$ edges, and $(3)$ $\phi|_{E(H')}$ is identical to that of $\phi'$. For every variable $Y_{\T, \alpha} \in {\sf Var}_{(H',\phi')}$, such that $\asg(Y_{\T, \alpha}) \neq 0$, the algorithm will constructs a matching $M_\alpha$ with $|\T|$ edges. At each step, the edges in $M_\alpha$ are added to $H$, and $\phi$ assigns the color $\alpha$ to all the edges in $M_\alpha$. We will argue that by the end of this process, the number of edges in $H$ is at least $l$.
Recall that there is a fixed ordering on vertices in $X$ and $W$. 
and for a subset $S_i$ of $X$, $\Gamma(S_i)$ denotes the collection of vertices in $W$ whose neighborhood in $G$ is exactly $S_i$. 
We say that $\Gamma(S_i)$ is \emph{$H$-degree balanced set} if for any two vertices $w_1, w_2$ in the set, $\text{deg}_H(w_1)$ and $\text{deg}_H(w_2)$ differs by at most one. 

\vspace{0.3cm}
\noindent \textbf{Algorithm to construct $(H, \phi)$ :} Initialize $V(H) = V(G)$, $E(H) = E(H')$, and $\phi|_{E(H')} = \phi'$. 
Consider $\alpha > 0$ and a type $\T = \langle X'; S_1, S_2, \dots, S_{|X|}\rangle$ for which $\asg(Y_{\T, \alpha}) \neq 0$.
For the sake of clarity, assume $X' = \{x_1, x_2, \dots, x_{|X'|}\}$.
The algorithm constructs a matching $M_{\T,\alpha}$, of type $\T$ in the following way.
Initialize $M_{\T,\alpha} = \emptyset$.
For $i$ in $\{1, 2, \dots, |X'|\}$, let $w^i$ be a vertex in $\Gamma(S_i)$ such that $(a)$ no edge incident on $w^i$ has already been added to $M_{\T,\alpha}$, $(b)$ degree of $w^i$ in $H$ is at most $p-1$, and $(c)$ $\Gamma(S_i)$ remains $H$-degree-balanced after increasing degree of $w^i$ by one.
If there are more than one vertices that satisfy these properties, select the lowest indexed vertex as $w^i$.
Add edge the edge $x_iw^i$ to $M_{\T,\alpha}$ before moving to next value of $i$. This completes the construction of $M_{\T,\alpha}$. 
Add all the edges in $M_{\T,\alpha}$ to $H$ and assign $\phi(e) = \alpha$, for every edge $e$ in $M$. 
%Construct and add a matching corresponding to every variable $Y_{\T, \alpha}$ which is assigned a non-zero value and for which $\alpha > 0$ before moving to the next phase.

Now we will consider variables $Y_{\T, 0}$, such that $\asg(Y_{\T, 0}) \neq 0$, for $\T \in \mathfrak{T}$. Set $\alpha_0 := p_0$. Consider $\T \in \mathfrak{T}$, and let $\asg(Y_{\T, 0}) = {\sf a}_\T$. 
For each (increasing) $\beta \in [{\sf a}_\T]$, do the following. We construct a matching $M'_{\T,\beta}$ of type $\T$ similar to the one that we diiscussed earlier. That is, initialize $M'_{\T,\beta} = \emptyset$.
For $i$ in $\{1, 2, \dots, |X'|\}$, let $w^i$ be a vertex in $\Gamma(S_i)$ such that $(a)$ no edge incident on $w^i$ has already been added to $M_{\T,\beta}$, $(b)$ degree of $w^i$ in $H$ is at most $p-1$, and $(c)$ $\Gamma(S_i)$ remains $H$-degree-balanced after increasing degree of $w^i$ by one.
If there are more than one vertices that satisfy these properties, select the lowest indexed vertex as $w^i$. Add all the edges in $M'_{\T,\beta}$ to $H$, set $\phi(e) = \alpha_0 + \beta$, and move to the next choice of $\beta$ (if it exists). This completes the description of the algorithm. 
%\vspace{0.3cm}

It is clear from the description of the algorithm that it can be executed in polynomial time. We next argue that: i) for each $\alpha \in [p]$, the algorithm constructs $M_{\T,\alpha}$ with $|\T|$ edges, for whenever $\asg(Y_{\T,\alpha}) = 1$, and ii) for each $\beta \in [{\sf a}_\T]$ the algorithm constructs $M_{\T,\beta}$ with $|\T|$ edges, whenever $\asg(Y_{\T, 0}) \neq 0$. We argue only the first statement, the proof of the second statement can be obtained by following similar arguments. For the sake of contradiction, assume that there is $\alpha \in [p]$ and $\T = \langle X'; S_1, S_2, \dots, S_{|X|}\rangle$, for which the algorithm could not construct $M_{\T,\alpha}$ of size $\T$, or in other words, by construction, the algorithm could not construct $M_{\T,\alpha}$ of type $\T$. In the above, we consider the lowest iteration under which $(\alpha,\T)$ was under consideration and $M_{\T,\alpha}$ of type $\T$ could not be constructed. Thus, for some $x_i \in X'$, for every vertex $w \in \Gamma(S_i)$ at least one of the following holds: $(a)$ $w$ has an edge incident on it which has already been added to $M_{\T,\alpha}$, $(b)$ $w$ has degree exactly $p$ in $H$, or $(c)$ $\Gamma(S_i)$ does not remains an $H$-degree-balanced after increasing the degree of $w$ in $H$ by one.
We consider following two exhaustive cases: Case~$(1)$ There exists a vertex in $\Gamma(S_i)$ whose degree in $H$ is $p$.
Case~$(2)$ Every vertex in $\Gamma(S_i)$ has degree at most $(p - 1)$ in $H$. We argue that Case~$(1)$ leads to the contradiction that the constraints in {\sf ConstSetIII} is satisfied. We argue that in Case~$(2)$, $\T$ is not a type, again leading to a contradiction. %$ \T$ does not satisfy the second property mentioned in its definition which again leads to a contradiction.

Consider Case~$(1)$. Since the algorithm failed for the first time, $\Gamma(S_i)$ is an $H$-degree balanced set and each vertex in it has a degree at most $p$, before the algorithm started processing for the iteration for $\alpha$ and $\T$. As $\Gamma(S_i)$ at the current processing contains a vertex of degree $p$ in $H$, every vertex in it has degree either $(p - 1)$ or $p$ in $H$. Suppose there are $n_0$ vertices in $\Gamma(S)$ which have degree $(p - 1)$. 
Let $w$ be a vertex in $\Gamma(S)$. By definition, we have $\texttt{false\_twins}(w) = |\Gamma(S)|$.
Let $L$ denote the summation on the left hand side of the constraint of type~(\ref{constraint:ind-side-vetex-capacity}) corresponding to $w$. Let $L'$ be the summation of $\asg(Y_{\T', \alpha})$ which have already been processed by the algorithm. Note that $L \ge L' + \asg(Y_{\T, \alpha}) \cdot \texttt{nr\_nbr\_present}(w, \T)$.
Recall that for $ \T' \in \mathfrak{T}$ and integer $\alpha'$ if both $\asg(Y_{\T', \alpha'})$ and $\texttt{nr\_nbr\_present}(w, \T')$ are non-zeros, the algorithm adds $\asg(Y_{\T', \alpha'}) \cdot \texttt{nr\_nbr\_present}(w, \T')$ many edges incident on vertices in $\Gamma(S_i)$.
Since there are $n_0$ vertices with degree $(p - 1)$ and $\Gamma(S_i)$ many vertices of degree $p$ in $H$, there are $p \cdot (|\Gamma(S_i)| - n_0) + (p - 1) \cdot n_0$ many edges incident on vertices in $\Gamma(S_i)$. As $\Gamma(S_i)$ is a subset of $W$, which is an independent set in $G$ and hence in $H$, all these edges are across $X, W$. Hence, algorithm has added $L' = p \cdot |\Gamma(S_i)| + n_0\cdot (p - 1)$ many edges to $H$ before it starts processing at $ \alpha$, $\T$. Since $n_0$ many edges, each of which incident on vertices in $\Gamma(S_i)$ which has degree $(p - 1)$, is not sufficient to construct a matching of $ \T$, we have $\asg(Y_{\T, \alpha}) \cdot \texttt{nr\_nbr\_present}(w, \T) \ge n_0 + 1$.
This implies $L \ge L' + \asg(Y_{\T, \alpha}) \cdot \texttt{nr\_nbr\_present}(w, \T) \ge p \cdot (|\Gamma(S_i)| - n_0) + (p - 1) \cdot n_0 + n_0 + 1 \ge p \cdot |\Gamma(S_i)| + 1$. As $|\Gamma(S_i)| = \texttt{false\_twins}(w)$, this contradicts the fact that constraint in {\sf ConstSetIII} are satisfied.% type (\ref{constraint:ind-side-vetex-capacity}) corresponding to $w$ is satisfied. 

Consider Case~$(2)$.
Before the algorithm starts processing $\alpha$, $\T$,  
every vertex has degree at most $(p - 1)$ and $\Gamma(S_i)$ is an $H$-degree balanced set. The algorithm can select one edge incident on every vertex in $\Gamma(S_i)$ to add it to the matching. 
Note that the algorithm selects an edge incident on vertices in $\Gamma(S_i)$ if and only if the entry in the tuple is $S_i$.
Since the algorithm failed in this case, we can conclude that
$S_i$ appears at least $|\Gamma(S_i)| + 1$ many times from second place onward in $\T$. This contradicts the second property mentioned in Definition~\ref{def:type}.

As discussed in the previous two paragraphs, both Case~$(1)$ and Case~$(2)$ lead to contradictions. Hence our assumption that the algorithm is not able to construct a matching at certain steps is wrong. This implies the algorithm will always return $(H, \phi)$.

%We later argue that $\phi$ is an edge coloring of $H$, where $\phi$ uses at most $k$ colors, and $H$ has at least $l$ edges. 

We now argue that $\phi$ is an edge coloring of $H$. Consider an arbitrary vertex $x$ in $X$. Consider an edge $x\hat{x}$ in $H'$ which is incident on $x$. By the construction of {\sf ConstSetII}, for any $\T$ if $\texttt{is\_present}(x, \T) = 1$, then $\asg(Y_{\T, \phi'(x\hat{x})})=0$. Hence, at no step the algorithm modifies $\phi$ in a way that it assign color $\phi(x\hat{x})$ to a newly added edge which is incident on $x$.
Moreover, by the constraints in {\sf ConstSetIV}, for $\alpha > 0$ and $\T$ if $\asg(Y_{\T, \alpha})\neq 0$, then for any other $\T' \in \mathfrak{T}$, $\asg(Y_{\T', \alpha}) = 0$. Hence, the algorithm does not add more that one edge of color $\alpha$ on any vertex in $x$.
Consider an arbitrary vertex $w$ in $W$. At the start of the process, there is no edge incident on $w$.
At any stage, the algorithm adds at most one edge to $H$ and assigns it a color that has not been used previously and will not be used later. Hence, every edge incident on $w$ has been assigned to a different color. As $x, w$ are arbitrary vertices in $X, W$, respectively, we can conclude that $\phi$ is an edge coloring of $H$. 

We argue that $\phi$ uses at most $p$ colors. 
Consider a vertex $x$ in $X$. Because of the constraints in {\sf ConstSetI}, the algorithm adds at most $p - \text{deg}_{H'}(x)$ many edges incident on $x$. Hence there are at most $p$ edges incident on any vertices in $X$.
Consider a vertex $w$ in $W$. As mentioned before, there are no edges incident on $w$ at the start of the process. By the constraint in {\sf ConstSetV}, the process creates at most $p$ matchings. Hence there are at most $p$ many edges incident on $w$.
As $x, w$ are arbitrary vertices in $X, W$, respectively, we can conclude that $\phi$ is a $p$-edge coloring of $H$. 

The algorithm adds $Y_{\T, \alpha} \cdot |\T|$ many edges for every variable which has non-zero value. Since the objective function is at least $(l - |E(H')|)$, we can conclude that $H$ has at least $l$ edges. This concludes the proof of the lemma.
\qed\end{proof}

We are now in a position to state the main result of this section.

%\begin{theorem} \label{thm:fpt-vc}
%\maxedgecoloring, parameterized by the vertex cover number of an input graph, is \FPT.
%\end{theorem}
\begin{proof}[Proof of Theorem~\ref{thm:fpt-vc}]
We prove that there exists an algorithm which given a graph $G$ on $n$ vertices and integers $l, p$ as input either outputs a subgraph $H$ of $G$ such that $H$ is $p$-edge colorable and has at least $l$ edges, or correctly concludes that no such subgraph exists. Moreover, the algorithm terminates in time $f(\vc(G)) \cdot n^{\calO(1)}$, where $f(\vc(G))$ is some computable function which depends only on $\vc(G)$.

  We argue that the algorithm described in this section satisfy desired properties. The correctness of the algorithm is implied by Lemma~\ref{lemma:forward-direction} and Lemma~\ref{lemma:backward-direction}. We now argue that the algorithm runs in \FPT\ time.
  The algorithm computes an optimum vertex cover $X$ in time $2^{\vc(G)} \cdot |V(G)|^{\calO(1)}$.
It then enumerates all tuples of type $(H', \phi')$ where $H'$ is a subgraph of $G[X]$ and $\phi'$ is a $p_0$-edge coloring of $H'$ for some $p_0 \le p$. There are $2^{\calO(|X|^2)}$ many possible choices for $H'$. Recall that the algorithm only considers $\phi'$ in which for every $j$ in $\{1, 2, \dots, p_0\}$ there is an edge $x\hat{x}$ in $E(H')$ such that $\phi'(x\hat{x}) = j$. Hence $p_0 \le |X|^2$. This implies that the total number of choices for $\phi$ is at most $|X|^{\calO( |X|^2)}$. Hence the algorithm creates at most $2^{\calO(|X|^2 \log |X|)}$ many instances of \ilp. 

The algorithm uses Proposition~\ref{prop:ilp} to solve each instance of \ilp. To bound the time taken for this step, we bound the number of variables in each instance of \ilp.
As mentioned earlier, the number of different types is at most $2^{|X|} \cdot 2^{|X|^2} \in 2^{\calO(|X|^2)}$ and all of them can be enumerated in time $2^{\calO(|X|^2)} \cdot n^{\calO(1)}$. Since, $\alpha$ can have at most $p_0 + 1 \le \calO(|X|^2)$ different values, every instance has $2^{\calO(|X|^2)}$ many variables. By construction, upper bounds on the absolute value a variable can take in a solution and the largest absolute value of a coefficient used is linearly bounded by $n$. By Proposition~\ref{prop:ilp}, this instance can be solved in time $2^{2^{{\calO(|X|^2)}}}\cdot n^{\calO(1)}$. Hence the algorithm terminates in time \FPT\ in $\vc(G)$ which concludes the proof. 
\qed\end{proof}
\section{An \FPT\ Algorithm Parameterized by the Number of Edges in a Desired Subgraph}
\label{sec:fpt-max-edges}

In this section, we prove Theorem~\ref{thm:fpt-edges-colors}. We say a randomized algorithm $\calB$ \emph{solves} \maxedgecoloring\ problem with constant probability of success if given an instance $(G, l, p)$ such that $G$ contains a subgraph $H$ which is $p$-edge colorable and $|E(H)| \ge l$, the algorithm returns a solution with constant probability.
%We say an algorithm \emph{solves} \maxedgecoloring\ problem if it takes an instance $(G, l, p)$ of the problem as an input and either finds a subgraph $H$ of $G$ and its $p$-edge coloring such that $|E(H)| \ge l$, or correctly concludes that no such subgraph exists.
Our first algorithm uses the technique of color-coding combined with divide and color introduced in \cite{chen2009randomized}. 
We present a randomized version of this algorithm which can be de-randomized using standard techniques (see for example~\cite{CFKLMPPS15}). For the second algorithm, we reduce a given instance of \maxedgecoloring\ to an equivalent instance of \textsc{Rainbow Matching}.
This reduction along with the known algorithm for the later problem results in a different randomized \FPT\ algorithm for \maxedgecoloring, with improved running time.

\subsection{A Deterministic \FPT\ Algorithm}
\label{sub-sec:fpt-edges-color}
\DeclarePairedDelimiter\ceil{\lceil}{\rceil}
\DeclarePairedDelimiter\floor{\lfloor}{\rfloor}

Given an instance $(G,l,p)$ of \maxedgecoloring\ problem, we can assume $l \equiv 0 \pmod{p}$. 
If it is not the case, then let $l \equiv r \pmod{p}$ for some $r \in [p-1]$. 
We create another instance $(G', l'=l+(p-r), p)$ where $G'$ is the graph obtained obtained by adding $(p - r)$ isolated edges. 
Formally,  $V(G')=V(G)\cup \{x_1,x_2,\dots,x_{2(p-r)}\}$ and $E(G')= E(G) \cup \{ x_{2i-1}x_{2i}|\ i \in \{1, 2, \dots, p - r\}\}$.
It is easy to see that $(G', l + (p - r), p)$ is a \yes\ instance if and only if $(G, l, p)$ is a \yes\ instance.
By  Lemma~\ref{lemma:balanced-edge-coloring},  if $(G',l',p)$ is a \yes\ instance of \maxedgecoloring\ problem, then there is a $p$-edge-coloring of $G'$ where exactly $q=l'/p$ edges are colored by every color.
Hence, in the remaining section, we assume that for a given instance $(G, l, p)$, we have $l \equiv 0 \pmod{p}$.

We present a randomized recursive algorithm (Algorithm~\ref{alog:faster-randomized-algo}) to solve the problem and later specify how to de-randomize it. The central idea is to partition the edge set into two parts such that one part contains all the solution edges colored by 
 the first $\floor{a/2}$ colors and the other part contains all the solution edges colored by the remaining  $\ceil{a/2}$ colors.
 We determine the answer to these subproblems recursively and use them to return the answer to the original problem. 
To formalize these ideas,  we define the term ${D}_{(a,q)}[X]$ for $a\in \mathbb{N}$, $X\subseteq E(G')$, where ${D}_{(a,q)}[X]$  is \true\ if and only if there are $a$ edge-disjoint matchings, each of size $q$, in $X$.
 Instead of computing ${D}_{(a,q)}[X]$, the algorithm computes  ${D}^\star_{(a,q)}[X]$.
 The relationship between these terms is as follows:  if ${D^\star}_{(a,q)}[X]$ is \true\ then ${D}_{(a,q)}[X]$ is always \true, but if ${D}_{(a,q)}[X]$ is \true\  then ${D^\star}_{(a,q)}[X]$ is \true\ only with sufficiently high probability.
Thus, we get a one-sided error Monte Carlo algorithm.
We boost the success probability of correct partitions by repeating the partitioning process many times, to achieve constant success probability.
%This improves the success probability and consequently the run time of our algorithm. 
We note that the fact that each color class contains exactly $q$ many edges ensures that at each partitioning step, two parts contain an almost equal number of edges.
This fact plays a crucial role while calculating the probability of success and the run time of the algorithm.

\begin{algorithm}[t]
	%\SetAlgoLined
	\SetKwInOut{Input}{Input}
 	\SetKwInOut{Output}{Output}
 	%\SetKwInOut{Output}{Repeat}
	\Input{A subset $X\subseteq E(G)$, integers $1\leq a\leq p$ and $q$. }
	\Output{ $D_{a, q}[X]$ ($D_{a, q}[X]$ is \true\ if and only if there are $a$ edge disjoint matchings, each of size $q$, in $G[X]$)}
	\If{$a == 1$}{
		\Return \true\ if there is a matching of size $q$ in $G[X]$, and otherwise \false.
	}
	${D^\star}_{(a,q)}[X] = $ \false\;
	\For{$2^{aq}\log{(4l)}$ many times}{
	 Partition $X$ into $L \uplus R$ uniformly at random\;
	${D^\star}_{(\floor{a/2},q)}[L]=$ Faster-Randomized-Algorithm($L, \floor{a/2}, q$)\;
	${D^\star}_{(\ceil{a/2},q)}[R]=$ Faster-Randomized-Algorithm($R, \ceil{a/2}, q$)\;
	\If{${D^\star}_{(a,q)}[X]==$  \false}{
		${D^\star}_{(a,q)}[X]={D^\star}_{(\floor{a/2},q)}[L] \wedge {D^\star}_{(\ceil{a/2},q)}[R]$\;
	}
	}
	\Return ${D^\star}_{(a,q)}[X]$
 \caption{Faster-Randomized-Algorithm$(X, a, q)$ \label{alog:faster-randomized-algo}}
\end{algorithm}

\begin{lemma} \label{lemma:fpt-edges-colors-random-1}
There exists a randomized algorithm that given $(G, l, p)$ either finds a subgraph $H$ of $G$ and its $p$-edge coloring such that $|E(H)| \ge l$, or correctly concludes that no such subgraph exists in time $\calO^*(4^{l + o(l + p)})$.
Moreover, if such a subgraph exists in $G$, then the algorithm returns it with constant probability. 
\end{lemma}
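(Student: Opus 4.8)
My plan is to reduce the decision question to one about edge-disjoint matchings, then verify that Algorithm~\ref{alog:faster-randomized-algo} has one-sided error, a constant success probability, and the claimed running time, and finally to make it constructive. After the padding step we may assume $l' \equiv 0 \pmod p$ and set $q = l'/p$. By Observation~\ref{obs:color-to-matching} together with Lemma~\ref{lemma:balanced-edge-coloring}, the instance $(G', l', p)$ is a \yes-instance if and only if $E(G')$ contains $p$ pairwise edge-disjoint matchings each of size exactly $q$, i.e. if and only if $D_{(p,q)}[E(G')] = \true$. The algorithm instead evaluates $D^\star_{(p,q)}[E(G')]$, so the heart of the proof is to relate $D^\star$ and $D$.

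First I would prove, by induction on $a$, that $D^\star_{(a,q)}[X] = \true$ implies $D_{(a,q)}[X] = \true$. The base case $a = 1$ holds because the algorithm explicitly checks for a size-$q$ matching. For $a > 1$, the value $D^\star_{(a,q)}[X]$ is set to \true\ only when, for some random partition $X = L \uplus R$, both recursive calls return \true; by the induction hypothesis $L$ contains $\lfloor a/2\rfloor$ and $R$ contains $\lceil a/2\rceil$ edge-disjoint matchings of size $q$, and since $L$ and $R$ are edge-disjoint their union supplies $a$ such matchings in $X$. This direction guarantees that the algorithm never certifies a solution on a \no-instance, i.e.\ its error is one-sided, which is exactly what ``correctly concludes no such subgraph exists'' requires.

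The crux is the reverse, probabilistic direction. I would fix an arbitrary solution; it induces a target recursion tree with $p$ leaves (one per color class) and $p-1$ internal nodes. Let $\beta(a)$ denote the probability that a call on a yes-subinstance with parameter $a$ returns \true. At an internal node carrying $a$ matchings (hence $aq$ solution edges), a uniformly random partition is \emph{good} --- it sends the $\lfloor a/2\rfloor$ left-target matchings to $L$ and the remaining ones to $R$ --- with probability at least $2^{-aq}$, and, conditioned on a good partition, the two recursive calls succeed independently with probabilities at least $\beta(\lfloor a/2\rfloor)$ and $\beta(\lceil a/2\rceil)$. Since the $N = 2^{aq}\log(4l)$ iterations are independent,
$$\beta(a) \ \ge\ 1 - \bigl(1 - 2^{-aq}\,\beta(\lfloor a/2\rfloor)\,\beta(\lceil a/2\rceil)\bigr)^{N} \ \ge\ 1 - (4l)^{-\beta(\lfloor a/2\rfloor)\,\beta(\lceil a/2\rceil)}.$$
Starting from $\beta(1) = 1$, I would close an induction showing $\beta(a) \ge 1/2$ for all relevant $a$ whenever $l$ exceeds a small absolute constant: if both children satisfy $\beta \ge 1/2$, the exponent above is at least $1/4$ and $(4l)^{-1/4} \le 1/2$, so $\beta(a) \ge 1/2$ as well; the finitely many small-$l$ instances are solved by brute force. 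Thus on a \yes-instance the root returns \true\ with probability at least $1/2$.

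For the running time I would solve $W(a) = 2^{aq}\log(4l)\,(W(\lfloor a/2\rfloor) + W(\lceil a/2\rceil))$ with $W(1)$ polynomial (a single maximum-matching computation via Micali--Vazirani~\cite{micali1980v}). Unrolling over the $O(\log p)$ levels, the repetition exponents along any root-to-leaf path form the geometric series $l' + l'/2 + l'/4 + \cdots \le 2l'$, so $W(p) = \calO^*(4^{l'})$, the residual $(\log 4l)^{O(\log p)}$ and rounding contributions being subexponential; since padding adds fewer than $p$ edges, $l' \le l + p$, which matches $\calO^*(4^{l + o(l+p)})$. To output a subgraph rather than a bit, I would augment the recursion to store at each successful leaf the size-$q$ matching it found and to assign the $p$ leaves distinct colors; the stored matchings are edge-disjoint by construction, and discarding the padding edges yields a subgraph of $G$ with at least $l$ edges together with its $p$-edge-coloring at polynomial overhead. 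I expect the probability analysis above to be the main obstacle: one must certify that the product of good-partition and recursive-success events stays bounded below by a constant across all $\Theta(\log p)$ levels, which is precisely why the $\log(4l)$ boosting factor is built into the repetition count and why the balanced-coloring reduction (equal class sizes $q$) is needed so that the two sides of every partition are of comparable size.
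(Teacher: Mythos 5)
Your proposal is correct and follows essentially the same route as the paper: the same padding to make $l$ divisible by $p$, the same reduction via Observation~\ref{obs:color-to-matching} and Lemma~\ref{lemma:balanced-edge-coloring} to $p$ edge-disjoint matchings of size exactly $q$, the same one-sided-error induction on $a$, and the same recurrence for the running time of Algorithm~\ref{alog:faster-randomized-algo}. The only (minor) difference is in how the success-probability recurrence is closed --- you maintain $\beta(a)\ge 1/2$ at every node of the recursion via $1-(4l)^{-1/4}\ge 1/2$ (handling tiny $l$ by brute force), whereas the paper multiplies the per-node factor $\bigl(1-\tfrac{1}{2(l-1)}\bigr)$ across the at most $p-1\le l-1$ internal nodes --- but this is the same boosting argument in slightly different bookkeeping.
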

\begin{proof}
Given an instance $(G, l, p)$, the algorithm does the necessary modifications (as mentioned in the starting of this sub-section) to ensure that $l \equiv 0 \pmod{p}$.
It then runs Algorithm~\ref{alog:faster-randomized-algo} with $X = E(G)$, $a = p$, and $q = l/p$ as input. 
If Algorithm~\ref{alog:faster-randomized-algo} return \true\ then the algorithm returns \yes\ otherwise it returns \no.
It is easy to modify Algorithm~\ref{alog:faster-randomized-algo}, and hence the algorithm, to ensure that the algorithm returns a set of edges (and its coloring) instead of returning \true.

We argue the correctness of the algorithm using induction on $a$.
The base case occurs when $a = 1$.
It is easy to see that in this case the algorithm correctly concludes the value of $D_{(a, q)}[E(G)]$.
Assume that the algorithm is correct for all values of $a$ that are strictly less than $p'$, for some $2 \leq p' < p$. The algorithm returns \yes\ for input $(G, l, p)$ only if Algorithm~\ref{alog:faster-randomized-algo} has concluded $D_{(p, q)}[E(G)] = $ \true.
In this case, there exists a partition $L \uplus R$ of $E(G)$ such that ${D^\star}_{(\floor{p/2},q)}[L]$ and ${D^\star}_{(\ceil{p/2},q)}[R]$ are set to \true.
By induction hypothesis, there exists $\floor{p/2}, \ceil{p/2}$  many edge-disjoint matchings, each containing $q$ edges, in $G[L]$ and $G[R]$, respectively.
This implies there are $p$ edge-disjoint matchings each containing $q$ edges.
By Observation~\ref{obs:color-to-matching},  $(G, l, p)$ is a \yes\ instance.

It remains to argue that given a \yes\ instance $(G, l, p)$, the algorithm returns \yes\ with constant probability. 
Let $\mathcal{E}(a)$ denote the event that ${D}_{(a,q)}[X]=\true$ implies ${D^\star}_{(a,q)}[X]=\true$.
Notice that $\mathcal{E}(p)$ is exactly the event where our algorithm succeeds i.e. correctly determines ${D}_{(p,q)}[E(G)]$. 
We present a lower bound on $Pr(\mathcal{E}(a))$ using  following a recurrence equation. 
We say the algorithm correctly partitions the solution edges if $L$ and $R$ contain the solution edges colored with first $\floor{a/2}$ colors and remaining  $[\ceil{a/2}]$ colors, respectively. 
The probability of success for the event depends on the following two independent events -- $(i)$ the algorithm correctly partitions in at least one of the $2^{aq}\log{(4l)}$ rounds, and $(ii)$ the values ${D^\star}_{(\floor{a/2},q)}[L]$ and ${D^\star}_{(\ceil{a/2},q}[R]$ are computed correctly. 
The probability of a partition $(L,R)$ failing to divide the solution edges ($aq$ many)  correctly in any of the rounds can be upper bounded following expression:
$$\Big(1- \frac{1}{2^{aq}}\Big)^{2^{aq} \cdot \log{(4l)}} \leq \frac{1}{2(l-1)}$$
Hence, we have the following recurrence equation:
$$Pr(\mathcal{E}(a)) \geq \Big(1-\frac{1}{2(l-1)}\Big) \cdot Pr(\mathcal{E}(\floor{a/2})) \cdot Pr(\mathcal{E}(\ceil{a/2}))$$
with $Pr(\mathcal{E}(a))=1$, when $a=1$.
The base case of the recurrence equation follows directly from the algorithm. 
The above recurrence implies $Pr(\mathcal{E}(p))\geq 1/2$ i.e. given a \yes\ instance, the algorithm returns \yes\ with probability at least $1/2$.

The runtime of the algorithm is given by the following set of recurrence equations: $T(a)=2^{aq} \log(4l) \cdot (T(\floor{a/2})+T(\ceil{a/2}))$, where $T(1)=|V(G)|^{\mathcal{O}(1)}$.
This recurrence equation solve to $ T(p)\leq 4^{l+o(l+p)}|V(G)|^{\mathcal{O}(1)}$ which gives us the running time of our algorithm.
\qed
\end{proof}

We note that the algorithm mentioned in Lemma~\ref{lemma:fpt-edges-colors-random-1} can be de-randomized using $(E(G), l)$-perfect hash families \cite{naor1995splitters}.

\subsection{A Randomized \FPT\ Algorithm}
\label{sub-sec:fpt-edges-rainbow}

In this subsection, we present a randomized \FPT\ algorithm running in time $2^l \cdot |V(G)|^{\calO(1)}$ by reducing a given instance of \maxedgecoloring\ to an instance of \textsc{Rainbow Matching}.
In \textsc{Rainbow Matching} problem, the input is an edge-labeled graph $G'$ and a positive integer $k$ and the objective is to determine whether there exists a matching of size at least $k$ such that all the edges in the matching have distinct labels.
Such matching is called as \emph{rainbow matching}.
We use the following known result.
\begin{proposition}[Theorem~$2$ in \cite{gupta2019parameterized}] \label{prop:rainbow-coloring} There exists a randomized algorithm that, given a \textsc{Rainbow Matching} instance $(G', k)$, in time $2^k \cdot |V(G')|^{\calO(1)}$ either reports a failure or finds a rainbow matching.
Moreover, if the algorithm is given a \yes\ instance, it returns a rainbow matching with constant probability. 
\end{proposition}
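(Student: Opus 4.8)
The plan is to prove this via the algebraic technique of multilinear-monomial detection, exploiting that the two defining requirements of a rainbow matching have very different character: vertex-disjointness (the \emph{matching} part) is polynomial-time checkable on its own, so it can be captured \emph{exactly} by a Pfaffian, whereas distinctness of labels (the \emph{rainbow} part) is the genuinely hard constraint and is precisely what the $2^k$ factor will pay for. First I would reduce the search problem to a decision (detection) problem by standard self-reduction: fixing edges one at a time and, for a candidate edge $e=uv$ with label $c$, asking whether there is a rainbow matching of size $k-1$ in $G'-\{u,v\}$ using labels other than $c$; this uses polynomially many detection calls, and since a rainbow matching of size at least $k$ contains one of size exactly $k$, it suffices to detect a rainbow matching of size \emph{exactly} $k$.

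For the detection step I would build a single algebraic object over the group algebra $R=\mathbb{F}_{2^{\ell}}[\mathbb{Z}_2^{k}]$. Denote by $e_{g}$ the basis element of $R$ corresponding to $g\in\mathbb{Z}_2^{k}$, so that $e_{\mathbf 0}$ is the identity. For each label $c$ draw a uniform vector $\mathbf v_c\in\mathbb{Z}_2^{k}$ and set $g_c:=e_{\mathbf 0}+e_{\mathbf v_c}$; for each edge $e$ draw a uniform $s_e\in\mathbb{F}_{2^{\ell}}$. Form the $|V(G')|\times|V(G')|$ alternating matrix $T$ (over characteristic $2$) with zero diagonal and $T_{uv}=s_{uv}\,g_{c(uv)}$ for $uv\in E(G')$, and compute
\[
W \;=\; \sum_{S\subseteq V(G'),\ |S|=2k}\mathrm{Pf}\bigl(T[S]\bigr)\ \in\ R .
\]
Each $k$-matching $M$ contributes $\bigl(\prod_{e\in M}s_e\bigr)\prod_{e\in M} g_{c(e)}$, and every surviving Pfaffian term is automatically a matching, so vertex-disjointness is enforced for free. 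The product $\prod_{e\in M}g_{c(e)}=\prod_{e\in M}(e_{\mathbf 0}+e_{\mathbf v_{c(e)}})$ collapses to $0$ whenever a label repeats, because in characteristic $2$ one has $(e_{\mathbf 0}+e_{\mathbf v})^2=e_{\mathbf 0}+e_{2\mathbf v}=e_{\mathbf 0}+e_{\mathbf 0}=0$; hence only \emph{rainbow} matchings can contribute to $W$. The sum $W$ of principal Pfaffian minors of a fixed order $2k$ is computable with $\mathrm{poly}(|V(G')|)$ ring operations (for instance by reading off a suitable coefficient of the Pfaffian of a one-parameter perturbation of $T$), and each operation in $R$ costs $\calO(2^{k}\cdot k)$ via the fast Walsh--Hadamard transform, giving total time $2^{k}\cdot|V(G')|^{\calO(1)}$.

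It remains to argue that $W\neq 0$ with constant probability exactly when a rainbow $k$-matching exists, which is the technical heart. There are three cancellation phenomena to control, and the randomness is chosen to defeat each. Matchings with a repeated label vanish identically, as noted. Distinct rainbow matchings sharing the same label set could still cancel over characteristic $2$; treating the $s_e$ as formal variables makes their $s$-monomials distinct, so the coefficient of any fixed basis element $e_g$ is a nonzero polynomial in the $s_e$ whenever some rainbow matching contributes to $e_g$, and a random evaluation over a large enough $\mathbb{F}_{2^{\ell}}$ keeps it nonzero with high probability (Schwartz--Zippel). Finally, a rainbow matching $M$ actually reaches a nonzero basis element precisely when the $k$ vectors $\{\mathbf v_{c(e)}:e\in M\}$ are linearly independent, which happens with constant probability for uniformly random $\mathbf v_c$; this is what dictates the dimension $2^{k}$. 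Combining the three events gives a constant lower bound on $\Pr[W\neq 0]$ on \yes-instances, while on \no-instances $W=0$ deterministically, yielding the claimed one-sided error. Amplifying each detection call to inverse-polynomial error by repetition and union-bounding over the polynomially many calls of the self-reduction preserves a constant overall success probability and lets the algorithm report a concrete rainbow matching, reporting failure otherwise. The main obstacle is precisely this triple probability analysis together with the efficient evaluation of the Pfaffian-minor sum over the group algebra.
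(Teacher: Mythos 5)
This proposition is not proved in the paper at all: it is imported as a black box (Theorem~2 of \cite{gupta2019parameterized}) and used only in Section~4.2 to power the reduction from \maxedgecoloring\ to \textsc{Rainbow Matching}, so there is no in-paper proof to compare against. Your proposal is thus a from-scratch proof of the cited result, in the group-algebra sieving style of Bj\"orklund et al.\ and of Wahlstr\"om's Tutte-matrix arguments (which may or may not coincide with the route taken in the cited reference), and its core is sound: in characteristic $2$ the Pfaffian terms enforce vertex-disjointness for free; $(e_{\mathbf 0}+e_{\mathbf v})^2=0$ kills every matching with a repeated label; $k$ uniform vectors in $\mathbb{Z}_2^k$ are linearly independent with probability $\prod_{i=1}^{k}(1-2^{-i})>0.28$, a constant; and Schwartz--Zippel over the edge variables $s_e$ rules out cross-matching cancellation, giving one-sided error as claimed.

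Two implementation claims need repair, though both have standard fixes. First, multiplication in $\mathbb{F}_{2^{\ell}}[\mathbb{Z}_2^{k}]$ cannot be done ``via the fast Walsh--Hadamard transform'' in characteristic $2$: the characters of $\mathbb{Z}_2^k$ are trivial there and the Hadamard matrix satisfies $H^2=2^kI=0$, so it is not invertible. Instead pass through the isomorphism $x_i\mapsto 1+y_i$ onto $\mathbb{F}_{2^{\ell}}[y_1,\dots,y_k]/(y_1^2,\dots,y_k^2)$, where multiplication is exactly disjoint-subset (subset) convolution, computable in $\calO(2^k k^2)$ field operations since the ranked zeta and M\"obius transforms are unipotent triangular and hence valid over any commutative ring; the $2^k\cdot|V(G')|^{\calO(1)}$ bound survives. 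Second, the ``one-parameter perturbation'' for $\sum_{|S|=2k}\mathrm{Pf}(T[S])$ is wrong as the natural first guess: perturbing by $\lambda J$ with $J$ the standard symplectic pairing only sums over subsets whose complements are unions of $J$-pairs, by the expansion $\mathrm{Pf}(A+B)=\sum_{S}\mathrm{Pf}(A[S])\,\mathrm{Pf}(B[S^{c}])$ valid mod $2$. In characteristic $2$ take instead $B=\lambda U$ with $U$ the all-ones alternating matrix: $\mathrm{Pf}(U[S^c])=(|S^c|-1)!!\equiv 1 \pmod 2$ for every even subset, so the coefficient of $\lambda^{(n-2k)/2}$ in $\mathrm{Pf}(T+\lambda U)$ is exactly your $W$ (pad one isolated vertex if $n$ is odd, and recover the coefficient by interpolation over $\mathbb{F}_{2^{\ell}}$); alternatively pad with $n-2k$ dummy vertices carrying fresh random edge weights to prevent the $(n-2k)!$ parity cancellation. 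Since $R$ has zero divisors, the Pfaffian must also be computed division-free (e.g.\ \`a la Mahajan--Subramanya--Vinay), which still costs polynomially many ring operations. With these corrections your argument goes through and establishes the proposition.
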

We use `colors' for instances of \maxedgecoloring\ and `labels' for instances of \textsc{Rainbow Matching}. 

\vspace{0.3cm}
\noindent \textbf{Reduction :} Given an instance $(G, l, p)$ of \maxedgecoloring, the reduction algorithm returns an instance $(G', k)$ of \textsc{Rainbow Matching}.
To construct graph $G'$, the algorithm creates $p$ identical copies of $G$.
Formally, for every vertex $u$ in $V(G)$, it adds $p$ vertices $u_i$ for $i \in [p]$ in $V(G')$.
For every edge $uv$, it adds all the edges $u_iv_i$ for $i \in [p]$ in $E(G')$.
The algorithm arbitrary construct a one-to-one function $\psi' : E(G) \rightarrow \{1, 2, \dots, |E(G)|\}$ on edges in $G$.
It constructs an edge-labelling function $\psi$ for edges in $G'$ in the following way : for $i \in [p]$, assign $\psi(u_iv_i) = \psi'(uv)$.
Algorithm assigns $k = l$ and returns $(G', k)$.

\begin{lemma} \label{lemma:reduction-maxedgecolor-rm} Let $(G', k)$ be the instance returned by the reduction algorithm when input is $(G, l, p)$.
Then, $(G, l, p)$ is a \yes\ instance of \maxedgecoloring\ if and only if $(G', k)$ is a \yes\ instance of \textsc{Rainbow Matching}.
\end{lemma}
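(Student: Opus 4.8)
The plan is to prove the two directions of the equivalence by exhibiting explicit correspondences between a solution to \maxedgecoloring\ on $(G,l,p)$ and a rainbow matching of size $l$ in $(G',k)$. The key structural observation driving both directions is the reduction's design: the $p$ copies of $G$ give us $p$ ``layers'', and the labelling $\psi$ is constant across layers (i.e.\ $\psi(u_iv_i)=\psi'(uv)$ for all $i$), so that two edges of $G'$ share a label if and only if they are copies (in possibly different layers) of the same edge of $G$. Meanwhile, distinct labels correspond exactly to distinct edges of $G$, since $\psi'$ is injective. This is the bridge between the ``edge-disjointness'' requirement of the edge-colorable-subgraph formulation (via Observation~\ref{obs:color-to-matching}) and the ``distinct labels'' requirement of \textsc{Rainbow Matching}.

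For the forward direction, I would start from a \yes\ instance and invoke Observation~\ref{obs:color-to-matching} to obtain $p$ edge-disjoint matchings $M_1,\dots,M_p$ in $G$ with $|M_1\cup\cdots\cup M_p|\ge l$. The natural map is to place matching $M_i$ into the $i$-th layer of $G'$: for each edge $uv\in M_i$, select the copy $u_iv_i$. Let $M'$ be the union of these selected edges. I would then verify two things. First, $M'$ is a matching in $G'$: within a single layer this is immediate since $M_i$ is a matching, and across layers the vertex sets of different layers are disjoint by construction, so no conflicts arise. Second, $M'$ is rainbow: since the $M_i$ are edge-disjoint in $G$, the underlying $G$-edges of the selected copies are all distinct, and injectivity of $\psi'$ then forces the labels $\psi(u_iv_i)=\psi'(uv)$ to be pairwise distinct. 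Since $|M'|=\sum_i|M_i|\ge l=k$, we obtain a rainbow matching of the required size (truncating to exactly $k$ if desired).

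For the backward direction, I would start from a rainbow matching $M'$ in $G'$ of size $k=l$ and project it back to $G$: map each edge $u_iv_i\in M'$ to the edge $uv\in E(G)$, grouping the projected edges by their layer index $i$ into sets $M_1,\dots,M_p$. The rainbow property says the labels $\psi'(uv)$ are distinct, and injectivity of $\psi'$ then guarantees the projected edges are pairwise distinct edges of $G$, so $|M_1\cup\cdots\cup M_p|=|M'|=l$. It remains to check that each $M_i$ is a matching in $G$ and that they are edge-disjoint, so that Observation~\ref{obs:color-to-matching} yields a \yes\ instance. Each $M_i$ is a matching because the edges of $M'$ lying in layer $i$ form a matching in that (disjoint) copy, which projects isomorphically onto a matching in $G$. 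Edge-disjointness of the $M_i$ across layers follows precisely because the labels of all edges in $M'$ are distinct, so no single $G$-edge can be hit in two different layers.

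The main obstacle, though a mild one, is the backward direction's edge-disjointness claim: a priori a rainbow matching could use $u_iv_i$ in one layer and $u_jv_j$ in another, which would project to the same edge $uv$ of $G$ and violate edge-disjointness of the $M_i$. The rainbow condition is exactly what rules this out, since these two copies carry the identical label $\psi'(uv)$ and hence cannot both appear in a rainbow matching. I would therefore make sure to state this dependence explicitly, as it is the crux of why the constant-across-layers labelling is the correct gadget. Once this point is pinned down, the remaining verifications are routine bookkeeping about disjoint vertex copies and the injectivity of $\psi'$, and Observation~\ref{obs:color-to-matching} closes both directions.
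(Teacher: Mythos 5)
Your proposal is correct and follows essentially the same route as the paper's own proof: both directions map matchings layer-by-layer via Observation~\ref{obs:color-to-matching}, using the fact that equal labels occur exactly on copies of the same edge of $G$, and both use the rainbow condition to establish edge-disjointness in the backward direction (the paper phrases this as a proof by contradiction, you phrase it directly, but it is the same argument).
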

\begin{proof}
$(\Rightarrow)$ 
By Observation~\ref{obs:color-to-matching}, there are $p$ many edge disjoint matchings $M_1, M_2,$ $\dots, M_p$ in $G$ such that $|M_1 \cup M_2 \cup \cdots \cup M_p| \ge l$.
We construct a rainbow matching $M'$ in $G'$ in the following way:
For $i \in [p]$, if edge $uv \in E(G)$ is in $M_i$ then add $u_iv_i$ to $M'$.
By construction, $M'$ has at least $k = l$ edges.
Since $M_i$ is a matching, there is at most one edge in $M_i$ which is incident on any vertex in $V(G)$.
Hence, an edge $u_iv_i$ is added to $M'$ then no other edge incident on $u_i$ or $v_i$ is added to $M'$.
This implies $M'$ is matching in $G'$.
We now argue that all edges in $M'$ have distinct labels.
Note that the only edges in $G'$ which has same labels are  $u_iv_i$ and $u_jv_j$ for some $uv \in E(G)$ and $i, j \in [p]$.
Since matchings $M_1, M_2, \dots, M_p$ are edge disjoint, if an edge $uv$ is present in $M_i$ then it is not present in $M_j$ for any $j \in [p] \setminus \{i\}$.
Hence, all edges in $M'$ have distinct labels.
This implies $(G', k)$ is a \yes\ instance.

$(\Leftarrow)$
Let $M'$ be a matching in $G'$ such that $|M'| \ge  k$ and every edge in $M'$ has distinct label.
By construction, every edge in $E(G')$, and hence in $M'$, is of the form $u_iv_i$ for some $i \in [p]$ and $uv \in E(G)$.
We construct $p$ matchings $M_1, M_2, \dots, M_p$ in $G$ in the following way:
For $i \in [p]$, if edge $u_iv_i$ is in $M'$ then add $uv$ to $M_i$.
Since $M'$ is a matching, if edges $u_iv_i$ are in $M'$ then no other edge incident on $u_i$ or $v_i$ is in $M'$.
Hence, for every $i \in [p]$, set $M_i$ is a matching in $G$.
We now argue that these constructed matchings are edge disjoints.
Assume, for the sake of  a contradiction, that for some $i, j \in [p]$, matchings $M_i$ and $M_j$ intersect.
Let $uv$ be the edge in $M_i \cap M_j$.
The only reason edge $uv$ is added to $M_i$ and to $M_j$ is because edges $u_iv_i, u_jv_j$ are present in $M'$.
By construction, edges $u_iv_i, u_jv_j$ have same label.
This contradicts the fact that edges in $M'$ have distinct edges.
Hence our assumption is wrong and the matchings in $G$ are pairwise disjoint.
This fact, along with the construction, implies that $|M_1 \cup M_2 \cup \cdots \cup M_p| \ge l = k$. 
By Observation~\ref{obs:color-to-matching}, $(G, l, p)$ is a \yes\ instance.
\qed
\end{proof}

Proposition~\ref{prop:rainbow-coloring} and Lemma~\ref{lemma:reduction-maxedgecolor-rm} implies that there exists a randomized algorithm that given $(G, l, p)$ either finds a subgraph $H$ of $G$ and its $p$-edge coloring such that $|E(H)| \ge l$ or correctly concludes that no such subgraph exists in time $\calO^*(2^l)$.
Moreover, if such a subgraph exists in $G$, then the algorithm returns it with constant probability.
\section{Kernelization Algorithm}
\label{sec:kernel}

In this section, we prove that \maxedgecoloring\ admits a polynomial kernel when parameterized by the number of colors and $|X|$ where $X$ is a minimum sized \text{deg}-$1$-modulator.
As discussed in Section~\ref{sec:prelims}, such result implies that the problem admits a polynomial kernel when parameterized by the number of colors together with one of the following parameters:
$(1)$ the number of edges, $l$, in the desired subgraph, 
$(2)$ the vertex cover number of the input graph $\vc(G)$, and 
$(3)$ the above guarantee parameter $(l - \mm(G))$.
Our kernelization algorithm is based on the expansion lemma.

Consider an instance $(G, p, l)$ of \maxedgecoloring.
We assume that we are given a \text{deg}-$1$-modulator $X$ of $G$ (see Definition~\ref{def:deg-1-mod}).
We justify this assumption later and argue that one can find a \text{deg}-$1$-modulator which is close to a minimum sized \text{deg}-$1$-modulator in polynomial time.
We start with the following simple reduction rule.
\begin{reduction-rule}\label{rr:delete-isolated-components} If there exists a connected component $C$ of $G - X$ such that no vertex of $C$ is adjacent to a vertex in $X$, then delete all the vertices in $C$ and reduce $l$ by $|E(C)|$, i.e. return the instance $(G - V(C), l - |E(C)|, p)$.  
\end{reduction-rule}
\begin{lemma}
  \label{lemma:delete-isolated-components}
  Reduction Rule~\ref{rr:delete-isolated-components} is safe and given set $X$, it can be applied in polynomial time.
\end{lemma}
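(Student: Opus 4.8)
The plan is to prove the two assertions separately, handling safety at the level of edge-disjoint matchings via Observation~\ref{obs:color-to-matching} rather than reasoning directly about subgraphs and their colorings. The key structural observation is that, since $X$ is a \text{deg}-$1$-modulator, every vertex of $G - X$ has degree at most one, so each connected component of $G - X$ is either a single vertex or a single edge; in particular $|E(C)| \le 1$. Moreover, if no vertex of $C$ is adjacent to $X$, then no vertex of $C$ has any neighbour outside $C$ in $G$ at all: its neighbours in $G - X$ stay inside $C$ by maximality, and it has no neighbour in $X$ by hypothesis. Hence $C$ is in fact a connected component of the \emph{whole} graph $G$, and the only edges of $G$ touching $V(C)$ are those of $E(C)$, which form a matching (at most one edge).

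For the forward direction, I would assume $(G, l, p)$ is a \yes\ instance, so by Observation~\ref{obs:color-to-matching} there are $p$ edge-disjoint matchings $M_1, \dots, M_p$ in $G$ with $|M_1 \cup \cdots \cup M_p| \ge l$. Setting $M_i' = M_i \setminus E(C)$ keeps them edge-disjoint matchings, and since all edges touching $V(C)$ lie in $E(C)$, each $M_i'$ lives entirely in $G - V(C)$. As $\bigcup_i M_i' = (\bigcup_i M_i)\setminus E(C)$, the union loses at most $|E(C)|$ edges, so $|M_1' \cup \cdots \cup M_p'| \ge l - |E(C)|$, and Observation~\ref{obs:color-to-matching} gives that $(G - V(C),\, l - |E(C)|,\, p)$ is a \yes\ instance.

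For the backward direction, I would start from matchings $M_1', \dots, M_p'$ in $G - V(C)$ whose union has at least $l - |E(C)|$ edges. Because $C$ is a separate component of $G$ and $E(C)$ is a matching, I can append all of $E(C)$ to a single colour class, e.g.\ $M_1 = M_1' \cup E(C)$ and $M_i = M_i'$ for $i \ge 2$; the endpoints of the new edge(s) lie in $V(C)$ and so are vertex-disjoint from every edge of $M_1'$, keeping $M_1$ a matching and preserving edge-disjointness. The union then has at least $(l - |E(C)|) + |E(C)| = l$ edges, so by Observation~\ref{obs:color-to-matching} the instance $(G, l, p)$ is a \yes\ instance; this step uses only that there is at least one colour available, i.e.\ $p \ge 1$.

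For the algorithmic claim, given $X$ I would compute the connected components of $G - X$ and, for each, test whether some vertex has a neighbour in $X$; a component failing this test is exactly one to which the rule applies, and the whole check runs in time linear in the size of $G$. I do not expect a serious obstacle here: the only delicate point is the backward direction, where one must certify that re-inserting $E(C)$ cannot create a colouring conflict. This is precisely why I route the argument through Observation~\ref{obs:color-to-matching} and exploit that $C$ is a full component of $G$, so its (at most one) edge can be dropped into any existing colour class without interfering with the remainder of the solution.
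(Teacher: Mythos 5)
Your proof is correct. One thing to note up front: the paper itself states Lemma~\ref{lemma:delete-isolated-components} without providing any proof (it is treated as immediate and the text moves straight on to Reduction Rule~\ref{rr:expansion-lemma-component}), so there is no in-paper argument to compare against; your write-up fills that gap. Your argument is the natural one and, importantly, it isolates the two facts that actually make the rule safe: (i) since $X$ is a deg-$1$-modulator, the component $C$ has at most one edge, and (ii) since no vertex of $C$ sees $X$, $C$ is a full component of $G$, so $E(C)$ is the entire set of edges touching $V(C)$. Fact (ii) makes the forward direction a matter of restricting the matchings given by Observation~\ref{obs:color-to-matching}, and fact (i) is what makes the backward direction work at all --- for a general component (say a triangle with $p=2$) one could not re-insert all of $E(C)$ into a $p$-edge-colorable subgraph, so decrementing $l$ by $|E(C)|$ would be unsafe; your proof uses exactly this where it is needed. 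Your caveat that the backward direction needs $p \ge 1$ is also accurate (for $p=0$ and $|E(C)|=1$ the rule genuinely fails), and is harmless under the paper's implicit assumption that $p$ is a positive integer. The polynomial-time claim is handled correctly as well.
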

Let $(G, l, p)$ be the instance obtained by exhaustively applying Reduction Rule~\ref{rr:delete-isolated-components}. This implies that every connected component of $G - X$ is adjacent to $X$.
Let $\calC$ be the set of connected components of $G - X$. 
We construct an auxiliary bipartite graph $B$, with vertex bipartition $X$ and  $\calC$ (each $C \in \calC$ corresponds to a vertex, say $b_C$ of $B$).  
There exists edge $xb_C$ in $B$ for $x \in X$ and $b_C \in \calC$ if and only $x$ is adjacent to at least one vertex in $C$ in $G$.
For $\calC' \subseteq \calC$ of connected components, $V(\calC') \subseteq V(G)$ denotes the vertices in connected components in $\calC'$ and $E(\calC') \subseteq E(G)$ denotes the edges that have both endpoints in $V(\calC')$. Since every connected component in $\calC$ is adjacent to $X$, there are no isolated vertices in $B$. 
We can thus apply the following rule which is based on the Expansion Lemma.

\begin{reduction-rule} \label{rr:expansion-lemma-component} 
  If $|\calC| \ge p|X|$ then apply
  Lemma~\ref{lem:expansion-lemma} to find $X' \subseteq X$ and $\calC' \subseteq \calC$ such that $(1)$ there exits a $p$-expansion from $X'$ to $\calC'$; and $(2)$ no vertex in $\calC'$ has a neighbour outside $X'$.
Delete all the vertices in  $X' \cup V(\calC')$ from $G$ and reduce $l$ by
$p|X'| + |E(\calC')|$, i.e. return $(G - (X' \cup V(\calC')), l - p|X'| - |E(\calC')|, p)$. 
\end{reduction-rule}

\begin{lemma} \label{lemma:expansion-lemma-component}
Reduction Rule~\ref{rr:expansion-lemma-component} is safe and given set $X$, it can be applied in polynomial time.
\end{lemma}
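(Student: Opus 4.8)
The plan is to prove that Reduction Rule~\ref{rr:expansion-lemma-component} is safe, i.e., that $(G, l, p)$ is a \yes\ instance if and only if the reduced instance $(G - (X' \cup V(\calC')), l - p|X'| - |E(\calC')|, p)$ is a \yes\ instance. The polynomial running time is immediate from Lemma~\ref{lem:expansion-lemma}, so the content is the equivalence. I would prove both directions via Observation~\ref{obs:color-to-matching}, working with $p$ edge-disjoint matchings rather than with subgraphs and colorings directly.

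The easy direction is the ``if'' direction. Suppose the reduced instance has a solution, realized as $p$ edge-disjoint matchings $N_1, \dots, N_p$ in $G - (X' \cup V(\calC'))$ with $\sum_i |N_i| \ge l - p|X'| - |E(\calC')|$. I would show how to augment these into $p$ matchings in $G$ picking up exactly $p|X'| + |E(\calC')|$ additional edges. The $p$-expansion $M$ saturates each $x \in X'$ with exactly $p$ edges going to $p$ distinct components of $\calC'$; since all of $V(\calC')$ is disjoint from the reduced graph, I can distribute the $p$ expansion-edges at each $x$ into the $p$ distinct matchings (one per color), which is legitimate because the saturated component-vertices in $Q$ are all distinct, so no two expansion edges share an endpoint within a single color class. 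This contributes $p|X'|$ edges. Then, inside each component $C \in \calC'$, I use Proposition~\ref{thm:vizing} / balanced coloring (Lemma~\ref{lemma:balanced-edge-coloring}) to color $E(C)$ with $p$ colors so as to recover all $|E(\calC')|$ internal edges; here the key point is that each component of $G-X$ has maximum degree at most $p$ after the expansion edges are added to it, so a $p$-edge-coloring of the whole augmented graph restricted to these components exists. I must be careful that adding an expansion edge $xw$ to color class $\alpha$ does not conflict with the internal edges of $C$ colored $\alpha$; this is arranged since the expansion edge uses one ``slot'' at $w$ and the rest of the coloring of $C$ respects the degree bound.

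The harder direction is the ``only if'' direction, and this is where I expect the main obstacle. Suppose $(G, l, p)$ has a solution $H$ with a $p$-edge-coloring, equivalently $p$ edge-disjoint matchings $M_1, \dots, M_p$ with $\sum_i|M_i| \ge l$. I want to produce matchings in the reduced graph covering at least $l - p|X'| - |E(\calC')|$ edges. The natural idea is: every edge of $H$ incident to $X'$ or lying inside $V(\calC')$ must be ``paid for'' by the $p|X'| + |E(\calC')|$ budget we subtracted. The subtle point is that $X'$ may have degree up to $p$ in $H$, so at most $p|X'|$ edges touch $X'$, and condition $(2)$ of the expansion (no vertex of $\calC'$ has a neighbour outside $X'$) ensures that every edge incident to $V(\calC')$ either lies within $V(\calC')$ (at most $|E(\calC')|$ such edges) or goes to $X'$ (already counted). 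Hence deleting $X' \cup V(\calC')$ removes at most $p|X'| + |E(\calC')|$ edges of $H$, and the restriction of $M_1, \dots, M_p$ to $G - (X' \cup V(\calC'))$ gives $p$ edge-disjoint matchings covering at least $l - p|X'| - |E(\calC')|$ edges. The genuinely delicate step is verifying that the $p|X'|$ count is tight enough to be reclaimed: in the forward (augmentation) direction I must actually achieve $p|X'|$ new edges using the expansion, which is exactly what a $p$-expansion guarantees, and this matches the budget. I would therefore present the deletion bound cleanly using condition~$(2)$ to confine all ``boundary'' edges to the two accounted types, and conclude via Observation~\ref{obs:color-to-matching}.
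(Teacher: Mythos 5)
Your proposal is correct and takes essentially the same route as the paper: the paper packages your augmentation step as an explicit subgraph $H_1$ on $X' \cup V(\calC')$ consisting of the expansion edges together with $E(\calC')$, notes that each component of $H_1$ is a tree (a $p$-star with at most one extra pendant vertex) and hence $p$-edge-colorable with every $x \in X'$ at degree $p$, and takes its disjoint union with the solution of the reduced instance. Your counting argument for the other direction --- maximum degree at most $p$ in the solution via Proposition~\ref{thm:vizing}, with condition $(2)$ of the expansion confining all edges incident on $V(\calC')$ to the two accounted types --- is identical to the paper's.
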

\begin{proof} 
Let $M'$ be the edges in $p$-expansion lemma from $X'$ to $\calC'$ for the bipartite graph $B$. 
We construct a set $M \subseteq E(G)$, corresponding to edges in as follows $M'$. 
If there is an edge $xb_C$ in $M'$ then pick an edge whose one endpoint is $x$ and another endpoint is in $C$.
If there are multiple such edges then arbitrarily pick one of them.
Consider a subgraph $H_1$ of $G$ such that $V(H_1) = X' \cup V(\calC')$ and $E(H_1) = M \cup E(\calC')$. 
A $p$-star graph is a tree on $p + 1$ vertices such that there exists a vertex that is adjacent to all other vertices. 
Notice that each connected component of $H_1$ is a tree. 
Since every connected component in $\calC'$ has at most one edge, each tree in $H_1$ can be obtained from a $p$-star by adding (at most one) new vertex and making it adjacent with one of its leaves. It is easy to see that $H_1$ is a $p$-edge colorable graph and every vertex in $X'$ is of degree $p$ in $H_1$. 

Suppose there exists exists a subgraph $H'$ of $G - (X' \cup V(\calC')$ such that $H'$ is $p$-edge colorable and has at least $l - p|X| - |E(\calC')|$ edges.
Then, $H' \cup H_1$ is a subgraph of $G$ which is $p$-edge colorable and has at least $l $  edges. 
Here, $H' \cup H_1$ denote the graph with vertex set $(V(H) \cup V(H_1)$ and edges $E(H) \cup E(H_1))$.

Suppose there exists a subgraph $H$ of $G$ which is $p$-edge colorable and has at least $l$ edges.
By Proposition~\ref{thm:vizing}, the maximum degree of a vertex in $H$ is $p$.
Let $H^{\circ}$ be the graph obtained from $H$ by deleting all vertices in $X' \cup V(\calC')$. 
Since $H^{\circ}$ is a subgraph of $H$, it is $p$-edge colorable. Note that $H^{\circ}$ is also a subgraph of $G - (X' \cup V(\calC'))$.
To complete the proof, we need to argue that $H^{\circ}$ has at least $|E(H)| - p|X'| + |E(\calC')|$ edges. 
Since every vertex in $H$ has degree at most $p$, there are at most $p|X'|$ many edges across $X, V(\calC')$ i.e. edges with one vertex in $X$ and another in $V(\calC')$.
Moreover, there are at most $|E(\calC')|$ edges in $H$ whose both endpoints are in $V(\calC')$.
Since $V(\calC')$ are adjacent with vertices only in $X'$, there are no other edges incident on $V(\calC')$. 
This implies number of edges in $H^{\circ}$ is at least $|E(H)| - p|X'| + |E(\calC')|$ which concludes the proof.\qed
\end{proof}

In the following lemma, we argue that \maxedgecoloring\ admits a polynomial kernel when parameterized by size of the given \text{deg}-$1$-modulator.

\begin{lemma}\label{lemma:kernel-deg-1-modulator}
Consider an instance $(G, l, p)$ and let $X$ be a \text{deg}-$1$-modulator of $G$.
Then, \maxedgecoloring\ admits a kernel with $\calO(|X|p)$ vertices. 
\end{lemma}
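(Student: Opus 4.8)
The plan is to combine the two reduction rules already established with a simple counting argument bounding the size of the reduced instance. After Reduction Rule~\ref{rr:delete-isolated-components} and Reduction Rule~\ref{rr:expansion-lemma-component} are applied exhaustively, I would first observe that Reduction Rule~\ref{rr:expansion-lemma-component} can no longer be triggered, so the number of connected components of $G-X$ satisfies $|\calC| < p|X|$. The goal is then to bound the total number of vertices of $G$ in terms of $|X|$ and $p$. Since $X$ is a \text{deg}-$1$-modulator, every connected component $C \in \calC$ has maximum degree at most $1$ inside $G-X$, so each such component is either a single vertex or a single edge; in particular $|V(C)| \le 2$.

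The key remaining issue is that a single component $C$ could, a priori, attach to $X$ via arbitrarily many vertices, but this cannot happen here: each component has at most two vertices, hence $|V(\calC)| \le 2|\calC| < 2p|X|$. Adding the $|X|$ vertices of the modulator itself, I would conclude that the reduced graph has $|V(G)| \le |X| + 2p|X| \in \calO(|X|p)$ vertices, which is exactly the claimed bound. The number of edges is then automatically bounded (each vertex outside $X$ has degree at most $p$ in any $p$-edge-colorable subgraph, and $|X|$ is small), so the instance size is polynomial in $|X|$ and $p$.

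To make this a genuine kernelization, I would also need to verify the two standard requirements: that the overall procedure runs in polynomial time, and that the reduced parameter $l'$ does not blow up. The first follows since each reduction rule is applied in polynomial time (Lemma~\ref{lemma:delete-isolated-components} and Lemma~\ref{lemma:expansion-lemma-component}), and each application strictly decreases $|V(G)|$, so only polynomially many applications occur. For the parameter bound, I would note that we never increase $l$; thus after reduction either $l \le 0$ (in which case we return a trivial \yes\ instance) or the bound on $|V(G)|$ together with $|E(G)| \le p \cdot |V(G)|/1$ gives $l \le |E(G)| \in \calO(|X|^2 p)$ for any nontrivial instance, certifying a polynomial kernel.

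The main obstacle I anticipate is not in the counting itself, which is routine, but in handling the assumption that a \text{deg}-$1$-modulator $X$ is \emph{given}: the lemma statement takes $X$ as input, but to obtain a true kernel one must compute such a set. The excerpt explicitly defers this (``We justify this assumption later''), so within the scope of this lemma I would simply treat $X$ as given and rely on the safeness proofs already in hand. The only subtlety to check carefully is the boundary case where some component has exactly two vertices joined by an edge whose endpoints have distinct neighborhoods in $X$, ensuring the factor-of-two bound on $|V(C)|$ is tight and that no component secretly contains more structure than a single vertex or edge; this is immediate from the \text{deg}-$1$-modulator definition but deserves an explicit sentence.
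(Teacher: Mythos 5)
Your proposal follows essentially the same route as the paper's proof: apply Reduction Rules~\ref{rr:delete-isolated-components} and~\ref{rr:expansion-lemma-component} exhaustively, invoke Lemmas~\ref{lemma:delete-isolated-components} and~\ref{lemma:expansion-lemma-component} for equivalence, and bound the reduced instance by noting that inapplicability of the expansion rule forces $|\calC| < p|X|$ while the \text{deg}-$1$-modulator property caps each component at two vertices. Your count of $(2p+1)|X|$ vertices is in fact slightly more careful than the paper's stated $(p+1)|X|$ (which appears to count components rather than vertices), but both yield the claimed $\calO(|X|p)$ bound, and your added remarks on running time and the parameter are consistent with what the paper leaves implicit.
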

\begin{proof}
The algorithm then applies Reduction Rule~\ref{rr:delete-isolated-components} and \ref{rr:expansion-lemma-component} exhaustively.
It returns the reduced instance as a kernel.
We now argue the correctness and the size bound on the reduced instance.
Let $(G', l', p)$ be the reduced instance obtained by the algorithm after exhaustive application of reduction rules on input instance $(G, l, p)$.
By Lemma~\ref{lemma:delete-isolated-components} and \ref{lemma:expansion-lemma-component}, $(G, l, p)$ is a \yes\ instance if and only if $(G', l', p')$ is a \yes\ instance.
Moreover,  since reduction rules are not applicable, the number of vertices in $G'$ is at most $(p + 1) |X|$.
\qed\end{proof}

\begin{proof} (of Theorem~\ref{thm:kernel})
For an instance $(G, l, p)$ of \maxedgecoloring\ the kernelization algorithm first uses Observation~\ref{obs:para-hierarchy} to conclude that either $(G, l, p)$ is a \yes\ instance or $\vc(G) \preceq l$ and $|X_{opt}| \preceq (l - \mm(G))$, where $X_{opt}$ is a minimum sized \text{deg}-$1$-modulator of $G$.
In the first case, it returns a vacuously true instance of constant size.
If it can not conclude that given instance is a \yes\ instance then algorithm computes a \text{deg}-$1$-modulator, say $X$, of $G$ using the simple $3$-approximation algorithm: there exists a vertex $u$ which is adjacent with two different vertices, say $v_1, v_2$ then algorithm adds $u, v_1, v_2$ to the solution. 
It keeps repeating this step until every vertex is of degree at most one. 
The algorithm uses the kernelization algorithm mentioned in Lemma~\ref{lemma:kernel-deg-1-modulator} to compute a kernel of size $\calO(p |X|)$.

The correctness of the algorithm follows from the correctness of Lemma~\ref{lemma:kernel-deg-1-modulator}.
As $X$ is obtained by using a $3$-factor approximation algorithm,  $|X| \le 3 |X_{opt}|$ and hence $|X| \preceq |X_{opt}|$.
Since the algorithm was not able to conclude that $(G, l, p)$ is a \yes\ instance, by Observation~\ref{obs:para-hierarchy}, we have $|X_{opt}| \preceq \vc(G) \preceq l$ and $|X_{opt}| \preceq l - \mm(G)$.
This implies the number of vertices in the reduced instance is at most $\calO(kp)$ where $k$ is one of the parameters in the statement of the theorem.
By Observation~\ref{obs:para-hierarchy}, Lemma~\ref{lemma:delete-isolated-components} and Lemma~\ref{lemma:expansion-lemma-component}, and the fact that every application of reduction rules reduces the number of vertices in the input graph, the algorithm terminates in polynomial time.
\qed\end{proof}

% !TEX root = main.tex
\section{Lower Bounds on the Size of  Kernels}
\label{sec:kernel-lower-bound}
The objective of this section is to prove Theorem~\ref{thm:kernel-lower-bound-solution-size}. Due to Observation~\ref{obs:para-hierarchy}, it is sufficient to prove such result for the number of edges in the desired graph.
In other words, we prove that for any $\epsilon > 0$ and computable function $f$, \maxedgecoloring\ does not admit a polynomial compression of size $\calO(l^{1 - \epsilon} \cdot f(p))$ unless $\NP \subseteq \coNP/poly$. We obtain the above result by giving an appropriate reduction from \redbluedsfull\ to \maxedgecoloring. 

\redbluedsfull\ (\redblueds, for short) takes as input a bipartite graph $G$, with vertex bi-partitions as $R,B$ and an integer $k$, and the objective is to decide if there is $R' \subseteq R$ of size at most $k$ such that for each $b \in B$, $R' \cap N(b) \neq \emptyset$.\footnote{The sets $R$ and $B$ are referred as red and blue sets, respectively.}
Without loss of generality, we can assume that there are no isolated vertices in the input graph.
The problem \dset\ takes as an input a graph $G$ and an integer $k$, and the goal is to decide whether there exists $X \subseteq V(G)$ of size at most $k$, such that for each $v \in V(G)$, $X \cap N[v] \neq \emptyset$. Jansen and Pieterse proved that \dset\ does not admit a compression of bit size $\OO(n^{2-\epsilon})$, for any $\epsilon >0$ unless $\NP \subseteq \coNPpoly$, where $n$ is the number of vertices in the input graph~\cite{DBLP:conf/iwpec/JansenP15}. This result directly implies the following (see, for instance~\cite{agrawal2017paths}, for a formal statement). %We use the well-known immediate corollary of the above  result presented in \cite{agrawal2017paths}. 

\begin{proposition}\label{prop:rbds-no-subquadratic-compression}
\redbluedsfull\ does not admit a compression of bit size $\OO(n^{2-\epsilon})$, for any $\epsilon >0$, unless ${\mbox{\sf \NP}} \subseteq {\mbox{\sf \coNPpoly}}$. Here, $n$ is the number of vertices in the input graph.
\end{proposition}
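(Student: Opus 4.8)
The plan is to obtain this as a direct consequence of the Jansen--Pieterse compression lower bound for \dset, by giving a polynomial-time reduction from \dset\ to \redblueds\ that increases the number of vertices by only a constant factor. Because the \dset\ lower bound is stated against the vertex count $n$, any such vertex-linear reduction lets an impossibly small compression for \redblueds\ be pulled back to an equally small compression for \dset.

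First I would describe the reduction. Given a \dset\ instance $(\hat G, k)$ with $V(\hat G) = \{v_1, \dots, v_n\}$, I build a bipartite graph $G$ with red part $R = \{r_v \mid v \in V(\hat G)\}$ and blue part $B = \{b_v \mid v \in V(\hat G)\}$, and add the edge $r_u b_v$ to $E(G)$ exactly when $u \in N_{\hat G}[v]$ (closed neighbourhood), keeping the budget $k' = k$. Since $v \in N_{\hat G}[v]$, each $r_v$ is adjacent to $b_v$ and conversely, so $G$ has no isolated vertices, matching the standing assumption for \redblueds. The resulting instance has exactly $2n$ vertices and is clearly constructible in polynomial time.

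Next I would check equivalence: $\hat G$ has a dominating set of size at most $k$ if and only if $(G, R, B, k)$ is a \yes-instance of \redblueds. In the forward direction, if $D \subseteq V(\hat G)$ dominates $\hat G$, then $R' = \{r_v \mid v \in D\}$ covers every $b_w$, because some $u \in D \cap N_{\hat G}[w]$ yields the edge $r_u b_w$. Conversely, if $R'$ covers all of $B$, then $D = \{v \mid r_v \in R'\}$ is dominating, since each $b_w$ is covered by some $r_u$ with $u \in N_{\hat G}[w]$, i.e.\ $u \in D \cap N_{\hat G}[w]$. The correspondence $v \mapsto r_v$ preserves cardinality, so the budgets agree.

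Finally I would run the compression-transfer argument. Suppose, toward a contradiction, that \redblueds\ admitted a compression of bit size $\OO(N^{2-\epsilon})$, where $N$ denotes its vertex count. Composing the polynomial-time reduction above with this compression gives a polynomial-time map sending a \dset\ instance on $n$ vertices to an equivalent instance (in the compression's target language) of bit size $\OO((2n)^{2-\epsilon}) = \OO(n^{2-\epsilon})$; this is a compression of \dset\ of bit size $\OO(n^{2-\epsilon})$, contradicting Jansen--Pieterse unless $\NP \subseteq \coNPpoly$. As this holds for every $\epsilon > 0$, the proposition follows. The only delicate point is the bookkeeping of this transfer: one measures the \redblueds\ compression against its own vertex count $N$, uses the reduction's guarantee $N = 2n = \OO(n)$, and invokes the (routine) transitivity of polynomial-time compressions to conclude that a size-preserving reduction followed by a subquadratic compression is again a subquadratic compression of the source. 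All the substance lies in securing the constant-factor vertex blow-up, which the closed-neighbourhood reduction provides.
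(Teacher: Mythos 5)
Your proof is correct and is exactly the argument the paper has in mind: the paper dispenses with this proposition by citing the Jansen--Pieterse lower bound for \dset\ and noting it ``directly implies'' the statement (deferring the formal transfer to a reference), and the implicit reduction is precisely your closed-neighbourhood vertex-doubling construction with the compression-composition argument. You have simply written out the details that the paper leaves to the citation, so there is nothing to correct.
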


\begin{figure}[t]
  \begin{center}
    \includegraphics[scale=0.5]{./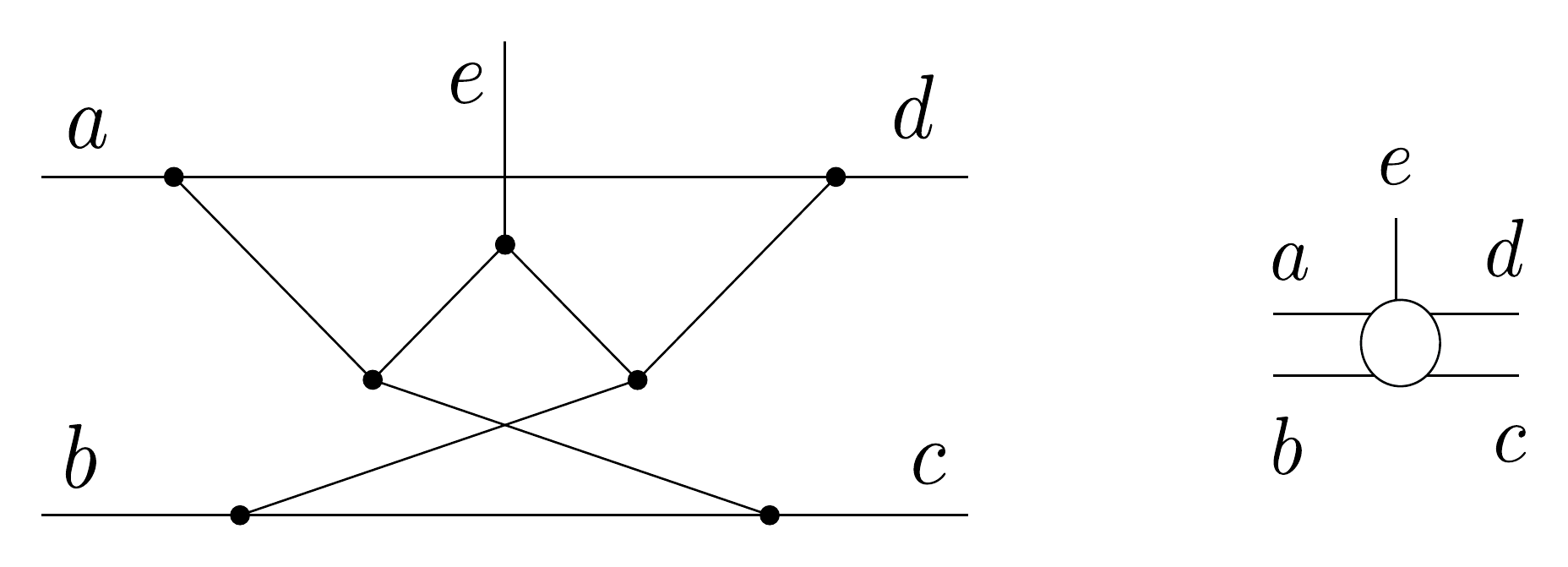}
  \end{center}
  \caption{Inverting components and its symbolic representation used in the reduction from \redbluedsfull\ to \maxedgecoloring. \label{fig:module-red-vertex}}
\end{figure}

Holyer showed that it is \NP-hard\ to distinguish whether the given cubic graph admits a $3$-edge coloring, or any edge coloring of it requires $4$ colors)~\cite{holyer1981computational} (also see Proposition~\ref{thm:vizing}).
Laven and Galil generalized this result to prove that for any fixed $p$, the problem of deciding whether the edge chromatic number of a regular graph of degree $p$ is $p$ or $p + 1$ \cite{leven1983np}.
We start with an inverting component presented in \cite{holyer1981computational} (see Figure~\ref{fig:module-red-vertex}).
We call this graph as a \emph{module}. 
Note that $a, b, c, d,$ and $e$ are labelings of the corresponding edges, and the other endpoints of these edges are not shown in the figure.
We state following two useful properties of modules.
\begin{claim1}[{\cite[Lemma~$1$]{leven1983np}}] \label{claim:module-color-prop}
    For any $3$-edge coloring $\phi$ of a module, either $\phi(a) = \phi(b)$ or $\phi(c) = \phi(d)$. Moreover, if $\phi(a) = \phi(b)$ then $\phi(c), \phi(d), \phi(e)$ are all different; else $\phi(c) = \phi(d)$, and then $\phi(a), \phi(b), \phi(e)$ are all different. 
\end{claim1}
\begin{claim1}[{\cite[Lemma~$1$]{leven1983np}}]
  \label{claim:module-color-extension}
    Consider a partial $3$-coloring $\phi$ of edges in a module which satisfy either of two conditions: $(1)$ $\phi(a) = \phi(b)$ and $\phi(c), \phi(d), \phi(e)$ are all different; $(2)$ $\phi(c) = \phi(d)$ and $\phi(a), \phi(b), \phi(e)$ are all different. Then, $\phi$ can be extended to a $3$-edge coloring of the module.    
  \end{claim1}
We next present a polynomial time reduction from \redblueds\ to \maxedgecoloring. Consider an instance $(G, R, B, k)$ of \redblueds. We construct an instance $(G', l, p)$ of \maxedgecoloring, as follows. 

\vspace{0.5cm}
\noindent \textbf{Reduction :} Initialize $V(G') = V(G) = R \cup B$ and $E(G') =\emptyset$. For every vertex $r \in R$, we construct a gadget using $2 \cdot (\text{deg}_G(r) + 1)$ modules as shown in Figure~\ref{fig:both-gadget}. This gadget has $(\text{deg}_G(r) + 1)$ many pairs of edges which acts as outputs. Arbitrarily fix a pair edges in outputs and make $r$ an endpoint of both these edges.
We call this gadget a \emph{red gadget} corresponding to $r$. 
For every blue vertex $b \in B$, construct a cycle of length $(2 \cdot \text{deg}_G(r) + 1)$. Arbitrarily fix a vertex on this cycle and make it adjacent with $b$. Add $\text{deg}_G(b)$ many modules to this cycle such that edges on the cycles are endpoints of pairs of edges which are outputs of these modules (see Figure~\ref{fig:both-gadget}). Add these modules in such that after addition, the degree of every vertex on cycle is three. We call this gadget a \emph{blue-gadget} corresponding to $b$.
For each edge $rb$ in $G$, identify a pair of edges in outputs of red-gadget corresponding to $r$ with a pair of edges in inputs of blue-gadget corresponding to $b$.  
In other words, the other endpoints of edges in red-gadget corresponding to $r$ is in blue-gadget corresponding to $b$ and vice-versa (see the edges $e_1, e_2$ in Figure~\ref{fig:both-gadget}).
This completes the construction of graph $G'$. Assign $p = 3$, $l = |E(G')| - k$ and return $(G', l, p)$ as an instance of \maxedgecoloring. 

\begin{figure}[t]
  \begin{center}
    \includegraphics[scale=0.5]{./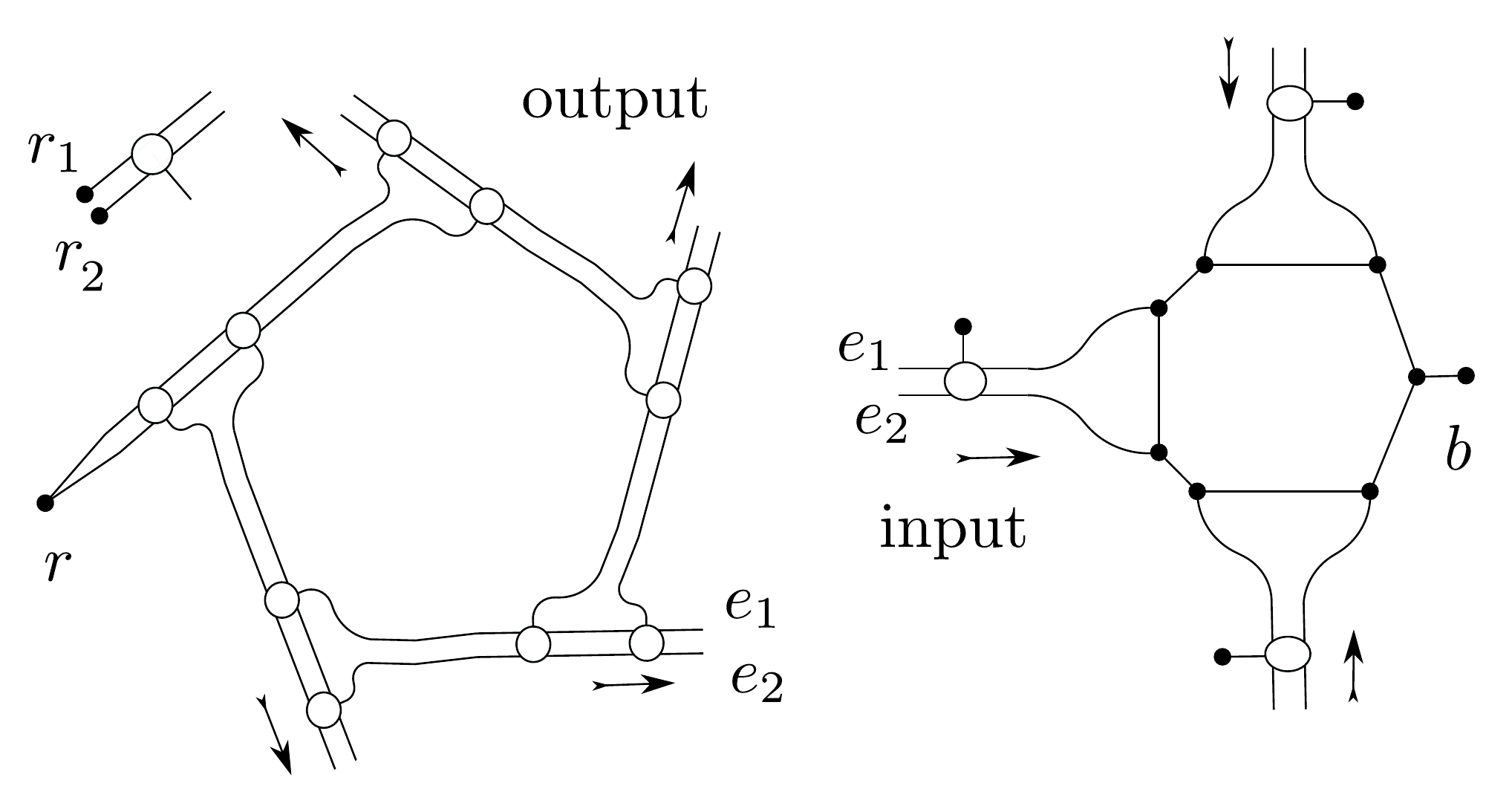}
  \end{center}
  \caption{(Left Figure) A red-gadget made for a vertex of degree four. The gadget is made from ten modules and have four output pairs of edges. More generally, it can be made from $2(d + 1)$ modules and has $d$ output pairs of edges. In the modified red-gadget, vertex $r$ is replaced by two vertices $r_1, r_2$ as shown in left-top corner. (Right Figure) A blue-gadget made for a vertex of degree three. More generally, it can be made from $d$ modules and has $d$ input pairs of edges.
 \label{fig:both-gadget}}
\end{figure} 

Next we argue that thenumber of edges in $G'$ is at most constant times the number of edges in $G$.
\begin{lemma}\label{lemma:edges-bound} %Let $(G', l, p)$ be the instance returned by the reduction algorithm when input is $(G, R, B, k)$. Then, 
We have $|E(G')| \le c \cdot |E(G)|$, where $c$ is a (fixed) constant. 
\end{lemma}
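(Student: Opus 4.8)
The plan is to observe that $G'$ is assembled from a bounded number of constant-size pieces, one family of pieces per vertex and per edge of $G$, and then simply to add up the contributions. First I would fix once and for all the constant $m_0$ equal to the number of edges in a single module: Holyer's inverting component is a fixed finite graph, so $m_0$ is an absolute constant, independent of the instance. Every edge of $G'$ can then be charged either to some module, to a gadget's connecting structure (the cycle edges of a blue-gadget, or the single edges joining $r$, resp.\ $b$, to its gadget), or to an identification across an edge $rb$ of $G$. Since the identifications \emph{merge} an output pair with an input pair rather than create new edges, they do not increase the count.

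Next I would count the modules using degree sums. The red-gadget for $r \in R$ uses $2(\text{deg}_G(r)+1)$ modules, so summing over $R$ gives $\sum_{r \in R} 2(\text{deg}_G(r)+1) = 2|E(G)| + 2|R|$ modules; the blue-gadget for $b \in B$ uses $\text{deg}_G(b)$ modules, contributing $\sum_{b \in B}\text{deg}_G(b) = |E(G)|$ modules. Hence the total number of modules is $3|E(G)| + 2|R|$, and these account for at most $m_0 \cdot (3|E(G)| + 2|R|)$ edges of $G'$.

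Then I would tally the remaining, non-module edges. Each blue-gadget contributes $\calO(\text{deg}_G(b))$ cycle edges, which sum to $\calO(|E(G)|) + \calO(|B|)$; and each $r$ and each $b$ contributes the single edge attaching it to its gadget, for $|R|+|B|$ more. To collapse the $|R|$ and $|B|$ terms I would invoke the standing assumption that $G$ has no isolated vertices: every vertex then has degree at least one, so $|R| \le \sum_{r}\text{deg}_G(r) = |E(G)|$ and likewise $|B| \le |E(G)|$. Substituting these bounds makes every term above $\calO(|E(G)|)$, and summing yields $|E(G')| \le c \cdot |E(G)|$ for a fixed constant $c$ (some explicit value such as $c = 5m_0 + 5$ works). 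The only real care required---the nearest thing to an obstacle---is bookkeeping: ensuring the pairwise identifications are counted as merges rather than fresh edges so nothing is double-counted, and applying each degree-sum identity to the correct side of the bipartition; the length of the blue-gadget cycle affects only the constant, not the linear bound.
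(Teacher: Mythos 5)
Your proof is correct and follows essentially the same route as the paper's: both count the modules per red/blue gadget via the degree sums $\sum_r \deg_G(r) = \sum_b \deg_G(b) = |E(G)|$, bound the edges attributable to each module by an absolute constant, account separately for the cycle and attachment edges, and use the no-isolated-vertices assumption to absorb the $|R|$ and $|B|$ terms into $\calO(|E(G)|)$. The only difference is cosmetic: the paper works with explicit constants (at most $21$ edges incident on a module, yielding the bound $67|E(G)|$), whereas you keep the per-module constant abstract as $m_0$.
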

\begin{proof} Every module contains seven vertices each of which has degree three. Hence there are at most $21$ edges which are incident on vertices in a module. Since red-gadget corresponding to $r$ uses $2 \cdot (\text{deg}_G(r) + 1)$ many gadgets, the number of edges incident on this red-gadget is at most $42 \cdot (\text{deg}_G(r) + 1)$. Hence, the total number of edges incident on red-gadgets is at most $\sum_{r \in R} 42 \cdot (\text{deg}_G(r) + 1) \le 42 \cdot |E(G)| + |R|$.   Similarly, blue-gadget corresponding to $b$ uses $\text{deg}_G(b)$ modules, a cycle with edges $2 \cdot \text{deg}_G(r) + 1$, and an extra edge. Hence the total number of edges incident on vertices in blue-gadgets is at most $\sum_{b \in B} (21 \cdot\ \text{deg}_G(b) + 2 \cdot \text{deg}_G(b) + 2 ) \le 23|E(G)| + 2|B|$. Since there are no isolated vertices in $G(R, B)$, we can conclude that total number of edges in $G'$ is at most $67 |E(G)|$. This concludes the proof of the lemma.
\qed\end{proof}

To simplify our arguments in the proof of correctness of the reduction, we construct an auxiliary graph. 
We define the following process which takes the graph $G'$ and a subset $R'$ of $R$ and returns another graph $G''$ such that $|V(G'')| = |V(G')| + |R'|$ and $|E(G'')| = |E(G')|$. 

\vspace{0.5cm}
\noindent \textbf{Modification of $G'$ at $R'$:} For every vertex $r$ in $R'$ do the following process. Add two vertices $r_1, r_2$ to $G'$ and delete $r$ from $G$. Let $x_1,  x_2$ be the two vertices in red-gadget which were adjacent with $r$ in $G'$. Add edges $r_1x_1$ and $r_2x_2$ (see Figure~\ref{fig:both-gadget}). Let $R'_i = \{r_i |\ r \in R'\}$ for $i \in \{1, 2\}$. In other words, $G''$ is obtained from $G$ by deleting all vertices in $R'$ and adding vertices in $R'_1 \cup R'_2$, and making them adjacent with the corresponding neighbors of vertices in $R$. If red-gadget is modified at $r$ in $R'$ then we call it the \emph{modified red-gadget} at $r$. 
\vspace{0.5cm}

In any $3$-edge-coloring of a red-gadget, edges incident on vertices in $R$ are of different colors.
With this observation, {\cite[Lemma~$2$]{leven1983np}} implies following two properties of red-gadgets.
\begin{claim1} 
  \label{claim:red-gadget-output}
  In any $3$-edge coloring of a red-gadget, every pair of output edges are colored with different colors.
\end{claim1}
We modify the red-gadgets to ensure that every pair of output edges can be colored with the same color. Following claim is also implied by {\cite[Lemma~$2$]{leven1983np}}.
\begin{claim1}
  \label{claim:modified-red-gadget-output}
  There exists a $3$-edge coloring of modified red-gadget such that every pair of output edges are colored with same colors.
\end{claim1}
We mention following property of blue-gadget before mentioning a relation between $G$ and $G''$.
\begin{claim1} [{\cite[Lemma~$3$]{leven1983np}}]
\label{claim:blue-gadget-input}
In any $3$-coloring of a blue-gadget at least one pair of input must be colored with the some color.
Moreover, any $3$-coloring of the input edges which satisfied the previous condition can be completed to a $3$-coloring of the gadget.
\end{claim1}
\begin{lemma} \label{lemma:aux-graph-3-colorable}
 Let $(G', l, p)$ be the instance returned by the reduction algorithm when input is $(G, R, B, k)$. For a subset $R' \subseteq R$, let $G''$ be the graph obtained by modifying $G'$ at $R'$. Then, $R'$ is adjacent with all vertices in $B$ if and only if $G''$ is $3$-edge colorable.
\end{lemma}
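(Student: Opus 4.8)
The plan is to prove both directions by exploiting the contrast between unmodified and modified red-gadgets captured by Claims~\ref{claim:red-gadget-output} and~\ref{claim:modified-red-gadget-output}, together with the characterization of colorable blue-gadgets in Claim~\ref{claim:blue-gadget-input}. The crucial structural observation, which I would record first, is that $G''$ decomposes into red-gadgets (modified for $r \in R'$, unmodified for $r \in R \setminus R'$) and blue-gadgets, where the only shared edges are the identified output/input pairs: each edge $rb \in E(G)$ corresponds to exactly one output pair of $r$'s red-gadget glued to one input pair of $b$'s blue-gadget. In particular the red-gadgets are pairwise edge-disjoint and so are the blue-gadgets, so a $3$-edge-coloring of $G''$ is precisely a choice of $3$-edge-colorings of all gadgets that agree on these shared pairs. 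Recall that ``$R'$ is adjacent with all vertices in $B$'' means every $b \in B$ has a neighbour in $R'$.

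For the backward direction ($\Leftarrow$) I would argue the contrapositive. Suppose $R'$ does not dominate $B$, so there is $b \in B$ with $N_G(b) \cap R' = \emptyset$. Then every red neighbour of $b$ lies in $R \setminus R'$ and hence contributes, via the edge $rb$, an \emph{unmodified} red-gadget. By Claim~\ref{claim:red-gadget-output} every output pair of an unmodified red-gadget is bichromatic, so in any $3$-edge-coloring of $G''$ all input pairs of $b$'s blue-gadget are colored with two distinct colors. This contradicts the first part of Claim~\ref{claim:blue-gadget-input}, which forces at least one input pair of a blue-gadget to be monochromatic in every $3$-edge-coloring. Hence no $3$-edge-coloring of $G''$ exists, so $3$-edge-colorability of $G''$ implies that $R'$ is adjacent with all of $B$.

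For the forward direction ($\Rightarrow$), assume $R'$ is adjacent with all vertices of $B$ and build a coloring gadget-by-gadget. First color every red-gadget: for $r \in R'$ use the coloring given by Claim~\ref{claim:modified-red-gadget-output} in which \emph{all} output pairs are simultaneously monochromatic, and for $r \in R \setminus R'$ use any $3$-edge-coloring of the unmodified gadget, which exists since it is assembled from the $3$-edge-colorable modules of Claims~\ref{claim:module-color-prop} and~\ref{claim:module-color-extension}. These colorings are independent (red-gadgets are edge-disjoint) and fix the colors on every shared output/input edge. Now process each blue-gadget $b$: since $R'$ is adjacent with $b$, pick $r \in R' \cap N_G(b)$; the input pair of $b$ coming from $r$ is monochromatic by our choice, so the induced coloring of the input edges of $b$'s blue-gadget satisfies the hypothesis of the ``moreover'' part of Claim~\ref{claim:blue-gadget-input} and therefore extends to a $3$-edge-coloring of the whole blue-gadget. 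Performing this for all $b \in B$ (the blue-gadgets being edge-disjoint) yields a consistent $3$-edge-coloring of $G''$.

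The main obstacle, and essentially the only place requiring care, is verifying that the per-gadget colorings glue consistently. This reduces entirely to the two facts highlighted above: that the sole interaction between gadgets is through the identified output/input pairs, and that Claim~\ref{claim:blue-gadget-input} permits \emph{arbitrary} colorings of the remaining (bichromatic) input pairs provided one input pair is monochromatic. Once these are established, both directions follow mechanically from the three gadget claims. I would also flag one reading that must be made explicit: Claim~\ref{claim:modified-red-gadget-output} is used to produce a \emph{single} coloring in which every output pair is monochromatic at once, which is exactly what guarantees that every blue neighbour of an $r \in R'$ receives a usable (monochromatic) input pair.
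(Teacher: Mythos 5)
Your proof is correct and takes essentially the same route as the paper's: the forward direction colors the red-gadgets first, invoking Claim~\ref{claim:modified-red-gadget-output} to make the output pairs of modified gadgets monochromatic, and then extends into each blue-gadget via the ``moreover'' part of Claim~\ref{claim:blue-gadget-input}, while the backward direction combines Claims~\ref{claim:red-gadget-output} and \ref{claim:blue-gadget-input} exactly as the paper does, merely phrased as a contrapositive. Your explicit edge-disjointness/gluing observation is only implicit in the paper, but it is the same argument.
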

\begin{proof}
    $(\Rightarrow)$ We construct a $3$-edge coloring of graph $G''$. We first color all edges incident on red-gadgets and modified red-gadgets followed by edges in blue-gadgets.
Note that the sets of edges incident on red-gadgets and modified red-gadgets do not intersect with each other.
By Claim~\ref{claim:module-color-prop} and \ref{claim:module-color-extension}, there exists a coloring of red-gadgets and modified red-gadgets.
Moreover, by Claim~\ref{claim:modified-red-gadget-output}, it is safe to consider a $3$-edge coloring of modified red-gadget in which every pair of output edges are colored with the same colors.
Since $R'$ is adjacent with every vertex in $B$, every blue-gadget has at least one pair of input edges which are colored with some color. By Claim~\ref{claim:blue-gadget-input}, this coloring can be extended to other edges incident on blue-gadgets. This completes a $3$-edge coloring of $G''$. 

  $(\Leftarrow)$ Consider a $3$-edge coloring of graph $G''$.  By Claim~\ref{claim:blue-gadget-input}, for any blue-gadget, at least one pair of input must be colored with some color. By Claim~\ref{claim:red-gadget-output}, for a red-gadget, any pair of output edges are colored with different colors. Hence, for any blue-gadget, there is at least one input pair of edges that are connected to an output pair of edges of a modified red-gadget. By construction, this implies that for any vertex in $B$, there exists an edge with some vertex in $R'$. This implies that vertices in $R'$ are adjacent with every vertex in $B$.
\qed\end{proof}

In the following lemma, we argue that the reduction is safe.

\begin{lemma}\label{lemma:rbds-max-color}  Let $(G', l, p)$ be the instance returned by the reduction algorithm when $(G, R, B, k)$ is given as input. Then, $(G, R, B, k)$ is a \yes\ instance of \redblueds\ if and only if $(G', l, p)$ is a \yes\ instance of \maxedgecoloring.
\end{lemma}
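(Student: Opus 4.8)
The plan is to use Lemma~\ref{lemma:aux-graph-3-colorable} as the bridge between the two problems: recall that for $R' \subseteq R$, the graph $G''$ obtained from $G'$ by modifying at $R'$ is $3$-edge colorable if and only if $R'$ is adjacent to all of $B$. Since $p = 3$ and $l = |E(G')| - k$, the instance $(G', l, p)$ is a \yes\ instance precisely when one can delete at most $k$ edges from $G'$ and leave a $3$-edge colorable graph. I would prove the two directions by translating between edge deletions in $G'$ and the vertex modifications producing $G''$.

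For the forward direction, suppose $R' \subseteq R$ with $|R'| \le k$ is adjacent to all of $B$. By Lemma~\ref{lemma:aux-graph-3-colorable}, the modified graph $G''$ admits a $3$-edge coloring $\phi''$. I then build a subgraph $H$ of $G'$ with an inherited coloring as follows: for every $r \in R'$, the modification had replaced $r$ by two degree-one vertices $r_1, r_2$ carrying edges $r_1x_1$ and $r_2x_2$; I merge $r_1, r_2$ back into $r$, so these become $rx_1, rx_2$. If $\phi''(r_1x_1) \neq \phi''(r_2x_2)$ there is no conflict at $r$ and I keep both edges; if $\phi''(r_1x_1) = \phi''(r_2x_2)$ I delete one of them. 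Every other edge keeps its $\phi''$-color. Since $r$ has no other incident edge, the only adjacency created by the merge is between $rx_1$ and $rx_2$, so the result is a proper $3$-edge coloring, and at most one edge is deleted per vertex of $R'$. Hence $H$ has at least $|E(G')| - |R'| \ge |E(G')| - k = l$ edges, and $(G', l, p)$ is a \yes\ instance.

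For the reverse direction, suppose $(G', l, p)$ is a \yes\ instance, so there is a $3$-edge-colorable subgraph $H$ with $|E(H)| \ge l$; let $D = E(G') \setminus E(H)$, so $|D| \le k$. The goal is to extract $R' \subseteq R$ with $|R'| \le |D| \le k$ such that modifying $G'$ at $R'$ yields a $3$-edge colorable graph; Lemma~\ref{lemma:aux-graph-3-colorable} then certifies that $R'$ is adjacent to all of $B$, so $(G, R, B, k)$ is a \yes\ instance of \redblueds. The starting observation is that, by Claim~\ref{claim:red-gadget-output} together with Claim~\ref{claim:blue-gadget-input}, the intact graph $G'$ is not $3$-edge colorable whenever $B \neq \emptyset$: every blue-gadget requires a monochromatic input pair, yet an untouched red-gadget outputs only dichromatic pairs. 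Thus each $b \in B$ forces a defect in $H$, namely either a deleted edge inside its blue-gadget or a monochromatic output pair at a neighbouring red-gadget, the latter again requiring a deleted edge inside that red-gadget by Claim~\ref{claim:red-gadget-output}.

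The main work, and the step I expect to be the principal obstacle, is a charging/exchange argument that normalizes the arbitrary deletion set $D$ so that every deleted edge is one of the two special edges $rx_1, rx_2$ incident to some $r \in R$, without increasing $|D|$ and without destroying $3$-colorability. The key leverage is Claim~\ref{claim:modified-red-gadget-output}: freeing a single red-gadget lets all of its output pairs become monochromatic at once, so a defect that a solution placed inside a blue-gadget (or on a shared input pair) can be re-routed to the red side, where it simultaneously satisfies every adjacent blue-gadget. After normalization I set $R' = \{ r \in R : \{rx_1, rx_2\} \cap D \neq \emptyset\}$, which gives $|R'| \le |D| \le k$ since each such $r$ absorbs at least one deletion; and because deleting a special edge at $r$ relaxes exactly the constraint removed by the modification at $r$, the graph $G''$ obtained by modifying at $R'$ is $3$-edge colorable. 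The delicate point is making the charging injective, so that $|R'|$ never exceeds $|D|$, while maintaining a globally proper coloring; it is precisely here that the local gadget guarantees of Claims~\ref{claim:red-gadget-output}--\ref{claim:blue-gadget-input} must be combined with care.
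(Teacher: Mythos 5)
Your forward direction is sound and is essentially the paper's argument in light disguise: the paper deletes exactly one edge per vertex of $R'$ (the set $E_{R'}$), observes that $G' - E_{R'}$ is isomorphic to the modified graph $G''$ minus the pendant vertices $R'_1$, and transfers the coloring; your merge-the-two-pendants-and-delete-only-on-conflict variant yields the same bound $|E(H)| \ge |E(G')| - |R'| \ge l$.

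The genuine gap is in the reverse direction, and you have named it yourself: the entire content of that direction is the charging/normalization argument, which you describe as ``the principal obstacle'' and ``the delicate point'' but never carry out. A proposal whose central step is left as an acknowledged obstacle has not proved the lemma. Moreover, your framing makes that step strictly harder than it needs to be. You insist on producing $R'$ such that the modified graph $G''$ is $3$-edge colorable, so that Lemma~\ref{lemma:aux-graph-3-colorable} can then be invoked to certify that $R'$ dominates $B$; this forces you to re-route deletions while maintaining a globally proper coloring. But domination of $B$ is the actual target, and it can be read off directly from the gadget claims with no recoloring at all: given a deletion set $D$ with $|D| \le k$ such that $G' - D$ is $3$-edge colorable, for each $b \in B$ either some edge of $b$'s blue-gadget lies in $D$, or the blue-gadget is intact and Claim~\ref{claim:blue-gadget-input} forces a monochromatic input pair, which is an output pair of the red-gadget of some neighbour $r$ of $b$, whence Claim~\ref{claim:red-gadget-output} forces some edge of that red-gadget to lie in $D$. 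Charge $b$ to such an edge, and map each charged edge to a single vertex of $R'$: the red vertex of the unique red-gadget containing it (adopting the convention that an edge shared between a red-gadget output pair and a blue-gadget input pair is always assigned to its red vertex), or, for an edge internal to $b$'s own blue-gadget, an arbitrary neighbour of $b$. Since blue-gadgets are pairwise edge-disjoint and each edge determines at most one red vertex under this convention, the map from $R'$ into $D$ is injective, giving $|R'| \le |D| \le k$ with $R'$ dominating $B$; colorability of $G''$, if one wants it, then follows from Lemma~\ref{lemma:aux-graph-3-colorable} in the direction already proved, rather than being a prerequisite. For comparison, the paper's own proof of Lemma~\ref{lemma:rbds-max-color} sidesteps arbitrary deletion sets entirely, proving the equivalence only for deletion sets of the special form $E_{R'}$ via the isomorphism $G' - E_{R'} \cong G'' - R'_1$; your instinct that handling arbitrary solutions of \maxedgecoloring\ is where the real work lies is correct, which makes it all the more important that this is precisely the step your proposal does not supply.
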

\begin{proof} The problem of determining whether $(G', l, p)$ is a \yes\ instance of \maxedgecoloring\ is equivalent to determining whether one can delete at most $(|E(G')| - l)$ many edges in $G'$ such that the resultant graph is $p$-edge colorable. We work with this formulation of the problem. Since $(G', l, p)$ is an instance returned by the reduction algorithm, $|E(G')| - l \le k$ and $p = 3$.

  Let $R'$ be a subset of $R$. Define $E_{R'}$ as the set of edges in $E(G')$ formed by selecting exactly one edge incident every vertex in $R'$. By construction, $|E_{R'}| = |R'|$.  
Let $G''$ be the graph obtained from modification of $G'$ at $R'$ as specified above. Note that the graph obtained from $G'$ by deleting all edges in $E_{R'}$ is isomorphic to the graph obtained from $G''$ by deleting all vertices in $R'_1$. Here, $R_1'$ is a set defined in modification process.

In graph $G''$, every vertex in $R'_1$ is pendant vertex and is adjacent with a vertex of degree two. Hence, any $3$-edge coloring of $G'' - R'_1$ can be trivially extended to a $3$-edge coloring of $G''$. Also, as $G'' - R'_1$ is a subgraph of $G''$, any $3$-edge coloring of $G''$ is also a $3$-edge coloring of $G'' - R'_1$. Hence, $G'' - R'_1$ is $3$-edge colorable if and only if $G''$ is $3$-edge colorable. Lemma~\ref{lemma:aux-graph-3-colorable} implies that $G' - E_{R'}$ is $3$-edge colorable if and only if $R'$ is adjacent with all vertices in $B$.
Since graphs $G'' - R'_1$ and $G' - E_{R'}$ are isomorphic to each others, we get $G' - E_{R'}$ is $3$-edge colorable if and only if $R'$ is adjacent with all vertices in $B$. Since $|E_{R'}| = |R'|$, this concludes the proof of the lemma.
\qed\end{proof}

We are now in a position to present a proof for Theorem~\ref{thm:kernel-lower-bound-solution-size}. 

\begin{proof}(for Theorem~\ref{thm:kernel-lower-bound-solution-size})
For the sake of contradiction, assume that there exists an $\epsilon > 0$ and some computable function $f$ such that \maxedgecoloring\ admits a compression of size $\calO(l^{1-\epsilon} \cdot f(p))$. This implies there is an algorithm $\calA$ which takes an instance $(G', l, p)$ of \maxedgecoloring\ and in polynomial time returns an equivalent instance for some problem which needs $\calO(l^{1-\epsilon} \cdot f(p))$ bits to encode. 
  
Let $(G,R,B,k)$ be an instance of \redblueds, where $G$ is a graph on $n$ vertices. Using the reduction described, we create an instance $(G', l, p)$ of \maxedgecoloring.
It is easy to see from the description of the reduction that this instance can be created in time polynomial in the size of the given instance of \redblueds. 
By Lemma~\ref{lemma:rbds-max-color}, instances $(G, R, B, k)$ and $(G', l, p)$ are equivalent. 
On instance $(G', l, p)$, we run the algorithm $\calA$ mentioned in previous paragraph to obtain an equivalent instance of size $\calO(l^{1-\epsilon} \cdot p)$.
Note that this instance is equivalent to the given instance of \redblueds. 
Since $p = 3$ and $l \le |E(G')| \in \calO(|E(G)|) \in \calO(n^{2})$ (by Lemma~\ref{lemma:edges-bound}), this instance is of size $\calO(n^{2 - 2\epsilon})$.
This implies there exists an algorithm which in polynomial time returns an equivalent instance of \redblueds\ of size $\calO(n^{2 - 2\epsilon})$. This is a contradiction to Proposition~\ref{prop:rbds-no-subquadratic-compression}. 
Hence our assumption was wrong and \textsc{Maximum Edge-Colorable Subgraph} does not admit a compression of size $\calO(l^{1-\epsilon} \cdot f(p))$. 
The proof of the theorem follows from Observation~\ref{obs:para-hierarchy}. \qed
\end{proof}
% !TEX root = main.tex
\section{Conclusion}
\label{sec:conclusion}

In this article, we studied the \maxedgecoloring\ problem from the lense of Parameterized Complexity.
We showed that the problem admits a kernel with $\calO(k \cdot p)$ vertices where $p$ is the number of colors and $k$ is one of the following: $(a)$ the number of edges, $l$, in a desired subgraph, $(b)$ the vertex cover number of input graph, and $(c)$ the difference between $l$ and the size of a maximum matching in the graph. Furthermore, we complimented the above result by establishing that \maxedgecoloring\ does not admit a polynomial compress of size $\calO(k^{1 - \epsilon} \cdot f(p))$ for any $\epsilon > 0$ and any computable function $f$, unless $\NP \subseteq \coNP/poly$.
It will be interesting to close the gap between the kernel lower bound and the size of the kernel. As a consequence of the above kernelization results, we can obtain that the problem has a polynomial kernel when parameterized by $\ell$. It will interesting to investigate whether the problem has a polynomial kernel when parameterized by the vertex cover number, or the difference between $l$ and the size of a maximum matching in the input graph. 

We also designed \FPT\ algorithms for the parameters, $\ell$ and the vertex cover number. We leave it as an open question to determine whether the problem admits an \FPT\ algorithm when parameterized by the difference between $l$ and the size of a maximum matching in the input graph.
%It seems difficult to settle this question even when $p = 2$.
%
% ---- Bibliography ----
%
% BibTeX users should specify bibliography style 'splncs04'.
% References will then be sorted and formatted in the correct style.
%
\bibliographystyle{splncs04}
\bibliography{references}

\begin{thebibliography}{10}
\providecommand{\url}[1]{\texttt{#1}}
\providecommand{\urlprefix}{URL }
\providecommand{\doi}[1]{https://doi.org/#1}

\bibitem{agrawal2017paths}
Agrawal, A., Kanesh, L., Saurabh, S., Tale, P.: Paths to trees and cacti. In:
  International Conference on Algorithms and Complexity. pp. 31--42. Springer
  (2017)

\bibitem{aloisio2019fixed}
Aloisio, A., Mkrtchyan, V.: On the fixed-parameter tractability of the maximum
  2-edge-colorable subgraph problem. arXiv preprint arXiv:1904.09246  (2019)

\bibitem{DBLP:reference/algo/AlonYZ08}
Alon, N., Yuster, R., Zwick, U.: Color coding. In: Kao, M. (ed.) Encyclopedia
  of Algorithms - 2008 Edition (2008)

\bibitem{cao2019graph}
Cao, Y., Chen, G., Jing, G., Stiebitz, M., Toft, B.: Graph edge coloring: A
  survey. Graphs and Combinatorics  \textbf{35}(1),  33--66 (2019)

\bibitem{DBLP:conf/mfcs/ChenKX06}
Chen, J., Kanj, I.A., Xia, G.: Improved parameterized upper bounds for vertex
  cover. In: Mathematical Foundations of Computer Science 2006, 31st
  International Symposium {(MFCS)}. vol.~4162, pp. 238--249 (2006)

\bibitem{chen2010improved}
Chen, J., Kanj, I.A., Xia, G.: Improved upper bounds for vertex cover.
  Theoretical Computer Science  \textbf{411}(40-42),  3736--3756 (2010)

\bibitem{chen2009randomized}
Chen, J., Kneis, J., Lu, S., M{\"o}lle, D., Richter, S., Rossmanith, P., Sze,
  S.H., Zhang, F.: Randomized divide-and-conquer: Improved path, matching, and
  packing algorithms. SIAM Journal on Computing  \textbf{38}(6),  2526--2547
  (2009)

\bibitem{CFKLMPPS15}
Cygan, M., Fomin, F.V., Kowalik, L., Lokshtanov, D., Marx, D., Pilipczuk, M.,
  Pilipczuk, M., Saurabh, S.: {Parameterized Algorithms}. Springer (2015)

\bibitem{feige2002approximating}
Feige, U., Ofek, E., Wieder, U.: Approximating maximum edge coloring in
  multigraphs. In: International Workshop on Approximation Algorithms for
  Combinatorial Optimization. pp. 108--121. Springer (2002)

\bibitem{fomin2019kernelization}
Fomin, F.V., Lokshtanov, D., Saurabh, S., Zehavi, M.: Kernelization: theory of
  parameterized preprocessing. Cambridge University Press (2019)

\bibitem{galby2019parameterized}
Galby, E., Lima, P.T., Paulusma, D., Ries, B.: On the parameterized complexity
  of $k$-edge colouring. arXiv preprint arXiv:1901.01861  (2019)

\bibitem{gruttemeier2020maximum}
Gr{\"u}ttemeier, N., Komusiewicz, C., Morawietz, N.: Maximum edge-colorable
  subgraph and strong triadic closure parameterized by distance to low-degree
  graphs. To appear, Scandinavian Symposium and Workshops on Algorithm Theory
  (2020)

\bibitem{gupta2019parameterized}
Gupta, S., Roy, S., Saurabh, S., Zehavi, M.: Parameterized algorithms and
  kernels for rainbow matching. Algorithmica  \textbf{81}(4),  1684--1698
  (2019)

\bibitem{holyer1981computational}
Holyer, I.: The {NP}-completeness of edge-coloring. SIAM Journal on Computing
  \textbf{10}(4),  718--720 (1981)

\bibitem{DBLP:conf/iwpec/JansenP15}
Jansen, B.M.P., Pieterse, A.: Sparsification upper and lower bounds for graphs
  problems and not-all-equal {SAT}. In: 10th International Symposium on
  Parameterized and Exact Computation, {IPEC}. pp. 163--174 (2015)

\bibitem{kannan1987minkowski}
Kannan, R.: Minkowski's convex body theorem and integer programming.
  Mathematics of operations research  \textbf{12}(3),  415--440 (1987)

\bibitem{lenstra1983integer}
Lenstra~Jr, H.W.: Integer programming with a fixed number of variables.
  Mathematics of operations research  \textbf{8}(4),  538--548 (1983)

\bibitem{leven1983np}
Leven, D., Galil, Z.: {NP} completeness of finding the chromatic index of
  regular graphs. Journal of Algorithms  \textbf{4}(1),  35--44 (1983)

\bibitem{micali1980v}
Micali, S., Vazirani, V.V.: An $\mathcal{O}(\sqrt{|V|} \cdot |{E}|)$ algorithm
  for finding maximum matching in general graphs. In: 21st Annual Symposium on
  Foundations of Computer Science (sfcs 1980). pp. 17--27. IEEE (1980)

\bibitem{naor1995splitters}
Naor, M., Schulman, L.J., Srinivasan, A.: Splitters and near-optimal
  derandomization. In: Proceedings of IEEE 36th Annual Foundations of Computer
  Science. pp. 182--191. IEEE (1995)

\bibitem{sinnamon2019randomized}
Sinnamon, C.: A randomized algorithm for edge-colouring graphs in $\mathcal{O}
  (m \sqrt{n}) $ time. arXiv preprint arXiv:1907.03201  (2019)

\bibitem{vizing1964estimate}
Vizing, V.G.: On an estimate of the chromatic class of a p-graph. Discret
  Analiz  \textbf{3},  25--30 (1964)

\end{thebibliography}
\end{document}